\newtheorem{theorem}{Theorem}
\newtheorem{lemma}[theorem]{Lemma}
\newtheorem{claim}[theorem]{Claim}
\newtheorem{fact}[theorem]{Fact}
\newtheorem{question}[theorem]{Question}
\newtheorem{proposition}[theorem]{Proposition}
\newtheorem{observation}[theorem]{Observation}
\newtheorem{corollary}[theorem]{Corollary}
\newtheorem{definition}[theorem]{Definition}
\newcommand{\remark}[1]{\noindent \textit{Remark.} {#1}}
\newcommand{\E}{\mathop{\mathbb{E}}}
\newcommand{\cH}{\mathcal{H}}
\newcommand{\removelatexerror}{\let\@latex@error\@gobble}
\title{Strong XOR Lemma for Information Complexity}
\author{ Pachara Sawettamalya \footnote{Department of Computer Science, Princeton University. \ \href{mailto:ps3122@princeton.edu}{\url{ps3122@princeton.edu}}}  \and Huacheng Yu\footnote{Department of Computer Science, Princeton University. \ \href{mailto:yuhch123@gmail.com}{\url{yuhch123@gmail.com}}}}
\date{}
\newcommand{\dkl}[2]{\textnormal{D}\left(\frac{#1}{#2}\right)}
\newcommand{\IC}[0]{\mathsf{IC}}
\newcommand{\M}[0]{\emph{\textnormal{M}}}
\newcommand{\adv}{\textnormal{adv}}
\newcommand{\stdz}{\mathsf{standardize}}
\newcommand{\D}[0]{\mathcal{D}}
\newcommand{\dklhor}[2]{\textnormal{D} \hspace{0.2mm} ( #1 \ \| \ #2 )}
\begin{document}

\maketitle
\begin{abstract}
\vspace{1mm}

For any $\{0,1\}$-valued function $f$, its \emph{$n$-folded XOR} is the function $f^{\oplus n}$ where $f^{\oplus n}(X_1, \ldots, X_n) = f(X_1) \oplus \cdots \oplus f(X_n)$. Given a procedure for computing the function $f$, one can apply a ``naive" approach to compute $f^{\oplus n}$ by computing each $f(X_i)$ independently, followed by XORing the outputs. This approach uses $n$ times the resources required for computing $f$.

In this paper, we prove a strong XOR lemma for \emph{information complexity} in the two-player randomized communication model: if computing $f$ with an error probability of $O(n^{-1})$ requires revealing $I$ bits of information about the players' inputs, then computing $f^{\oplus n}$ with a constant error requires revealing $\Omega(n) \cdot (I - 1 - o_n(1))$ bits of information about the players' inputs. Our result demonstrates that the naive protocol for computing $f^{\oplus n}$ is both information-theoretically optimal and asymptotically tight in error trade-offs.
\end{abstract}
\newpage

\tableofcontents

\newpage

\section{Introduction}
\label{sec:intro}

For a function $f: \mathcal{X} \rightarrow \{0,1\}$ and any natural number $n$, let $f^{\oplus n}:\mathcal{X}^n \rightarrow \{0,1\}$ denote the function defined by $f^{\oplus n}(X_1, \ldots, X_n) = f(X_1) \oplus \dots \oplus f(X_n)$. The questions surrounding \emph{XOR Lemmas} focus on the relationship between the resources needed to compute $f$ and those required for computing $f^{\oplus n}$. Given a procedure $\mathcal{P}$ for computing $f$, one could naively compute $f^{\oplus n}(X_1,\dots,X_n)$ by evaluating each $f(X_i)$ independently via $\mathcal{P}$ and then taking the XOR of the $n$ output bits. This strategy uses $n$ times the resources required to compute $f$. The central question of XOR lemmas asks whether this naive protocol is resource-optimal.

\begin{question}
   For which \emph{regimes of parameters} $(\rho, \rho')$ and which notions of \emph{resources} does computing $f^{\oplus n}$ with probability $\rho'$ require $\Omega(n)$ times the resources needed for computing $f$ with probability $\rho$?
\label{ques:main_question}
\end{question}

XOR lemmas are closely connected to the \emph{Direct Sum} problem, where we seek to compute $f(X_i, Y_i)$ for all $i \in [n]$, and have been extensively studied under various resource models, including circuit size \cite{Yao82, Lev87, Imp95, IW97, GNW11}, query complexity \cite{Sha03, She11, Dru12, BKLS20, BKST24}, and decision-tree complexity \cite{Hoz24}. Another related question is the \emph{Direct Product} problem: does the success probability of $f^{\oplus n}$ or the \emph{advantage} of $f^{\oplus n}$ decay exponentially as $n$ increases under limited resources? This question has been studied in contexts of game values~\cite{Raz98,Holenstein07,Raz08,Rao08}.

In this paper, we consider the computation of a function $f: \mathcal{X} \times \mathcal{Y} \rightarrow \{0,1\}$ in a \emph{two-player randomized communication} setting. In this model, Alice receives an input $X \in \mathcal{X}$, and Bob receives $Y \in \mathcal{Y}$. The goal is for the players to compute $f(X,Y)$ by exchanging a sequence of messages $\M = (M^1, M^2, \dots, M^r)$. For odd rounds $i$, Alice generates $M^i$ based on her input $X$ and the preceding messages $M^{<i}$; for even $i$, Bob generates $M^i$ based on $Y$ and the preceding messages $M^{<i}$. Both players also have access to private randomness and shared public randomness. The randomized messaging schemes $\M = (M^1, M^2, \dots, M^r)$ are called \emph{(randomized) protocols}. Notably, the computation of $f^{\oplus n}$ can also be modeled in this two-player setting: Alice receives $(X_1, \dots, X_n)$, Bob receives $(Y_1, \dots, Y_n)$, and their goal is to compute $f^{\oplus n}(X_1, \dots, X_n, Y_1, \dots, Y_n)$ through running a protocol. Many results exist when the resource of interest is the total length of messages, namely the \emph{communication complexity}, such as Direct Sum results \cite{CSWY01, Sha03, JRS03, HJMR07, BBCR10}, Direct Product results \cite{Kla10, BRWY13a, BRWY13b, Jai15, IR24b}, and XOR lemmas \cite{BBCR10, Yu22, IR24a, IR24b}.

Beyond suggesting trade-offs in the required resources, the naive protocol also provides insight into the optimal trade-offs between error rates, which occur when $(\rho, \rho') = \left(\frac{1}{2} + \frac{\alpha}{2}, \frac{1}{2} + \frac{\alpha^n}{2}\right)$ for some \emph{advantage} $\alpha \in (0,1)$. To see this, suppose $p$ and $q$ are independent $\{0,1\}$-valued random variables with advantages $\alpha_p$ and $\alpha_q$, meaning they can be predicted by $\hat{p}$ and $\hat{q}$ with probabilities $\frac{1}{2} + \frac{\alpha_p}{2}$ and $\frac{1}{2} + \frac{\alpha_q}{2}$, respectively. Then, the XOR of $p \oplus q$ can be predicted by $\hat{p} \oplus \hat{q}$ with probability $\frac{1}{2} + \frac{\alpha_p \alpha_q}{2}$, since
\begin{align*}
   \Pr(\hat{p} \oplus \hat{q} = p \oplus q) &= \Pr(\hat{p} = p \land \hat{q} = q) + \Pr(\hat{p} \ne p \land \hat{q} \ne q) \\
   &= \Pr(\hat{p} = p) \cdot \Pr(\hat{q} = q) + \Pr(\hat{p} \ne p) \cdot \Pr(\hat{q} \ne q) \tag{$p \perp q$} \\
   &= \left(\frac{1}{2} + \frac{\alpha_p}{2}\right) \cdot \left(\frac{1}{2} + \frac{\alpha_q}{2}\right) + \left(\frac{1}{2} - \frac{\alpha_p}{2}\right) \cdot \left(\frac{1}{2} - \frac{\alpha_q}{2}\right) \\
   &= \frac{1}{2} + \frac{\alpha_p \alpha_q}{2}.
\end{align*}
In the naive protocol, each $f(X_i, Y_i)$ is computed with probability $\rho = \frac{1}{2} + \frac{\alpha}{2}$, achieving advantage $\alpha$. Thus, computing the XOR of these $n$ bits yields an advantage of $\alpha^n$, corresponding to $\rho' = \frac{1}{2} + \frac{\alpha^n}{2}$.\footnote{It is reasonable to consider the regimes in terms of advantage $(\rho, \rho')$ or error probabilities $(1-\rho, 1-\rho')$ only \emph{asymptotically}, as it is possible to \emph{boost} the success probability by making multiple independent runs of the protocol, taking the majority answer. For example, $T = O(1)$ runs suffice to boost the success probability from $0.501$ to any constant below 1, or from error $n^{-0.01}$ to $n^{-O(1)}$.} Thus, Question \ref{ques:main_question} is worth investigating in two parameter regimes:
\begin{enumerate}
    \item $(\rho, \rho') = (\frac{9}{10}, \frac{1}{2} + 2^{-n})$, corresponding to the optimal trade-off where $\alpha = \Theta(1)$, and
    \item $(\rho, \rho') = (1-\frac{1}{n}, \frac{9}{10})$, corresponding to the optimal trade-off where $\alpha = 1 - \Theta(1/n)$.
\end{enumerate}

\subsection{Our Results}
In this paper, we provide an affirmative answer to Question \ref{ques:main_question} (up to vanishing additive losses) in the regime where $(\rho, \rho') = (1 - \frac{1}{n}, \frac{9}{10})$ when the resource of interest is \emph{information}. For a function $g$ and error parameter $\varepsilon \in (0,1)$, let the \emph{information complexity} of $g$ with error $\varepsilon$, denoted $\IC(g, \varepsilon)$, be the maximum amount of information each player learns about the other’s input by the end of a protocol that computes $g$ with probability at least $1 - \varepsilon$. This “resource” represents the (worst-case) amount of information players must learn to compute $f$ accurately. With this notion, we prove a strong XOR lemma for information complexity.

\begin{restatable}[Strong XOR Lemma for Information Complexity]{them}{xoric} 
    There exists a universal constant $\lambda \in (0,1)$ and $c_1 > 0$ such that for any function $f: \mathcal{X} \times \mathcal{Y} \rightarrow \{0,1\}$ and any positive integer $n$, we have
    $$\IC(f^{\oplus n}, 1/10) \geq c_1 n \cdot \left(\IC(f, n^{-1}) - \frac{\log \left(|\mathcal{X}| \cdot |\mathcal{Y}|\right)}{n^{\lambda}} - 1\right).$$
    \label{thm:xor_ic}
\end{restatable}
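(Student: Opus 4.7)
The plan is to prove the contrapositive: given any protocol $\pi$ for $f^{\oplus n}$ with error at most $1/10$ and information cost $I^* = \IC(\pi)$, we construct a protocol $\tau$ for $f$ with error at most $1/n$ and information cost at most $\tfrac{I^*}{c_1 n} + \tfrac{\log(|\mathcal{X}| \cdot |\mathcal{Y}|)}{n^\lambda} + 1$. Rearranging this implication yields the claimed inequality.

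The natural starting point is the standard direct-sum random-embedding reduction. Given input $(X, Y)$, the players use public randomness to sample a coordinate $j \in [n]$ uniformly and to jointly sample the remaining pairs $\{(X_i, Y_i)\}_{i \ne j}$ from a suitable auxiliary distribution $\mu$ on $\mathcal{X} \times \mathcal{Y}$ (so that both players know them). They insert $(X, Y)$ at position $j$, execute $\pi$ on the resulting joint input, and XOR its output with the publicly computable quantity $\bigoplus_{i \ne j} f(X_i, Y_i)$ to obtain an estimate of $f(X, Y)$. By symmetrization over coordinates and the chain rule, averaging over $j$ shows that the information $\tau$ reveals about either player's input is $O(I^*/n)$.

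The central obstacle is the error bound. This natural reduction inherits $\pi$'s error of $1/10$, whereas the theorem requires error $1/n$, and since $\IC(f, 1/n) \geq \IC(f, 1/10)$ this is a genuine strengthening. A majority-vote amplification of $\Theta(\log n)$ independent runs would inflate the information cost by a $\log n$ factor, which the stated bound does not allow. To beat this, the proof must exploit the multiplicative XOR identity $(-1)^{f^{\oplus n}} = \prod_i (-1)^{f(X_i, Y_i)}$: heuristically, the correlation between $\pi$'s output and $f^{\oplus n}$ should translate into a per-coordinate correlation with $f(X_j, Y_j)$ that, for a typical $j$, corresponds to accuracy $1 - \Theta(1/n)$ rather than $4/5$, mirroring the tight trade-off described in the introduction. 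Concretely, the plan is to work with the soft posterior distribution on $f(X_j, Y_j)$ given $\pi$'s transcript and the non-$j$ inputs, and to show via a careful combinatorial or Fourier-type argument that the per-coordinate accuracy, averaged over $j$, reaches $1 - O(1/n)$ while the per-coordinate information cost remains $O(I^*/n)$.

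Finally, the additive term $\log(|\mathcal{X}| \cdot |\mathcal{Y}|) / n^\lambda$ I expect to arise from a discretization or smoothing step needed to make the reduction uniform across inputs, converting an ``on-average'' information guarantee into one that holds on every input at the cost of a polynomially-small fraction of the alphabet entropy per coordinate; the constant $\lambda \in (0,1)$ then balances the discretization scale against the resulting approximation error. The hardest step of the whole argument is the quantitative soft-decoding step above, as it must drive the per-coordinate error to $1/n$ without paying a $\log n$ factor in information; the remaining pieces fit standard direct-sum templates.
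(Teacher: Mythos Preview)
Your proposal has a genuine gap at exactly the step you flag as hardest. The multiplicative identity $\adv(f^{\oplus n}\mid \cdot)=\prod_j \adv(f(X_j,Y_j)\mid \cdot)$ holds only \emph{pointwise}, conditional on the transcript and a particular split of the remaining inputs between the players; it does \emph{not} imply that the expected per-coordinate advantage satisfies $\E[\adv_j]\ge 1-O(1/n)$. Indeed $\E[\prod_j A_j]\le\E[A_j]^n$ can fail badly when the $A_j$'s are correlated through the transcript, and the paper's overview (Section~3.1, the paragraph beginning ``However, there is a catch'') explains precisely this obstruction. A ``soft posterior'' or Fourier argument along the lines you sketch does not, by itself, break this correlation barrier. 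Separately, with your stated embedding (all non-$j$ pairs publicly sampled) there is no multiplicative gain at all: the public XOR of the other coordinates is a fixed bit, so the posterior on $f(X_j,Y_j)$ equals the posterior on $f^{\oplus n}$ and you inherit error $1/10$ exactly.

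What the paper actually does is rather different. It works distributionally and performs a recursive \emph{binary} decomposition of $\pi$ (into two protocols for $f^{\oplus n/2}$, then four for $f^{\oplus n/4}$, etc.), but at each node it first conditions on the event $W_S=\{\text{conditional advantage}\ge\alpha\}$ before splitting. This conditioning is the key new idea: it forces the disadvantages to decay geometrically ($\varepsilon_{S0}+\varepsilon_{S1}\le c\,\varepsilon_S$ with $c<2$) while the information costs still decay nearly additively, so at the leaves one finds a protocol for $f$ with disadvantage $n^{-\Omega(1)}$ and information cost $O(\mathcal I/n+1)$. The price is that the leaf protocol is no longer standard and its input distribution is no longer $\mu$; the paper tracks this distortion via a $\theta$-cost (KL distance to the standardized protocol), shows it stays $n^{-\Omega(1)}$ on average, and then \emph{standardizes} back to a true protocol over $\mu$. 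The additive $\log(|\mathcal X|\cdot|\mathcal Y|)/n^{\lambda}$ term arises in this standardization step (Lemma~\ref{lem:standardize}), not from discretization. Finally, the passage from distributional $\IC_\mu$ to worst-case $\IC$ uses Braverman's minimax theorem plus a constant-factor boosting from error $n^{-\lambda}$ to $n^{-1}$; your outline does not account for this last step either.
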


Our result is asymptotically tight (up to vanishing additive losses), as demonstrated by the following result, whose proof will be deferred to Appendix \ref{appendix:missing_proofs}.

\begin{restatable}{them}{xoricub}
    There exists a universal constant $c_2 > 0$ such that for any $\{0,1\}$-valued function $f$ and positive integer $n$, we have
    $$\IC(f^{\oplus n}, 1/10) \leq c_2 n \cdot \IC(f, n^{-1}).$$
    \label{thm:xor_ic_ub}
\end{restatable}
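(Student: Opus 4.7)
The plan is to analyze the naive protocol itself, with a short boosting step tacked on at the end. Let $\pi$ be a protocol for $f$ with error at most $n^{-1}$ on every input whose information cost is at most $\IC(f, n^{-1})$ under every input distribution. Define the base protocol $\Pi_0$ for $f^{\oplus n}$ by running an independent copy of $\pi$ on each pair $(X_i, Y_i)$ using fresh private randomness per copy, and XORing the $n$ single-copy outputs.

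For the error analysis, let $E_i$ be the indicator that the $i$-th copy errs; the $E_i$'s are independent given the inputs and each has probability at most $1/n$, so
\[
    \Pr[\Pi_0 \text{ errs}] \;=\; \frac{1 - \prod_{i=1}^n \bigl(1 - 2\Pr[E_i]\bigr)}{2} \;\le\; \frac{1 - (1-2/n)^n}{2},
\]
which is bounded away from $1/2$ by an absolute constant (the right-hand side approaches $(1-e^{-2})/2 < 0.433$ as $n$ grows). Running $\Pi_0$ independently $t = O(1)$ times and outputting the majority vote of the $t$ candidate XOR bits brings the error below $1/10$ by a standard Chernoff argument.

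The main step is to control the information cost under an \emph{arbitrary}, possibly correlated, distribution $\nu$ on $(\bX, \bY) = (X_1, \ldots, X_n, Y_1, \ldots, Y_n)$. Let $\Pi^{(i)}$ denote the transcript of the $i$-th copy. Since the copies use independent private coins, the $\Pi^{(i)}$'s are conditionally independent given $(\bX, \bY)$, hence
\[
    H(\Pi_0 \mid \bX, \bY) \;=\; \sum_{i=1}^n H(\Pi^{(i)} \mid X_i, Y_i),
\]
while the chain rule combined with ``conditioning reduces entropy'' gives $H(\Pi_0 \mid \bY) \le \sum_i H(\Pi^{(i)} \mid Y_i)$. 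Subtracting the two, and adding the symmetric inequality on the $\bY$ side, yields
\[
    \IC(\Pi_0, \nu) \;\le\; \sum_{i=1}^n \IC(\pi, \mu_i) \;\le\; n \cdot \IC(f, n^{-1}),
\]
where $\mu_i$ is the marginal of $(X_i, Y_i)$ under $\nu$. Applying the same decomposition to the $t$ independent reruns used for boosting inflates the bound by at most a factor of $t$, giving $\IC(f^{\oplus n}, 1/10) \le t n \cdot \IC(f, n^{-1}) = c_2 n \cdot \IC(f, n^{-1})$.

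The one point that needs care is that the decomposition must hold for an arbitrary joint $\nu$, not merely a product of the marginals $\mu_i$ — a priori, correlations among the $X_i$'s could let the transcripts of \emph{other} copies leak extra information about $X_i$. The calculation above sidesteps this because it never invokes independence across copies on the input side; only independence of the private coins within each copy is used. This is the one place where a sloppier chain-rule expansion could silently fail.
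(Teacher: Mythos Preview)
Your proof is correct and follows essentially the same approach as the paper: analyze the naive protocol and decompose the information cost via conditional independence of the per-coordinate transcripts, using $H(\Pi_0\mid\bX,\bY)=\sum_i H(\Pi^{(i)}\mid X_i,Y_i)$ together with the subadditivity bound $H(\Pi_0\mid\bY)\le\sum_i H(\Pi^{(i)}\mid Y_i)$. The only cosmetic difference is the placement of the boosting step---the paper first boosts the single-copy error from $n^{-1}$ to $(10n)^{-1}$ (via its Lemma on majority amplification) so that a union bound already gives error $1/10$ for the XOR, whereas you run the XOR at error $n^{-1}$ per copy and boost the resulting constant-error XOR protocol afterwards; both cost an $O(1)$ factor.
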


Our proof of Theorem \ref{thm:xor_ic} relies on a distributional version of the XOR lemma for \emph{information cost}. For a function $g$, error parameter $\varepsilon \in (0,1)$, and input distribution $\mu$, let $\IC_{\mu}(g, \varepsilon)$ denote the \emph{information cost} of $g$ with error $\varepsilon$, defined as the minimum information learned by each player about the other’s input while using a protocol that computes $g$ with probability at least $1 - \varepsilon$ \emph{over inputs drawn from $\mu$}. We establish a strong XOR lemma for distributional information cost, using it to prove Theorem \ref{thm:xor_ic}.

\begin{restatable}[Strong XOR Lemma for Distributional Information Cost]{them}{xordistic}
There exists a universal constant $\lambda \in (0,1)$ and $c_3 > 0$ such that for any function $f: \mathcal{X} \times \mathcal{Y} \rightarrow \{0,1\}$, any positive integer $n$, and any input distribution $\mu$ over $\mathcal{X} \times \mathcal{Y}$, we have
    $$\IC_{\mu^n}(f^{\oplus n}, 1/10) \geq c_3 n \cdot \left(\IC_{\mu}(f, n^{-\lambda}) - \frac{\log \left(|\mathcal{X}| \cdot |\mathcal{Y}|\right)}{n^{\lambda}} - 1\right).$$
    \label{thm:xor_dist_ic}
\end{restatable}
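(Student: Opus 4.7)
The plan is to follow the direct-sum template enhanced by an amplification step: from a protocol $\Pi$ achieving $\IC_{\mu^n}(f^{\oplus n}, 1/10) = I^\ast$, I construct a single-instance protocol $\Pi'$ for $f$ under $\mu$ whose information cost is $\lesssim I^\ast/n$ and whose error is at most $n^{-\lambda}$; rearranging the resulting inequality yields the theorem. As a preprocessing step I would ``standardize'' $\Pi$ via the $\stdz$ procedure hinted at by the macros, putting it into a canonical form with clean input-randomness dependencies so that the subsequent chain-rule manipulations are tractable; this step preserves information cost and success probability up to $o(1)$ losses.

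Next, I apply a chain-rule/averaging argument: decomposing the transcript's information $\I(X^n Y^n ; \M)$ under $\mu^n$ as a telescoping sum over the $n$ coordinates yields an index $i^\ast \in [n]$ with per-coordinate information content $\le I^\ast / n$. On input $(X,Y) \sim \mu$, Alice and Bob use public coins to pick $i^\ast$ and to sample the other $n-1$ coordinates $(X_j, Y_j)_{j \neq i^\ast}$ directly from $\mu$ (via a BBCR-style correlated sampling scheme when $\mu$ is non-product, which contributes the $\log(|\mathcal{X}||\mathcal{Y}|)/n^{\lambda}$ term); they then embed $(X, Y)$ at coordinate $i^\ast$, run $\Pi$ to obtain an output $Z$, and return $Z \oplus \bigoplus_{j \neq i^\ast} f(X_j, Y_j)$. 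Because the other coordinates are now public, both players can compute this XOR locally, and the information cost of $\Pi'$ with respect to the real input $(X, Y)$ is bounded by $I^\ast/n$ plus the non-product correction and an additive $O(1)$ from output bits.

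The main obstacle is controlling the error of $\Pi'$, which must be $n^{-\lambda}$, whereas the reduction above only inherits the $1/10$ error of $\Pi$. To amplify, I would exploit in reverse the advantage-multiplication identity highlighted in the introduction: if the $n$ coordinates contributed independently, constant advantage on $f^{\oplus n}$ would force per-coordinate advantage $(4/5)^{1/n} = 1 - \Theta(1/n)$, hence single-instance error $O(1/n)$. Making this rigorous despite the protocol correlating the coordinates is the technical heart of the argument. Concretely, I would use the $I^\ast/n$ information bound at coordinate $i^\ast$ to run a hybrid that compares the induced transcript distribution when $(X_{i^\ast}, Y_{i^\ast})$ is a true $\mu$-sample versus an independent resample; Pinsker's inequality applied to the per-coordinate KL bound controls the drift between the two regimes, and combining this with the constant XOR bias of $\Pi$ extracts a predictive advantage of $1 - n^{-\lambda}$ on $f(X, Y)$. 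The exponent $\lambda \in (0,1)$ then emerges from simultaneously balancing the hybrid slack against the non-product splitting correction, and I expect this amplification step to be both the technical heart of the argument and the source of the polynomial loss encoded in $\lambda$.
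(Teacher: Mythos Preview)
Your reduction is precisely the ``folklore'' embedding discussed in Section~\ref{sec:related_work} (Theorem~\ref{thm:embedding}), and as the paper points out, that reduction can only preserve the error: it gives a single-instance protocol with error $1/10$, not $n^{-\lambda}$. There is also a mismatch inside the reduction itself. If you publicly sample \emph{all} other coordinates $(X_j,Y_j)_{j\neq i^\ast}$ so that both players can locally compute $\bigoplus_{j\neq i^\ast} f(X_j,Y_j)$, then the information cost you must control is $I(\M:X_{i^\ast}\mid Y,X_{-i^\ast})+I(\M:Y_{i^\ast}\mid X,Y_{-i^\ast})$, and these terms do \emph{not} telescope to $I^\ast$ under the chain rule, so the ``$\le I^\ast/n$ by averaging'' step fails. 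If instead you use the correct BBCR embedding (public $X_{<i^\ast},Y_{>i^\ast}$; private $X_{>i^\ast}$ for Alice and $Y_{<i^\ast}$ for Bob), then only one player knows each half of the residual XOR, those extra bits must be sent, and the error of the single-instance protocol is exactly the error of $\Pi$ on $f^{\oplus n}$, namely $1/10$.

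The amplification you sketch does not close this gap. The per-coordinate information bound tells you only that the transcript reveals little about $(X_{i^\ast},Y_{i^\ast})$; it says nothing about the conditional bias of $f(X_{i^\ast},Y_{i^\ast})$ given the transcript and the other coordinates, which is what you need. In particular, there is no hybrid step here: whether $(X_{i^\ast},Y_{i^\ast})$ is the ``true'' sample or a ``resample'' from $\mu$, the joint distribution of $(X,Y,\M)$ is the same $\mu^n$-run of $\Pi$, so Pinsker gives you no leverage. The advantage-multiplication heuristic $(4/5)^{1/n}$ applies only when the $n$ coordinate-wise advantages are \emph{already} defined and multiply to the global one, which requires the pointwise identity $Z=A_0A_1$ of Claim~\ref{claim:mult_adv}; it does not follow from a single-coordinate embedding. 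The paper's actual mechanism is entirely different: it performs a \emph{recursive conditional binary decomposition} (Section~\ref{sec:decomposition}), at each of $\log n$ levels conditioning on the high-advantage event $W_S=\{Z_S\ge\alpha\}$ so that disadvantages decay geometrically ($\varepsilon_{S0}+\varepsilon_{S1}\le 2^{-\tau}\varepsilon_S$) while information costs and $\theta$-costs blow up only by controlled factors. A potential argument (Claim~\ref{clm:sum_chi}) shows the accumulated conditioning events still have mass $\Omega(n)$, and a final standardization step (Lemma~\ref{lem:standardize}) converts the resulting generalized protocol back to a standard one over $\mu$; the $\log(|\mathcal X||\mathcal Y|)/n^{\lambda}$ term arises there, not from any correlated-sampling of non-product $\mu$.
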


\subsection{XOR Lemmas for Exponentially Small Advantage}

Another significant regime is $(\rho, \rho') = (\frac{2}{3}, \frac{1}{2} + 2^{-n})$, representing the tight upper bound for $\alpha = \Theta(1)$. However, in this setting, an XOR lemma for information complexity does not hold. Consider the following protocol that computes $f^{\oplus n}(X_1,\dots,X_n, Y_1,\dots,Y_n)$ with probability $\frac{1}{2} + 2^{-n}$ while revealing only an exponentially small amount of information: for a $2^{-n+1}$ fraction of inputs, Alice sends her entire input to Bob, allowing him to compute $f^{\oplus n}$ exactly; otherwise, Alice sends nothing, and Bob outputs a random bit which is correct with probability $1/2$. This protocol achieves success probability $\frac{1}{2} + 2^{-n}$, but Alice reveals only $2^{-n+1} \cdot n \cdot \log |\mathcal{X}|$ bits of information, meaning that an XOR lemma cannot hold in this regime.

On the contrary, a recent result by \cite{Yu22} provides a positive answer to Question \ref{ques:main_question} when the resource of interest is \emph{communication}, showing that if any $r$-round protocol computing $f$ with probability $2/3$ requires $C$ bits of communication, then any protocol computing $f^{\oplus n}$ with probability $1/2 + 2^{-n}$ requires $n \cdot \left(r^{-O(r)} \cdot C - 1\right)$ bits. \cite{IR24b} extends this to the \emph{Direct Product} setting, as well as eliminating the exponent ``$-O(r)$" from Yu’s result.

\section{Related Work}
\label{sec:related_work}

We restrict our attention to the following question, as it immediately implies our Distributional XOR Lemma. 

\begin{question} 
Given a communication protocol $\pi$ for computing $f^{\oplus n}$ with error $1/10$ over $\mu^n$ with information cost $\mathcal{I}$, can we construct a new protocol $\eta$ for computing $f$ with error $n^{-O(1)}$ over $\mu$ with information cost $\approx \mathcal{I}/n$?
\end{question}

An easier variant of this question is implied by known results, where we allow the same constant error $1/10$ for computing $f$.

\subsection{The ``Folklore'' Input Embedding Procedure}
The XOR Lemma for distributional information cost is known be true in the setting where $\rho = \rho'$ due to the work by \cite{BYJKS04} which was made explicit by \cite{BBCR10}.

\begin{theorem}[Theorem 2.4 of \cite{BBCR10}; informal] For any function $f$, error parameter $\varepsilon \in (0,1)$, and an input distribution $\mu$, the following holds.

\textbf{Given} a protocol $\pi$ for computing $f^{\oplus n}$ with error $\varepsilon$ over $\mu^n$ with information cost at most $\mathcal{I}$, 

\textbf{then} there exists a protocol $\eta$ for computing $f$ with error $\varepsilon$ over $\mu$ with information cost $\mathcal{I}/n +O(1).$
\label{thm:embedding}
\end{theorem}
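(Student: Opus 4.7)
The plan is to construct an embedding that turns $(X, Y) \sim \mu$ into $n$-fold inputs $(\bX, \bY) \sim \mu^n$ satisfying $(X_i, Y_i) = (X, Y)$ at a uniformly random coordinate $i$, using as little extra communication as possible. For general correlated $\mu$, Alice and Bob cannot each sample their own halves of the filler coordinates independently; the folklore fix is to split who uses private randomness according to position relative to $i$. Specifically, public coins sample $i \in [n]$, together with the values $Y_j$ for $j < i$ (from $\mu_Y$) and the values $X_j$ for $j > i$ (from $\mu_X$); then Alice privately completes $X_j \sim \mu(\cdot \mid Y_j)$ for $j < i$, and Bob privately completes $Y_j \sim \mu(\cdot \mid X_j)$ for $j > i$. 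By construction the resulting $(\bX, \bY)$ is distributed as $\mu^n$, so running $\pi$ on these inputs produces the correct bit $Z = f^{\oplus n}(\bX, \bY)$ with probability at least $1 - \varepsilon$. The structural bonus is that Alice knows both halves of every pair $(X_j, Y_j)$ for $j < i$ and Bob knows both for $j > i$: Alice locally computes $a = \bigoplus_{j < i} f(X_j, Y_j)$, sends this one bit to Bob, and both output $Z \oplus a \oplus \bigoplus_{j > i} f(X_j, Y_j)$, which equals $f(X_i, Y_i) = f(X, Y)$ whenever $\pi$ is correct.

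For the information-cost analysis, let $R$ denote the public randomness of $\eta$. After absorbing the single extra bit $a$, which costs at most $1$ on Alice's side, her contribution reduces to $\E_i[\I(X_i; \M_\pi \mid Y_i, Y_{<i}, X_{>i}, R_{\pi,\text{pub}})]$ up to that constant. Applying the chain rule to $\pi$ in decreasing coordinate order gives $\I(\bX; \M_\pi \mid \bY, R_{\pi,\text{pub}}) = \sum_i \I(X_i; \M_\pi \mid X_{>i}, \bY, R_{\pi,\text{pub}})$, so the average right-hand summand is at most $\mathcal{I}/n$. To connect the embedding term to this chain-rule term, I invoke coordinate independence: since $(X_j, Y_j)$ are i.i.d.\ across $j$ and $Y_{>i}$ is generated in the embedding from $X_{>i}$ alone, $X_i$ is conditionally independent of $Y_{>i}$ given $(X_{>i}, Y_{<i}, Y_i, R_{\pi,\text{pub}})$. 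Expanding $\I(X_i; \M_\pi, Y_{>i} \mid X_{>i}, Y_{<i}, Y_i, R_{\pi,\text{pub}})$ two ways via the chain rule for conditional mutual information, this Markov relation yields $\I(X_i; \M_\pi \mid Y_i, Y_{<i}, X_{>i}, R_{\pi,\text{pub}}) \leq \I(X_i; \M_\pi \mid X_{>i}, \bY, R_{\pi,\text{pub}})$. Summing over $i$ and running the symmetric argument for Bob's side gives $\IC_{\mu}(\eta) \leq \mathcal{I}/n + O(1)$.

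The main obstacle, and the reason for the specific asymmetric embedding, is matching the conditioning sets: the natural chain rule for $\IC_{\mu^n}(\pi)$ conditions on all of $\bY$, whereas the internal information cost of $\eta$ must omit $Y_{>i}$ on Alice's side (since it is Bob's private randomness) and symmetrically omit $X_{<i}$ on Bob's side. Placing $Y_{<i}$ (rather than $Y_{>i}$) in the public randomness, and correspondingly making Alice sample the filler $X_{<i}$ privately, is precisely what lets a single application of the Markov property $X_i \perp Y_{>i} \mid X_{>i}$ close the gap. Reversing the split would either demand extra communication to transmit the filler halves or produce an inequality in the wrong direction.
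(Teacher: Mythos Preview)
Your argument is correct and follows the same folklore embedding as the paper's sketch; your split is the mirror image of theirs (public $(Y_{<i}, X_{>i})$ versus the paper's public $(X_{<J}, Y_{>J})$), which is immaterial by symmetry, and your chain-rule-plus-Markov analysis is in fact more detailed than what the paper writes out. One small nit: you say ``both output $Z \oplus a \oplus \bigoplus_{j>i} f(X_j, Y_j)$'', but Alice does not know $Y_{>i}$ (it is Bob's private randomness); either have Bob alone output, or let Bob send his own extra bit as in the paper, which is still $O(1)$.
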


For completeness, we roughly sketch the proof of the theorem by constructing the protocol $\eta$ via by \emph{embedding} an input $(x,y)$ into one of the $n$ coordinates, and then execute $\pi$.

\begin{figure}[H]
    \centering
    \begin{tabular}{|p{15.5cm}|}
    \hline ~\\
    \multicolumn{1}{|c|}{\textbf{Protocols $\eta$ \ for computing $f(x,y)$ where $(x,y)$ are drawn from $\mu$.}} \\ 
    \\ \hline
    \begin{enumerate}
        \item Players use public randomness to sample a uniform index $J \in [n]$ and partial inputs $X_{<J}$ and $Y_{>J}$
        \item Alice embeds $X_J = x$ and privately samples $X_{>J}$ conditioned on $Y_{>J}$.
        \item Bob embeds $Y_J = y$ and privately samples $Y_{<J}$ conditioned on $X_{<J}$.
        \item Players execute $\pi$ to compute $f^{\oplus n}(X_1, \ldots, X_n, Y_1, \ldots, Y_n)$.
        \item Alice sends Bob an extra bit indicating $f^{\oplus n-J}(X_{>J}, Y_{>J})$, and Bob sends Alice an extra bit indicating $f^{\oplus J-1}(X_{<J}, Y_{<J})$.
        \item Players recover $f(x,y)$ by computing
        $$f(x,y) := f^{\oplus n}(X_1, \ldots, X_n, Y_1, \ldots, Y_n) \oplus f^{\oplus n-J}(X_{>J}, Y_{>J}) \oplus f^{\oplus J-1}(X_{<J}, Y_{<J}). $$
    \end{enumerate}
    \\
    \hline
    \end{tabular}
    \caption{A protocol $\eta$ for computing $f$ on inputs $(x,y) \sim \mu$ via the embedding method.}
    \label{table:decompose_protocols}
\end{figure}

The protocol $\eta$ can also be interpreted as follows: we list $n$ protocols $(\pi_1, \ldots, \pi_n)$ for which $\pi_j$ corresponds to $\eta \mid J = j$, and execute a $\pi_j$ for a random $j \in [n]$. It can be shown by calculation that the information costs of these $n$ protocols sum up to at most $\mathcal{I} + O(n)$. Since $\eta$ picks a uniform random coordinate $j$ and runs $\pi_j$, its information cost is at most 
$$ \frac{1}{n} \cdot (\mathcal{I} + O(n)) = \frac{\mathcal{I}}{n} + O(1). $$

Nevertheless, $\eta$ is ineffective in boosting correctness as it only succeeds with probability $\rho' = \rho$. To see this, observe that both players always compute $f^{\oplus n-J}(X_{>J}, Y_{>J})$ and $f^{\oplus J-1}(X_{<J}, Y_{<J})$ correctly. Thus, the correctness of $f(x,y)$ inherits from that of $f^{\oplus n}(X_1, \ldots, X_n, Y_1, \ldots, Y_n)$. Since $(X_1, \ldots, X_n, Y_1, \ldots, Y_n)$ is distributed exactly like $\mu^n$, the probability that $\eta$ is correct remains $\rho$. This roughly proves Theorem \ref{thm:embedding}.

By plugging $\varepsilon = 1/10$ (i.e. the success rate is $\rho = \rho' = 9/10$) into Theorem~\ref{thm:embedding}, and take the supremum over protocol $\pi$, we have shown a distributional XOR lemma with preserving error. As a corollary, a similar bound can be shown for the information complexity.\footnote{We shall not show the reduction explicitly; however, it is remarkably similar to the proof of Theorem \ref{thm:xor_ic_ub} appeared in Appendix \ref{appendix:missing_proofs}.}

\begin{theorem} There exists a universal constant $c_4$ such that the following holds. For any $\{0,1\}$-valued function $f$, any positive integer $n$, and any input distribution $\mu$, it holds that 
    $$\IC_{\mu^n}(f^{\oplus n}, 1/10) \geq n \cdot \left(\IC_{\mu}(f, 1/10) - O(1)\right)$$
and consequently
$$\IC(f^{\oplus n}, 1/10) \geq c_4n \cdot \left(\IC(f, 1/10) - O(1)\right).$$
\label{thm:dist_err_preserving}
\end{theorem}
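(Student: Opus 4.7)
The plan is to derive Theorem \ref{thm:dist_err_preserving} directly from Theorem \ref{thm:embedding} by specializing to $\varepsilon = 1/10$, and then lift the distributional statement to the non-distributional one via the standard duality between $\IC$ and $\sup_{\nu} \IC_{\nu}$.

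\textbf{Distributional inequality.} I would fix an input distribution $\mu$ and let $\pi$ be an arbitrary protocol computing $f^{\oplus n}$ with error at most $1/10$ over $\mu^n$, with information cost $\mathcal{I}(\pi)$. Applying Theorem \ref{thm:embedding} with $\varepsilon = 1/10$ to $\pi$ produces a protocol $\eta$ that computes $f$ with error at most $1/10$ over $\mu$ and has information cost at most $\mathcal{I}(\pi)/n + O(1)$. By the definition of $\IC_{\mu}$, this yields $\IC_{\mu}(f, 1/10) \leq \mathcal{I}(\pi)/n + O(1)$, which rearranges to $\mathcal{I}(\pi) \geq n \cdot (\IC_{\mu}(f, 1/10) - O(1))$. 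Taking the infimum over all such $\pi$ yields the first claimed inequality
$$\IC_{\mu^n}(f^{\oplus n}, 1/10) \geq n \cdot \left(\IC_{\mu}(f, 1/10) - O(1)\right).$$

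\textbf{From distributional to non-distributional.} Any protocol achieving worst-case error at most $1/10$ has distributional error at most $1/10$ under every $\nu$, so $\IC_{\nu}(g, 1/10) \leq \IC(g, 1/10)$ for every $\nu$. Applied to $g = f^{\oplus n}$ and $\nu = \mu^n$, this combines with the distributional bound above to give $\IC(f^{\oplus n}, 1/10) \geq n \cdot (\IC_{\mu}(f, 1/10) - O(1))$ for every $\mu$. Taking the supremum over $\mu$ and then invoking the standard duality $\IC(f, 1/10) \leq O(1) \cdot \sup_{\mu} \IC_{\mu}(f, 1/10)$ (Braverman-style minimax, up to a constant slack due to error-boosting) absorbs any gap into the universal constant $c_4$ and yields the stated consequence.

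\textbf{Main obstacle.} Since Theorem \ref{thm:embedding} is used as a black box, the only nontrivial step is the conversion between $\sup_{\mu} \IC_{\mu}(f, 1/10)$ and $\IC(f, 1/10)$ in the last paragraph, which introduces the constant $c_4$. This is precisely the analog of the distributional-to-worst-case reduction that appears in the proof of Theorem \ref{thm:xor_ic_ub}; the rest of the argument is a one-line specialization of the embedding theorem and carries no further technical content.
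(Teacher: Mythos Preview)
Your proposal is correct and matches the paper's approach: the paper explicitly says the distributional inequality follows by plugging $\varepsilon = 1/10$ into Theorem~\ref{thm:embedding} and taking the infimum over protocols, and that the passage to information complexity is not shown explicitly but is analogous to the distributional-to-worst-case reductions used elsewhere in the paper (minimax Theorem~\ref{thm:minimax_ic} plus constant-factor error boosting, exactly as in the proof of Theorem~\ref{thm:xor_ic}). The only caveat is that applying Theorem~\ref{thm:minimax_ic} with $\alpha = 1/2$ lands you at error $1/5$, which is just outside the hypothesis of Lemma~\ref{lem:boosting}; choosing a slightly larger $\alpha$ (e.g.\ $\alpha = 0.6$) fixes this and is absorbed into $c_4$.
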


Our main results in Theorem~\ref{thm:xor_ic} and Theorem~\ref{thm:xor_dist_ic} can be interpreted as improving the asymptotic of errors of Theorem~\ref{thm:dist_err_preserving} from $(\frac{1}{10}, \frac{1}{10})$ to $(\frac{1}{10}, n^{-O(1)})$ where it is asymptotically-tight. Attempting to obtain polynomially-small error for $f$ poses as the main technical challenge of our work.

\subsection{Driving Down the Errors}

To bring the error probability of $f$ down from $\frac{1}{10}$ to $n^{-1}$ in the XOR Lemmas, we adopt an alternative view of the input embedding procedure, as observed by \cite{Yu22}. On a high level, Yu proposes a \emph{decomposition} procedure that split the protocol $\pi$ for computing $f^{\oplus n}$ over $\mu^n$ into two protocols: a protocol $\pi^{(n)}$ for computing $f$ over $\mu$, and  a protocol $\pi^{(<n)}$ for computing $f^{\oplus n-1}$ over $\mu^{n-1}$, for which their information costs add up to $\mathcal{I} + O(1)$. More importantly, it holds \emph{pointwisely} that the advantage of $\pi$ is equal to the product of the advantage of $\pi^{(n)}$ and the advantage of $\pi^{(<n)}$. These observations motivate the reasoning that at least one of the following cases should occur:
\begin{enumerate}
    \item[(1)] $\pi^{(n)}$ is a ``good'' protocol for computing $f$, as it has low information cost and errs with small probability. In this case, we have found the desired protocol $\eta := \pi^{(n)}$.
    \item[(2)] $\pi^{(<n)}$ is ``better than average'' for computing $f^{\oplus n-1}$. In this case, we recursively decompose $\pi^{(<n)}$ into $\pi^{(n-1)}$ and $\pi^{(<n-1)}$, until we land in case (1).
\end{enumerate}

This preliminary idea of \emph{protocol decomposition} led \cite{Yu22} to the Strong XOR Lemma for \emph{communication complexity} in the regime where $(\rho, \rho') = \left(\frac{1}{2} + 2^{-n}, \frac{2}{3}\right)$. However, to the best of our knowledge, the majority of the techniques used in in Yu's proof do \emph{not} transfer to our regime where $(\rho, \rho') = \left(\frac{9}{10}, 1 - \frac{1}{n}\right)$. While Yu’s work serves as a foundational building block, our approach quickly diverges. We address some of these distinctions in Section \ref{subsec:distinction}.

\section{Technical Overview}
\label{sec:tech_overview}

In this section, we present our main lemma and outline its proof. Its complete proof can be found in Section \ref{sec:decomposition} and Section \ref{sec:standardize}. Our main results (Theorem \ref{thm:xor_ic} and Theorem \ref{thm:xor_dist_ic}) follow directly from the main lemma. For clarity, their proofs are deferred to Section \ref{sec:pfs_xor}.

\begin{restatable}[Main Technical Lemma]{lem}{mainlemma}
There exists a universal constant $C > 0$ and $\lambda \in (0,1)$ such that for any function $f: \mathcal{X} \times \mathcal{Y} \rightarrow \{0,1\}$, the following statement holds.

\textbf{If} there exists a (standard) communication protocol $\pi$ for computing $f^{\oplus n}$ over an input distribution $\mu^n$ such that it errs with probability $\frac{1}{10}$ and has information cost $\mathcal{I}$,

\textbf{then} there exists a (standard) communication protocol $\eta$ for computing $f$ over an input distribution $\mu$ such that it errs with probability $n^{-\lambda}$ and has information cost at most 
$C \cdot \left(\frac{\mathcal{I}}{n} + \frac{\log \left(|\mathcal{X}| \cdot |\mathcal{Y}|\right)}{n^{\lambda}} + 1\right).
$
\label{lem:main_lemma}
\end{restatable}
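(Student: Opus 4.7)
The plan is to build on the protocol-decomposition framework of \cite{Yu22} but adapt it to the high-advantage regime of this paper, where the target error on $f$ must be polynomially small in $n$ (rather than merely constant as in Theorem~\ref{thm:embedding}). Given $\pi$ computing $f^{\oplus n}$ over $\mu^n$ with error $1/10$ (so pointwise advantage integrates to at least $4/5$) and information cost $\mathcal{I}$, I first pass $\pi$ through a standardization procedure ($\stdz$) that puts it into a normal form suitable for iterative decomposition. The standardization is what forces the $\log(|\mathcal{X}|\cdot|\mathcal{Y}|)/n^\lambda$ term: it must enforce a positive floor on pointwise advantages (so that subsequent log/averaging arguments don't blow up) while bounding the auxiliary information revealed in terms of $\log(|\mathcal{X}|\cdot|\mathcal{Y}|)$.

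Next, I iterate Yu-style decomposition $n$ times to produce protocols $\pi_1, \ldots, \pi_n$, each computing $f$ on a single copy drawn from $\mu$, with the two key properties: (i) the information costs are additive up to $O(1)$ per step, so $\sum_{i=1}^n \IC_\mu(\pi_i) \leq \mathcal{I} + O(n)$, and (ii) the advantage multiplies pointwise, i.e.\ for every input $(x_1,y_1,\ldots,x_n,y_n)$ one has $\adv(\pi)(x_1,y_1,\ldots,x_n,y_n) = \prod_{i=1}^n \adv(\pi_i)(x_i,y_i)$. The first property gives the information budget that we will average over; the second encodes the connection between the success probability of $\pi$ and those of the $\pi_i$.

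The crux is then a selection argument: I need to find an index $i^\star$ such that $\pi_{i^\star}$ simultaneously has $\IC_\mu(\pi_{i^\star}) = O(\mathcal{I}/n + 1)$ and error at most $n^{-\lambda}$ over $\mu$. The information-cost side is immediate from Markov applied to the sum $\sum_i \IC_\mu(\pi_i)$. For the error side, take logs in the pointwise identity to get $\mathbb{E}_{\mu^n}[\log \adv(\pi)] = \sum_i \mathbb{E}_\mu[\log \adv(\pi_i)]$; combined with the assumption that $\adv(\pi)$ is bounded below (in expectation, and pointwise thanks to standardization), this forces $\mathbb{E}_\mu[\log \adv(\pi_i)] \geq -O(1/n)$ for most $i$. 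Converting back from $\log$-advantage to error, and absorbing the standardization loss, yields error at most $n^{-\lambda}$ for most $i$. Intersecting the two ``most $i$'' sets produces the desired $i^\star$, and setting $\eta := \pi_{i^\star}$ finishes the argument.

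The main obstacle is the selection step. Without a guaranteed pointwise floor on the per-coordinate advantages, the log-averaging identity is meaningless: a single input on which some $\adv(\pi_i)(x,y)$ is zero or very close to zero would make $\mathbb{E}[\log \adv(\pi_i)]$ arbitrarily negative, destroying the averaging. Hence the proof lives or dies by the quantitative guarantees of the standardization procedure, which must trade off (a) how strong a floor it imposes, (b) how much additional information it injects (scaling with $\log(|\mathcal{X}|\cdot|\mathcal{Y}|)$), and (c) how much it can degrade the overall advantage of $\pi$. Balancing these three is what produces the exponent $\lambda \in (0,1)$ rather than $\lambda = 1$, and is where the bulk of the technical work of Sections~\ref{sec:decomposition} and~\ref{sec:standardize} is expected to go.
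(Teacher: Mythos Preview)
Your proposal has the right high-level skeleton (decompose, track information and advantage, select a good coordinate), but two of its load-bearing steps diverge from the paper in ways that would make the argument fail as written.

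\textbf{The role of standardization is inverted.} In the paper, $\stdz$ is not applied to $\pi$ upfront to floor advantages; it is applied at the \emph{end} (Section~\ref{sec:standardize}) to convert the generalized protocol $\pi_S$ produced by the recursion back into a standard protocol with the correct input distribution $\mu$. The recursion repeatedly conditions on events $W_S=\{Z_S\ge\alpha\}$, which both (a) destroys the ``standard'' structure of the protocol and (b) distorts the input marginal away from $\mu$. The $\theta$-cost is precisely the KL distance to the standardized version, and the paper tracks it through the recursion (Claim showing $\E(\Theta_S)=O(n^{-(1-\tau)}\log n)$). The $\log(|\mathcal X|\cdot|\mathcal Y|)/n^\lambda$ term then arises from Lemma~\ref{lem:standardize}, where converting generalized $\to$ standard costs $O(\sqrt{\theta})\cdot\log(|\mathcal X|\cdot|\mathcal Y|)$ in information. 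Your proposal's ``standardize first to impose a pointwise floor'' is a different operation with no counterpart in the paper, and it is unclear how any upfront modification of $\pi$ could enforce a floor on the \emph{per-coordinate} advantages $A_{S0},A_{S1}$ that appear only after decomposition.

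\textbf{The conditioning at each level is the missing idea.} Your log-averaging plan $\E[\log Z]=\sum_i \E[\log A_i]$ needs every $A_i$ bounded away from zero. Flooring $Z$ once does not propagate: from $Z=A_0A_1$ pointwise, a large $Z$ says nothing about either factor individually, so after one split the floor is gone and the argument cannot be iterated. The paper's fix is to condition on $W_S=\{Z_S\ge\alpha\}$ \emph{at every node} before splitting; only then does one get the geometric decay $\varepsilon_{S0}+\varepsilon_{S1}\le\frac{2}{1+\sqrt\alpha}(1-\E(Z_S\mid W_S))$ (Claim~\ref{clm:disadv_bounds}). This conditioning is also why the paper uses \emph{binary} decomposition ($f^{\oplus k}\to$ two copies of $f^{\oplus k/2}$) rather than Yu's linear peel-off: with $\log n$ levels instead of $n$, the product of the conditioning probabilities $\prod p_S$ stays $\Omega(1)$ (Claim~\ref{clm:sum_chi}), which is essential for the final error to be $n^{-\lambda}$ rather than constant (cf.\ Section~\ref{subsec:distinction}). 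Finally, because the $p_S$ are not uniform across branches, the selection is done against the non-uniform distribution $\mathcal D(S)\propto\chi_S$, not the uniform distribution your Markov step implicitly uses.
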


To prove Lemma \ref{lem:main_lemma}, we assume its setup. Let $\pi$ be a protocol that computes $f^{\oplus n}$ on the input distribution $\mu^n$ with information cost $\mathcal{I}$ and advantage $\frac{4}{5}$.\footnote{The advantage $\frac{4}{5}$ corresponds to the error probability $\frac{1}{10}$.} Below, we sketch the approach to obtaining the protocol $\eta$ that satisfies the requirements of Lemma \ref{lem:main_lemma}.

\subsection{Binary Protocol Decomposition from \texorpdfstring{\cite{Yu22}}{Yu22}}
Our starting point is a slight modification of the protocol decomposition procedure introduced in \cite{Yu22}. It is worth noting that in the work of \cite{Yu22}, their decomposition yields two \emph{unbalanced} protocols: a protocol $\pi^{(n)}$ for computing $f$, and a protocol $\pi^{(<n)}$ for computing $f^{\oplus n-1}$. In our work, we split the protocol equally so the two smaller protocols both compute $f^{\oplus n/2}$. This turns out to be an important aspect of our decomposition, as it will later yields a fairly clean analysis. Specifically, given a protocol $\pi$ for computing $f^{\oplus n}$, we construct two protocols, namely $\pi_0$ and $\pi_1$, as follows.

\begin{algorithm}[H]
\caption{: Protocol $\pi_0$ for computing $f^{\oplus n/2}$  \\ \textbf{Input:} Alice receives input $x$ on $n/2$ coordinates and Bob receives input $y$ on $n/2$ coordinates}
\begin{algorithmic}[1]
\State{Alice sets $X_{\leq n/2}$ to $x$ and Bob sets $Y_{\leq n/2}$ to $y$}
\State{Alice and Bob publicly samples $Y_{>n/2}$}
\State{Alice privately samples $X_{>n/2}$}
\State{Players run $\pi$ pretending that their inputs are $(X,Y)$ on $n$ instances to compute $f^{\oplus n}(X,Y)$}
\State{Alice sends Bob an extra bit $b_0$ indicating $f^{\oplus n/2}(X_{>n/2}, Y_{>n/2})$}
\State{Both players recover $f^{\oplus n/2}(x,y) = f^{\oplus n/2}(X_{\leq n/2}, Y_{\leq n/2}) = f^{\oplus n}(X,Y) \oplus b_0.$}
\end{algorithmic}
\end{algorithm}

\begin{algorithm}[H]
\caption{: Protocol $\pi_1$ for computing $f^{\oplus n/2}$ \\ \textbf{Input:} Alice receives input $x$ on $n/2$ coordinates and Bob receives input $y$ on $n/2$ coordinates}
\begin{algorithmic}[1]
\State{Alice sets $X_{>n/2}$ to $x$ and Bob sets $Y_{>n/2}$ to $y$}
\State{Alice and Bob publicly samples $X_{\leq n/2}$}
\State{Bob privately samples $Y_{\leq n/2}$}
\State{Players run $\pi$ pretending that their inputs are $(X,Y)$ on $n$ instances to compute $f^{\oplus n}(X,Y)$}
\State{Bob sends Alice an extra bit $b_1$ indicating $f^{\oplus n/2}(X_{\leq n/2}, Y_{\leq n/2})$}
\State{Both players recover $f^{\oplus n/2}(x,y) = f^{\oplus n/2}(X_{>n/2}, Y_{>n/2}) = f^{\oplus n}(X,Y) \oplus b_1.$}
\end{algorithmic}
\end{algorithm}

One might notice that the protocol $\pi_0$ could still achieve the same success probability even if the \emph{final bit} $b_0$ were omitted. This is because, by computing $b_0$ herself, Alice can answer $f^{\oplus n/2}(X_0, Y_0)$ on Bob’s behalf. However, the final bit remains essential in our decomposition, as we will later require \emph{both players} to learn the value of $b_0$. As a result, $b_0$ could add up to one bit to the information cost of $\pi_0$. In what follows, we make one simplification: when analyzing the information costs, we account only for the cost incurred by the messages in the original protocol $\pi$, neglecting the cost of the final bits. Eventually, we will address how to lift this assumption.

Observe that in both protocols $\pi_0$ and $\pi_1$, the input pair  $(x,y)$ is drawn from the distribution $\mu^{n/2}$. Moreover, to be able to execute $\pi$, the players pretend that their inputs are consisting of $n$ instances by filling up the missing coordinates so that their ``artificial'' inputs $(X,Y)$ distribute exactly like $\mu^n$. Notice further that in $\pi_0$, Alice knows both $X_{>n/2}$ and $Y_{>n/2}$; thus, she computes $b_0$ correctly with probability 1. Similarly, in $\pi_1$ Bob computes $b_1$ correctly with probability 1.

Let $I_0$ and $I_1$ denote the information costs of $\pi_0$ and $\pi_1$ respectively. \cite{Yu22} observed that when decomposing the protocols as above, their information costs and advantages also admits algebraic decomposition.

\paragraph{Decomposition of Information Costs.} For $\pi_0$, the information cost from Alice's side is $I(\M: Y_{\leq n/2} \mid X_{\leq n/2}Y_{>n/2})$, and from Bob's side is $I(\M: X_{\leq n/2} \mid Y)$. For $\pi_1$, the information cost from Alice's side is $I(\M: Y_{>n/2} \mid X)$, and from Bob's side is $I(\M: X_{>n/2} \mid X_{\leq n/2}Y_{>n/2})$. Observe that 

\begin{align*}
& I(\M: Y_{\leq n/2} \mid X_{\leq n/2} Y_{>n/2} ) + I(\M: Y_{>n/2} \mid X) \\ & = I(\M: Y_{\leq n/2} \mid X Y_{>n/2}) + I(\M: Y_{>n/2} \mid X) \\
& = I(\M:Y \mid X) \tag{chain rule of information}
\end{align*}
where the first equality follows from the \emph{rectangle property} of communication protocols.\footnote{The rectangle property will be made explicit in Section \ref{sec:formalize_protocols}.} Similarly we also have $I(\M: X_{\leq n/2} \mid Y_{\leq n/2} X_{>n/2} ) + I(\M: X_{>n/2} \mid Y) = I(\M:Y \mid X)$. Therefore, we have $I_0 + I_1 = \mathcal{I}$. This suggests that the information costs of the protocols decompose additively.

\paragraph{Decomposition of Advantage.} Denote the following set of random variables which depends on the randomness of $\M X_{\leq n/2}Y_{>n/2}$:
\begin{align*} A_0 & = \adv(f^{\oplus n/2}(X_{\leq n/2}, Y_{\leq n/2}) \mid \M X_{\leq n/2}Y_{>n/2}) \\
A_1 & = \adv(f^{\oplus n/2}(X_{>n/2}, Y_{>n/2}) \mid \M X_{\leq n/2}Y_{>n/2}) \\
Z & = \adv(f^{\oplus n}(X,Y) \mid \M X_{\leq n/2}Y_{>n/2}) 
\end{align*}
where $\adv(a \mid W)$ denotes the advantage of the bit $a$ conditioned on event $W$.

Conditioned on $\M X_{\leq n/2}Y_{>n/2}$, $A_0$ is the advantage of $\pi_0$ (from Alice's perspective) and $A_1$ is the advantage of $\pi_1$ (from Bob's perspective). Moreover, we have $f^{\oplus n/2}(X_{\leq n/2}, Y_{\leq n/2}) \perp f^{\oplus n/2}(X_{>n/2}, Y_{>n/2}) \mid \M X_{\leq n/2}Y_{>n/2}$ due to the \emph{rectangle property} of communication protocols. Therefore, we have $Z = A_0A_1$.\footnote{If $p,q$ are independent $\{0,1\}$-valued random variables, then $\adv(p \oplus q) = \adv(p) \cdot \adv(q)$.} This suggests that advantages of the protocols decompose multiplicatively.

However, there is a catch: the equality $Z = A_0 A_1$ only holds \emph{pointwise}. Yet we know that the advantage of $\pi_0$ is $\E(A_0)$, the advantage of $\pi_1$ is $\E(A_1)$, and the advantage of $\pi$ is at most $\E(Z)$. Only did we have an assumption that $\E(A_0)\E(A_1) \geq \E(A_0A_1) = \E(Z)$, we could have concluded that advantages decompose multiplicatively. However, the reality is that it is not always the case that the $``\geq''$ holds.

As a thought experiment, let us examine the (incorrect) implications of the information costs and advantages decompositions if they had held. If we apply the decomposition on $\pi_0$ and $\pi_1$ again, we could obtain four protocols for computing $f^{\oplus n/4}$. Repeating this recursively, we would eventually end up with $n$ protocols for computing $f$. By the additive decomposition of information costs and the (incorrect) multiplicative decomposition of advantages, those $n$ protocols must have their information costs summing to $\mathcal{I}$ and their advantages multiplying to at least $4/5$. By the averaging argument, there must exist a protocol $\eta$ computing $f$ with information cost $O(\mathcal{I}/n)$ and advantage at least $(4/5)^{1/n} = 1- \Theta(1/n)$ which is equivalent to success probability $1-\Theta(1/n)$. 

While this approach fails due to the false premise, the idea of recursively breaking down the protocols remains useful.

\subsection{Our Approach: ``Conditional'' Protocol Decomposition}

To address the aforementioned issues, we propose a modified protocol decomposition approach. Specifically, let $W$ denote the event that $Z \geq 0.01$. Then, we obtain protocols $\pi_0$ and $\pi_1$ by applying binary decomposition to the ``protocol'' $\pi \mid W$ (thus the name \emph{conditional} decomposition).\footnote{Careful readers might flag that $\pi \mid W$ no longer aligns with the conventional description of communication protocols. We will address this point shortly.}

We also introduce a new parameter of interest, called \emph{disadvantage}, denoted $\varepsilon$, which is defined as one minus the protocol’s advantage. Generally, this quantity is twice of the error probability when outputting the more-likely bit. Thus, by having players output the more-likely bit, the disadvantage provide a measure of error probability, accurate within a factor of 2. Let $\varepsilon = 1 - \frac{4}{5} = \frac{1}{5}$ denote the disadvantage of $\pi$, and let $\varepsilon_0$ and $\varepsilon_1$ represent the disadvantages of $\pi_0$ and $\pi_1$, respectively.

\begin{figure}[H]
    \centering

\begin{center}
\begin{tikzpicture}[node distance=2cm]

\tikzstyle{protocol} = [rectangle, 
minimum width=1cm, 
minimum height=1cm, 
text centered, 
text width=5cm, 
draw=black, 
fill=orange!30]

\tikzstyle{arrow} = [thick,->,>=stealth]

\node (pi) [protocol, yshift=-0.5cm, xshift=-0.5cm] { $\pi$ \\ \begin{enumerate}
    \item compute $f^{\oplus n}$
    \item information cost $\mathcal{I}$
    \item disadvantage $\varepsilon$
\end{enumerate}};

\node (pi_w) [protocol, below of=pi, yshift=-1cm] { $\pi \mid W$};
\draw [arrow] (pi) -- node[anchor=west] {Conditioning event $W$} (pi_w) ;

\node (pi0) [protocol, below of=pi_w, yshift=-1cm, xshift = -4cm] { $\pi_0$
\begin{enumerate}
    \item compute $f^{\oplus n/2}$
    \item information cost $I_0$
    \item disadvantage $\varepsilon_0$
\end{enumerate}};
\draw [arrow] (pi_w) -- (pi0);

\node (pi1) [protocol, below of=pi_w, yshift=-1cm, xshift = 4cm] { $\pi_1$
\begin{enumerate}
    \item compute $f^{\oplus n/2}$
    \item information cost $I_1$
    \item disadvantage $\varepsilon_1$
\end{enumerate}};
\draw [arrow] (pi_w) -- (pi1);

\end{tikzpicture}
\end{center}
\caption{A single-level ``conditional'' decomposition of a protocol $\pi$ for $f^{\oplus n}$ into two protocols $\pi_0$ and $\pi_1$ for $f^{\oplus n/2}$.}
\end{figure}

With the new way of decomposition, we can show that \footnote{In fact, we will prove stronger statements in Section \ref{sec:decomposition}.}
\begin{equation}
    \varepsilon_0 + \varepsilon_1 \leq 1.98 \varepsilon \hspace{5mm} \text{ and } \hspace{5mm} I_0 + I_1 \leq \mathcal{I} \cdot e^{O(\varepsilon)}.
\label{eq:one-level-decomp}
\end{equation}
In other words, the conditional decomposition yields a \emph{geometric} decay in disadvantages and a \emph{near-linear} decay in information costs, adjusted by a small multiplicative factor of $e^{O(\varepsilon)}$.

Next, consider applying this procedure recursively. For each protocol $\pi_S$, we introduce a conditioning event $W_S$ and derive two smaller protocols, $\pi_{S0}$ and $\pi_{S1}$, via a binary decomposition of $\pi_S \mid W_S$. Let the collection of protocols $\pi_S$ with $|S| = k$ be referred to as the \emph{level $k$}. Upon reaching level $m = \log_2{n}$, we have obtained $n$ protocols $\{\pi_S\}_{|S| = m}$, each of which computes $f$.

\paragraph{Decomposition of Disadvantages.} By the first inequality of (\ref{eq:one-level-decomp}), we see that the average of $\varepsilon$'s across level $k+1$ decreases from that of level $k$ to a multiplicative factor of $0.99$. As a consequence, the average $\varepsilon$'s for the leaf level $m = \log_2{n}$ (i.e. where the protocols are for single-instance of $f$) is at most $(0.99)^{\log_2{n}} \varepsilon < n^{-0.01}$.

\paragraph{Decomposition of Information Costs.} The breakdown of information costs is much more intricate. To illustrate potential outcomes following Equation (\ref{eq:one-level-decomp}), let us assume that the $\varepsilon$ values are well-balanced: for $S$, we consider $\varepsilon_{S0} \approx \varepsilon_{S1} \approx 0.99 \varepsilon_S$. This implies that $\varepsilon_S \approx (0.99)^{|S|} \varepsilon$. Under this assumption, the \emph{total} information costs of the protocols at level $k+1$ increase from those at level $k$ by a factor of $\exp(0.99^k \varepsilon)$. Overall, this accumulation leads to a blow-up factor of $O(1)$ thanks to the geometric sum. Therefore, we can expect the sum of information costs at level $m = \log_2{n}$ to be $O(\mathcal{I})$.

By averaging arguments, there must exist a protocol $\eta$ at the leaf level $m = \log_2{n}$ with an information cost of $O(\mathcal{I}/n)$ and an error probability of $n^{-0.01}$.

Nonetheless, the argument outlined above encounters several technical difficulties.

\begin{enumerate}
    \item[(1)] The calculation assumes a balanced split of information costs across all decompositions; however, this cannot be guaranteed.\footnote{One might be tempted to use the average of $\varepsilon$'s across level $k$ as a proxy to the exponent of the blowups from level $k$ to level $k+1$, but to the best of our attempt, this approach fails due to the convexity of $e^x$.}
\end{enumerate}

Our earlier argument can be interpreted as a probabilistic proof: we sample an index $S$ uniformly from $[n]$, and in expectation, $\pi_S$ has small disadvantage and low information cost. To address issue (1), we instead sample $S$ from a carefully-constructed distribution $\mathcal{D}$ where $\D(S)$ is \emph{proportional} to the probability of all conditioning events attached to $\pi_S$ along the recursive decomposition. We will then show that in expectation over $S \sim \D$, a protocol $\pi_S$ exhibits the desired properties: it has small disadvantage of $n^{-0.01}$ and an information cost of approximately $O(\mathcal{I}/n)$. 

Recall that under our assumption, the information cost of $\pi_S$ does not account for the missing bits, denoted $B_S$, which include all the ``final bits'' along the recursive decomposition. Let us now address this assumption by arguing about the information cost incurred by $B_S$. Since each level of the decomposition adds one extra bit to the protocol, and $\pi_S$ is at level $m = \log_2{n}$, the length of $B_S$ must be of $\log_2 n$ bits. Naively, this means $B_S$ could contain up to $\log_2 n$ bits of information, which is too costly. To correct this, it can be shown that $B_S$ contains, in expectation, only $O(1)$ information about the inputs $(X_S, Y_S)$ at coordinate $S$. Therefore, the ``true'' information cost of the protocol $\pi_S$ is bounded by $O\left(\frac{\mathcal{I}}{n} + 1\right)$.

However, we are not finished yet; we still encounter additional issues:

\begin{itemize}
    \item[(2a)] $\pi_S$ is not a \emph{standard} communication protocol.
    \item[(2b)] The input distribution of $\pi_S$ is no longer $\mu$.
\end{itemize}

Let us first elaborate on issue (2a). In \emph{standard} communication protocols, each player is to generate the next message based solely on their input and the past messages exchanged (as well as their private and public randomness.) For instance, the protocol $\pi$ is standard due to the set-up of Lemma \ref{lem:main_lemma}. In contrast, $\pi \mid W$ does not meet this criterion because the conditioning event $W$ can introduce arbitrary correlations between Alice's messages and Bob's input, and vice versa. Protocols that allow such correlations are referred to as \emph{generalized} communication protocols. As a result, $\pi_S$ is no longer a standard protocol, yet it remains a generalized protocol.

Issue (2b) stems from a similar source: the sequence of conditioning events, denoted by $E$, that are recursively applied throughout the decomposition. This conditioning can distort the input distributions of $\pi_S$ from $\mu$ to $\mu \mid E$ in unpredictable ways.

Fortunately, both issues can be resolved simultaneously. The underlying intuition is that on average, each conditioning event occurs with a probability very close to 1, suggesting that the overall distortion induced by $E$ is minimal in expectation. Consequently, we can expect the protocol $\pi_S$ to be ``close'' to a standard protocol while operating on an input distribution that is ``close'' to $\mu$.

Somewhat-more formally, we can augment the protocol $\pi_S$ with another desirable set of properties: it is statistically-close to some standard protocol $\eta$ (in terms of KL-Divergence) whose input distribution is precisely $\mu$. Importantly, the protocol $\eta$ maintains an error probability of $n^{-0.001}$ over input distribution $\mu$ and achieves information cost of at most $C \cdot \left(\frac{\mathcal{I}}{n} + o_n(1) + 1\right)$ for some absolute constant $C$. In summary, the \emph{standard} protocol $\eta$ exhibits all the required properties, thereby proving Lemma \ref{lem:main_lemma}.

Notably, due to technical reasons, the final description of our the conditional decomposition procedure must deviate from the overview provided here. Nevertheless, the overall flow and main ideas of the proof remain largely the same.

\subsection{Key Differences from \texorpdfstring{\cite{Yu22}}{Yu22}}
\label{subsec:distinction}

As briefly mentioned, to the best of our understanding, the techniques from Yu's work do not extend the distributional XOR lemma to our regime, where $(\rho, \rho') = \left(\frac{9}{10}, 1 - \frac{1}{n}\right)$. At a glance, both our approach and Yu's adopt a similar strategy: recursively applying (conditional) decomposition until we obtain a protocol $\eta$ for computing $f$ with low ``cost'' and small distributional error. However, the sequence of conditioning events $E$ inevitably affects the distribution, distorting it in an unpredictable way. The key distinction lies in the specification of the conditioning events used in each paper.

To expand on this, at each level of decomposition, Yu’s approach involves a conditioning event that occurs with constant probability $O(1)$. Accumulating across all levels, $E$ occurs with probability $O(1)$ on average. This poses a fatal challenge in our regime, where we can tolerate only polynomially-small distributional error for $f$. To understand why, we recognize that the guarantee of “small distributional error'' of $\eta$ is evaluated against its own the input distribution $\mu \mid E$ , rather than the ``true'' input distribution $\mu$. Consequently, the error of $\eta$ with respect to $\mu$ is:
\begin{align*}
    & \Pr_{(x,y) \sim \mu}\left(\eta \text{ errs on }(x,y)\right) \\
    & = \Pr(E) \cdot \Pr_{(x,y) \sim \mu \mid E}\left(\eta \text{ errs on }(x,y) \right) + \Pr(\overline{E}) \cdot \Pr_{(x,y) \sim \mu \mid \overline{E}}\left(\eta \text{ errs on }(x,y) \mid E\right) \\
    & \geq \Pr(\overline{E}) \cdot \Pr_{(x,y) \sim \mu \mid \overline{E}}\left(\eta \text{ errs on }(x,y)\right)
\end{align*}
Since we have no guerantees over $\Pr_{(x,y) \sim \mu \mid \overline{E}}\left(\eta \text{ errs on } (x,y)\right)$, this probability can be as large as $\Omega(1)$, causing such error to be as large as $\Pr(\overline{E}) = \Omega(1)$.

In contrast, in our work, we propose a set of ``simple'' conditioning events, each of which occurs with probability $1 - o(1)$. These events result in $\Pr(E) = 1 - \frac{1}{\text{poly}(n)}$ on average. In this scenario, the error of $\eta$ with respect to the input distribution $\mu$ is polynomially-bounded:
\begin{align*}
    & \Pr_{(x,y) \sim \mu}\left(\eta \text{ errs on }(x,y)\right) \\
    & = \Pr(E) \cdot \Pr_{(x,y) \sim \mu \mid E}\left(\eta \text{ errs on }(x,y) \right) + \Pr(\overline{E}) \cdot \Pr_{(x,y) \sim \mu \mid \overline{E}}\left(\eta \text{ errs on }(x,y) \mid E\right) \\
    & \leq \Pr_{(x,y) \sim \mu \mid E}\left(\eta \text{ errs on }(x,y)\right) + \Pr(\overline{E}) \\
    & \leq \frac{1}{\text{poly}(n)}  \tag{$\eta$ errs w.p. $\frac{1}{\text{poly}(n)}$ on $\mu \mid E$, and $\Pr(\overline{E}) = \frac{1}{\text{poly}(n)}$}
\end{align*}
which within the desired range of distributional error. This rough calculation plays an important role in addressing issue (2b) in Section \ref{sec:standardize}.

In short, the amount of distortion in Yu's decomposition is too large to manage in the regime where we allow only polynomially small error for computing $f$, whereas our decomposition incurs only low distortion that remains manageable.

Another distinction between our approach and Yu's is that, at each level of the recursive procedure, we split a protocol for $f^{\oplus k}$ into two protocols for computing $f^{\oplus k/2}$, whereas Yu’s approach splits it into two protocols: one for computing $f$ and one for computing $f^{\oplus (k-1)}$. Such a ``binary'' decomposition is necessary to ensure the intersection of all events we condition on across the levels has high probability.

\subsection{Paper Organization}
In Section \ref{sec:prelim}, we establish the conventions used throughout this paper and review basic information theory concepts. Section \ref{sec:formalize_protocols} introduces key notations and properties of communication protocols, along with various cost functions relevant to our proofs. In Section \ref{sec:lemmas}, we prove essential lemmas. Section \ref{sec:decomposition} presents our recursive decomposition and addresses issue (1). In Section \ref{sec:standardize}, we tackle issues (2a) and (2b) simultaneously, thereby completing the proof of Lemma \ref{lem:main_lemma}. Finally, in Section \ref{sec:pfs_xor}, we use Lemma \ref{lem:main_lemma} to prove our main theorems. Note that some algebraically intensive proofs are deferred to Appendix \ref{appendix:missing_proofs}.

\section{Preliminaries}
\label{sec:prelim}

For variables $X$, $Y$, and $Z$, we write $X \perp Y \mid Z$ to indicate that $X$ and $Y$ are independent when conditioned on $Z$. Similarly, for an event $E$, we write $X \perp Y \mid E$ to indicate that $X$ and $Y$ are independent given that $E$ occurs. We may write the joint distributions interchangeably by $X,Y$ or $XY$.

Following standard notation, an uppercase letter represents a variable, while the corresponding lowercase letter denotes its value. For a distribution $\pi$ supported over multiple variables, let $\pi(X)$ represent the marginal distribution of $X$. For a value $x$, let $\pi(x)$ denote the probability that $X = x$ under $\pi$. For an event $W$, let $\pi(W)$ denote the probability of $W$ occurring under the distribution $\pi$. We also define the conditional analogues: let $\pi(X \mid Y)$ be the distribution of $X\mid Y$, let $\pi(X \mid y)$ represent the distribution of $X$ conditioned on $Y = y$, and let $\pi(X \mid W)$ denote the distribution of $X$ conditioned on event $W$ occurring. We write $X \sim \pi$ and $X \sim \pi \mid W$ to indicate sampling $X$ from the distribution $\pi(X)$ and the conditional distribution $\pi(X \mid W)$, respectively.

For a function $f: \mathcal{X} \times \mathcal{Y} \rightarrow \{0,1\}$ and any natural number $n$, we denote by $f^{\oplus n} : \mathcal{X}^n \times \mathcal{Y}^n \rightarrow \{0,1\}$ the function such that
$$
f^{\oplus n}(X_1, \dots, X_n, Y_1, \dots, Y_n) = f(X_1, Y_1) \oplus \dots \oplus f(X_n, Y_n).
$$
Occasionally, we may refer to $f^{\oplus n}$ simply as $f^{\oplus}$, omitting the number of XORed instances.

We also define \emph{advantage} and \emph{disadvantage} of $\{0,1\}$-valued random variables.

\begin{definition}[Advantage and Disadvantage.]
Let $\pi$ be a distribution and $b$ be a $\{0,1\}$-valued random variable. Denote the \emph{advantage} of $b$ with respect to $\pi$ to be $$\adv^{\pi}(b) := |2 \cdot \pi(b = 0) - 1| = |2 \cdot \pi(b = 1) - 1|.$$ 
Moreover, for any event $W$, we might write $\adv^{\pi}(b \mid W)$ and $\adv^{\pi \mid W}(b)$ interchangeably. Conversely, denote the \emph{disadvantage} of $b$ with respect to $\pi$ to be $1-\adv^{\pi}(b).$
\end{definition}

Note that disadvantage of $b$ is in fact twice the error probability of predicting $b$ with its more-likely value among $\{0,1\}$. To see this, suppose that $b$ takes on the value $0$ with probability $\frac{1}{2} + \frac{\alpha}{2}$. Then, by predicting $b$ with $0$ (i.e. its more-likely bit), we err with probability $\frac{1}{2} - \frac{\alpha}{2}$, which is half of $b$'s disadvantage.

\begin{fact} Let $\pi$ be a distribution, and let $b_1$ and $b_2$ be independent $\{0,1\}$-valued random variables in the same probability space $\pi$. Then, we have
$$\adv^{\pi}(b_1 \oplus b_2) =\adv^{\pi}(b_1) \cdot \adv^{\pi}(b_2).$$
\label{fact:xor_ind_bits}
\end{fact}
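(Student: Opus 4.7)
The plan is to carry out the same calculation that appeared in the introduction (in the derivation of $\Pr(\hat{p}\oplus \hat{q} = p\oplus q) = \tfrac{1}{2} + \tfrac{\alpha_p\alpha_q}{2}$), but with signs tracked carefully so that the absolute values come out correctly. First I would parameterize the marginals: write $\pi(b_i = 0) = \tfrac{1}{2} + \tfrac{\sigma_i \alpha_i}{2}$ and $\pi(b_i = 1) = \tfrac{1}{2} - \tfrac{\sigma_i \alpha_i}{2}$, where $\alpha_i := \adv^{\pi}(b_i) \in [0,1]$ and $\sigma_i \in \{-1,+1\}$ records which of the two values of $b_i$ is more likely. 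This reparameterization ensures $\alpha_i = |2\pi(b_i = 0) - 1|$ as required by the definition.

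Next, using $b_1 \perp b_2$ under $\pi$, I would expand
\begin{align*}
   \pi(b_1 \oplus b_2 = 0) &= \pi(b_1 = 0)\,\pi(b_2 = 0) + \pi(b_1 = 1)\,\pi(b_2 = 1) \\
   &= \left(\tfrac{1}{2} + \tfrac{\sigma_1 \alpha_1}{2}\right)\left(\tfrac{1}{2} + \tfrac{\sigma_2 \alpha_2}{2}\right) + \left(\tfrac{1}{2} - \tfrac{\sigma_1 \alpha_1}{2}\right)\left(\tfrac{1}{2} - \tfrac{\sigma_2 \alpha_2}{2}\right) \\
   &= \tfrac{1}{2} + \tfrac{\sigma_1 \sigma_2 \alpha_1 \alpha_2}{2},
\end{align*}
where the cross terms $\pm \tfrac{\sigma_1 \alpha_1}{4}$ and $\pm \tfrac{\sigma_2 \alpha_2}{4}$ cancel pairwise. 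Then by the definition of advantage,
\[
\adv^{\pi}(b_1 \oplus b_2) = \left|2\pi(b_1 \oplus b_2 = 0) - 1\right| = |\sigma_1 \sigma_2 \alpha_1 \alpha_2| = \alpha_1 \alpha_2 = \adv^{\pi}(b_1)\cdot \adv^{\pi}(b_2),
\]
using $\sigma_1, \sigma_2 \in \{-1,+1\}$ and $\alpha_1, \alpha_2 \ge 0$.

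There is no real obstacle here; the only step that requires any care is making sure the definition of advantage, which involves an absolute value, is handled by the sign variables $\sigma_1, \sigma_2$ so that the final identity does not inadvertently pick up an undesired sign. Independence of $b_1$ and $b_2$ is used exactly once, in the factorization of the joint probabilities into a product of marginals.
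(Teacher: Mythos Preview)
Your proof is correct and follows essentially the same approach as the paper's: parameterize each marginal as $\tfrac{1}{2}+\tfrac{\beta_i}{2}$ (the paper leaves $\beta_i$ signed, you factor it as $\sigma_i\alpha_i$), expand $\pi(b_1\oplus b_2=0)$ using independence, observe the cross terms cancel, and take the absolute value at the end. The explicit sign-tracking via $\sigma_i$ is a minor cosmetic difference, not a different idea.
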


\begin{proof}
    Let $\Pr(b_1 = 0) = \frac{1}{2} + \frac{\beta_1}{2}$ and $\Pr(b_2 = 0) = \frac{1}{2} + \frac{\beta_2}{2}$. This means the advantage of $b_1$ and $b_2$ are $|\beta_1|$ and $|\beta_2|$ respectively. Then, 
    \begin{align*} \Pr(b_1 \oplus b_2 = 0) & = \Pr(b_1 = 0 \land b_2 = 0) + \Pr(b_1 = 1 \land b_2 = 1) \\
    & = \Pr(b_1 = 0) \cdot \Pr(b_2 = 0) + \Pr(b_1 = 1) \cdot \Pr(b_2 = 1) \tag{$b_1 \perp b_2$}\\
    & = \left(\frac{1}{2} + \frac{\beta_1}{2}\right) \cdot \left(\frac{1}{2} + \frac{\beta_2}{2}\right) +  \left(\frac{1}{2} - \frac{\beta_1}{2}\right) \cdot \left(\frac{1}{2} - \frac{\beta_2}{2}\right) \\
    & = \frac{1}{2} + \frac{\beta_1\beta_2}{2}
    \end{align*}
    meaning that the advantage of $b_1 \oplus b_2$ is $|\beta_1\beta_2|$ which is exactly the product of the advantages of $b_1$ and $b_2$.
\end{proof}

\subsection{Basic Information Theory}
\label{subsec:info_theory}

For the following set of definitions, let $X,Y,Z$ be arbitrary discrete variables in the probability space $\pi$.

\begin{definition}[Entropy and Conditional Entropy] We denote:
\begin{itemize}
    \item The \emph{entropy} of $X$ is defined as:
$$H(X) = \E_{x \sim \pi} \log \frac{1}{\pi(x)} = \sum_{x} \pi(x) \cdot \log \frac{1}{\pi(x)}.$$
We may abuse the notion that $0 \cdot \log \frac{1}{0} = 0$, or equivalently only consider the summation over $x \in supp(X)$.
    \item The \emph{conditional entropy} of $X \mid Y$ is defined as:
$$H(X \mid Y) = \E_{y \sim \pi} H(X \mid Y = y)$$
\end{itemize}
\end{definition}

\begin{theorem}[Chain Rule for Entropy] $H(X, Y) = H(X) + H(Y \mid X) = H(Y) + H(X \mid Y).$
\end{theorem}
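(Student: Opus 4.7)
The statement is the standard chain rule for entropy, and the plan is to prove it by a direct expansion of the definition, using the product rule for probabilities $\pi(x,y) = \pi(x)\,\pi(y\mid x)$.

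First I would start from the definition:
\[
H(X,Y) \;=\; \sum_{x,y} \pi(x,y)\,\log\frac{1}{\pi(x,y)}.
\]
Substituting $\pi(x,y)=\pi(x)\,\pi(y\mid x)$ (for $(x,y)$ in the support) and using $\log\frac{1}{ab} = \log\frac{1}{a} + \log\frac{1}{b}$, this splits as
\[
H(X,Y) \;=\; \sum_{x,y} \pi(x,y)\log\frac{1}{\pi(x)} \;+\; \sum_{x,y}\pi(x,y)\log\frac{1}{\pi(y\mid x)}.
\]
Next I would simplify each piece. For the first sum, marginalizing out $y$ via $\sum_y \pi(x,y) = \pi(x)$ gives exactly $\sum_x \pi(x)\log\frac{1}{\pi(x)} = H(X)$. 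For the second sum, grouping by $x$ and using $\pi(x,y) = \pi(x)\,\pi(y\mid x)$ yields
\[
\sum_x \pi(x)\sum_y \pi(y\mid x)\log\frac{1}{\pi(y\mid x)} \;=\; \sum_x \pi(x)\,H(Y\mid X=x) \;=\; H(Y\mid X),
\]
by definition of conditional entropy. Combining the two pieces gives $H(X,Y) = H(X) + H(Y\mid X)$.

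Finally, the second equality $H(X,Y) = H(Y) + H(X\mid Y)$ follows by symmetry: the joint entropy $H(X,Y)$ is manifestly symmetric in the two variables (it is a function of the joint distribution $\pi(X,Y)$, not of the ordering), so the same argument with the roles of $X$ and $Y$ swapped yields the result. There is no real obstacle here; the only thing to be slightly careful about is the convention that $0\log\frac{1}{0} = 0$, which ensures the sums are well-defined and that terms with $\pi(x)=0$ or $\pi(y\mid x)=0$ contribute zero and can be safely ignored, so that the identity $\pi(x,y)=\pi(x)\,\pi(y\mid x)$ may be applied throughout the support.
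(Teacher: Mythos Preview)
Your proof is correct and is the standard direct derivation from the definitions. The paper itself states this theorem without proof (it is listed among the basic information-theory preliminaries), so there is no paper proof to compare against; your argument is exactly the kind of routine verification one would expect.
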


Rearrange it, we have the definition of mutual information.

\begin{definition}[Mutual Information] The \emph{mutual information} between $X$ and $Y$ is defined as:
    $$I(X:Y) = H(X) - H(X \mid Y) = H(Y) - H(Y \mid X) = H(X) + H(Y) - H(X,Y).$$
\end{definition}

As a by-product, the entropy is subaddtive.

\begin{theorem}[Subadditivity of Entropy] It holds that 
$$H(X,Y) \leq H(X) + H(Y).$$
The equality is achieved when $X \perp Y$.
\end{theorem}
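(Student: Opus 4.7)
The plan is to reduce the inequality to the non-negativity of mutual information and then prove the latter via Jensen's (equivalently, Gibbs') inequality. First, I would apply the chain rule for entropy to rewrite $H(X,Y) = H(X) + H(Y \mid X)$, and then use the definition of mutual information to observe that $H(Y \mid X) = H(Y) - I(X:Y)$. Combining these two identities gives the key decomposition
\[
H(X,Y) = H(X) + H(Y) - I(X:Y),
\]
so the desired inequality $H(X,Y) \leq H(X) + H(Y)$ is equivalent to $I(X:Y) \geq 0$.

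Next, I would show $I(X:Y) \geq 0$ by expanding it as a relative entropy:
\[
I(X:Y) = \sum_{x,y} \pi(x,y) \log \frac{\pi(x,y)}{\pi(x)\pi(y)}.
\]
Applying Jensen's inequality to the concave function $\log$ (over the distribution $\pi(x,y)$ with values $\pi(x)\pi(y)/\pi(x,y)$), I would obtain
\[
-I(X:Y) = \sum_{x,y} \pi(x,y) \log \frac{\pi(x)\pi(y)}{\pi(x,y)} \leq \log \sum_{x,y} \pi(x)\pi(y) = \log 1 = 0,
\]
so $I(X:Y) \geq 0$, which proves subadditivity.

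Finally, for the equality case, I would invoke the strict concavity of $\log$: Jensen's inequality is tight precisely when $\pi(x)\pi(y)/\pi(x,y)$ is constant on the support of $\pi(x,y)$, and since the sums above force the constant to be $1$, this means $\pi(x,y) = \pi(x)\pi(y)$ for all $(x,y)$ in the support, i.e., $X \perp Y$. The main (and only) subtle point is the Jensen step; everything else is bookkeeping from the chain rule and the definition of mutual information already recalled in the excerpt, so I expect no real obstacle.
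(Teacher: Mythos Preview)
Your proposal is correct and follows exactly the route the paper has in mind: the paper states subadditivity immediately after defining $I(X:Y) = H(X)+H(Y)-H(X,Y)$ as ``a by-product'' and gives no further argument, so the implicit proof is precisely your reduction to $I(X:Y)\ge 0$. You in fact go further than the paper by supplying the Jensen/Gibbs step for non-negativity and the equality characterization, which the paper simply takes for granted.
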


\begin{definition}[Conditional Mutual Information and Chain Rule]
The \emph{conditional mutual information} is defined as:
$$I(Y:Z \mid X) = I(XY : Z) - I(X:Z).$$
Rearranging it yields a \emph{chain rule} for mutual information:
$$I(XY : Z) =  I(X:Z) + I(Y:Z \mid X).$$
\end{definition}

The following definition measures the closeness of distributions.

\begin{definition}[KL-Divergence and Total Variation Distance] Let $\pi(X)$ and $\eta(X)$ distributions over the variable $X$. Denote the following distances between the two distributions.
\begin{enumerate}
    \item \emph{KL-Divergence}: $\dkl{\pi(X)}{\eta(X)} = \E_{x \sim \pi(X)} \log \frac{\pi(x)}{\eta(x)}$
    \item \emph{Total Variation Distance}: $\|\pi(X) - \eta(X)\| = \sum_{x} | \pi(x) - \eta(x)|$.
\end{enumerate} 
\end{definition}

Remarks that for the KL-Divergence, we shall write $\dkl{\pi(X)}{\eta(X)}$ and $\dklhor{\pi(X)}{\eta(X)}$ interchangeably. For both distance functions, we might drop their variables when the context is clear.

\begin{lemma}[Pinsker's Inequality] For any distributions $\pi$ and $\eta$, we have:
$$\|\pi - \eta\| = O\left(\sqrt{\dklhor{\pi}{\eta}}\right).$$
    
\end{lemma}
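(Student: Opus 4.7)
The plan is to reduce the general statement to the two-point (Bernoulli) case via a data-processing style argument, and then prove the Bernoulli version by elementary calculus. This is the classical route to Pinsker's inequality and it fits the paper's conventions, where $\|\pi-\eta\| = \sum_x |\pi(x)-\eta(x)|$ (i.e.\ twice the usual TV distance).

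First I would single out the set that witnesses the total variation distance. Let $A = \{x : \pi(x) \ge \eta(x)\}$ and write $p = \pi(A)$, $q = \eta(A)$. The standard identity then gives
$$\|\pi - \eta\| = \sum_x |\pi(x) - \eta(x)| = 2\bigl(\pi(A) - \eta(A)\bigr) = 2(p-q).$$
By the log-sum inequality applied separately on $A$ and on $A^c$, coarsening the alphabet can only decrease KL-divergence, so
$$\dklhor{\pi}{\eta} \;\ge\; p \log\frac{p}{q} + (1-p)\log\frac{1-p}{1-q} \;=:\; d(p\,\|\,q).$$
Thus it suffices to prove the scalar Pinsker bound $d(p\|q) \ge 2(p-q)^2$, since combining the two displays yields $\|\pi-\eta\|^2 = 4(p-q)^2 \le 2\, d(p\|q) \le 2\, \dklhor{\pi}{\eta}$, which is the claimed $O(\sqrt{\dklhor{\pi}{\eta}})$ bound.

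For the scalar inequality, I would fix $p \in [0,1]$ and define $g(q) = d(p\|q) - 2(p-q)^2$ for $q \in (0,1)$, aiming to show $g \ge 0$. A direct computation gives
$$g'(q) = -\frac{p}{q} + \frac{1-p}{1-q} + 4(p-q) = \frac{q-p}{q(1-q)} + 4(p-q) = (p-q)\left(4 - \frac{1}{q(1-q)}\right).$$
Since $q(1-q) \le 1/4$, the second factor is nonpositive, so $g'(q)$ has the opposite sign of $(p-q)$. Hence $g$ is nonincreasing on $(0,p]$ and nondecreasing on $[p,1)$, so it is minimized at $q = p$, where $g(p) = 0$. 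This establishes $d(p\|q) \ge 2(p-q)^2$ and completes the argument.

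The main obstacle is really just this scalar inequality: the reduction step is mechanical given the log-sum inequality, but the calculus argument has to be carried out carefully (in particular, invoking $q(1-q) \le 1/4$ at the right moment). The boundary cases where $q \in \{0,1\}$ but $p \notin \{0,1\}$ can be handled by noting that $d(p\|q) = \infty$ there, trivially dominating $2(p-q)^2 \le 2$; and if $p = q$ both sides vanish.
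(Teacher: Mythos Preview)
Your proof is correct and follows the classical route: reduce to the Bernoulli case via the log-sum inequality, then verify the scalar bound $d(p\|q)\ge 2(p-q)^2$ by the one-variable calculus argument using $q(1-q)\le 1/4$. The paper itself does not prove this lemma at all; it is simply stated as a standard fact in the information-theory preliminaries, so there is nothing to compare against beyond noting that your argument is the textbook proof and gives the explicit constant (indeed $\|\pi-\eta\|^2 \le 2\,\dklhor{\pi}{\eta}$), which is more than the paper's $O(\cdot)$ statement requires.
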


\section{Formalizing Communication Protocols}
\label{sec:formalize_protocols}

In this section, we define communication protocol and introduce various definitions which will be used throughout the paper. It is important to emphasize that we do not claim novelty regarding these definitions, propositions, theorems, or their proofs. However, they serve as the foundations for our work.

\subsection{Distributional View of Communication Protocols} 
\label{subsec:dist_view_protocols}

Recall that in the two-player communication model, the players' task is to compute a function $f: \mathcal{X} \times \mathcal{Y} \rightarrow \{0,1\}$ using a (randomized) protocol $\pi$ that dictates a sequence of messages $M$. Suppose that the input pair $(X,Y)$ is drawn from some distribution $\mu$. We can examine $\pi$ as a distribution over a set of random variables $(X,Y,\M)$, where $\pi(X,Y)$ represents the input distribution $\mu$, and $\M$ governs the public randomness $M^0$, and the sequence of messages $M^+ = (M^1, M^2, \ldots, M^r)$. More precisely, we interpret the \emph{standard} communication protocol as a distribution over these variables.

\begin{definition}[Standard Communication Protocols] Let $f: \mathcal{X} \times \mathcal{Y} \rightarrow \{0,1\}$, $\mu$ be an input distribution to $f$, and $\rho \in (0,1)$ be the error parameter. We say $\pi = (X,Y,\M)$ is a \emph{standard protocol} for computing $f$ on input distribution $\mu$ with probability $\rho$ iff
\begin{itemize}
    \item $\pi(x,y) = \mu(x, y)$
    \item $\M = (M^0, M^1, M^2,...,M^r)$ consists of public randomness $M^0$ and a sequence of messages $M^+ = (M^1, M^2,...,M^r)$ such that each $M^i$ only depends on $M^{<i}$ and the \underline{sender's} input. 
    \item Towards the end of the protocol, Alice and Bob output $M^r$ which correctly computes $f(x, y)$ with probability $\rho$.
\end{itemize}
\end{definition}
To expand on the last bullet, we assume that the last message sent by one of the players is their answer to $f(x, y)$. Any protocol can be converted to this form by having the first player who knows the answer send it to the other, adding only one additional bit to the message. We also assume that Alice sends the odd messages $(M^1,M^3,...)$ and Bob sends the even messages $(M^2, M^4,...)$.

A key characteristic of standard protocols is the restriction that each player generates a message based solely on their own inputs and the previously exchanged messages. Specifically, assuming Alice always speaks first, a standard protocol $\pi$ must satisfy the conditions $M^i \perp Y \mid X, M^{<i}$ for all odd $i$, and $M^i \perp X \mid Y, M^{<i}$ for all even $i$. In this work, we will explore an extended notion of standard protocols in which each message can be correlated with the receiver's inputs, referred to as a \emph{generalized protocol}.

\begin{definition}[Generalized Communication Protocols] Let $f: \mathcal{X} \times \mathcal{Y} \rightarrow \{0,1\}$, $\mu$ be an input distribution to $f$, and $\rho \in (0,1)$ be the error parameter. We say $\pi = (X,Y,\M)$ is a \emph{generalized protocol} for computing $f$ on input distribution $\mu$ with probability $\rho$ iff
\begin{itemize}
    \item $\pi(x,y) = \mu(x, y)$
    \item $\M = (M^0, M^1, M^2,...,M^r)$ consists of public randomness $M^0$ and a sequence of messages $M^+ = (M^1, M^2,...,M^r)$ such that each $M^i$ only depends on $M^{<i}$ and the \underline{both players'} input. 
    \item Towards the end of the protocol, Alice and Bob output $M^r$ which correctly computes $f(x, y)$ with probability $\rho$.
\end{itemize}
\end{definition}

\subsection{Information Costs and Information Complexity}
\label{subsec:info_costs}

We now define our resource of interest: \emph{information}.

\begin{definition}[Information Cost of a Protocol] For a (standard or generalized) protocol 
 $\pi = (X,Y,\M)$, we denote the \emph{(internal) information cost} of $\pi$ to be: 
 \begin{align*} \IC(\pi)  & = I(\M:X \mid YM_0) + I(\M:Y \mid XM_0) \\ & = I(M^+:X \mid YM_0) + I(M^+:Y \mid XM_0).
 \end{align*}
\end{definition}

To reason about the information cost, let us examine the first term. In Bob's view, at the end of the protocol he learns the message $M^+$, while already knows his own input $Y$ and the public randomness $M_0$. Hence, the amount of information that he gains of Alice's input $X$ is exactly is $I(M:X \mid YM_0)$. We also have the symmetric term for Alice's gain. In other words, the \emph{information cost} captures the amount of information that both parties learns from executing a protocol, or equivalently the amount of information that the protocol reveals to the players.

The following set of notions are borrowed from \cite{Braverman15}.

\begin{definition}[Distributional Information Complexity] Let $f$ be a $\{0,1\}$-valued function and $\varepsilon > 0$. Let $\mu$ be an input distribution. Then, the \emph{distributional information complexity} of $\pi$ of a function $f$ with error $\varepsilon$ and distribution $\mu$ is

$$\IC_{\mu}(f, \varepsilon) = \inf_{\pi; \Pr_{(x,y) \sim \mu} (\pi(x,y) \ne f(x,y) ) \leq \varepsilon} \IC(\pi).$$
\label{def:dist_ic}
\end{definition}

\begin{definition}[Max-Distributional Information Complexity]
    Let $f$ be a $\{0,1\}$-valued function and $\varepsilon > 0$. The \emph{max-distributional information complexity} of a function $f$ with error $\varepsilon$ is
    $$\IC_{\mathsf{D}}(f, \varepsilon) = \max_{\mu} \IC_{\mu}(f, \varepsilon).$$
\end{definition}

\begin{definition}[Information Complexity]
    Let $f$ be a $\{0,1\}$-valued function and $\varepsilon > 0$. The \emph{information complexity} of a function $f$ with error $\varepsilon$ is
    $$\IC(f, \varepsilon) = \inf_{\pi \text{ that errs w.p. at most $\varepsilon$ on any inputs}} \max_{\mu} \IC_{\mu}(\pi).$$
\label{def:ic}
\end{definition}

In other words, the \emph{information complexity} of $f$ is defined as the information cost of the best protocol that solves $f$ with a probability of failure at most $\varepsilon$, evaluated against its worst-case input distribution. By definition, it is trivial to see that $\IC(f, \varepsilon) \geq \IC_{\mathsf{D}}(f,\varepsilon)$. The following (minimax) theorem, by setting $\alpha = \frac{1}{2}$, implies that they are asymptotically-equivalent.

\begin{theorem}[Theorem 3.5 of \cite{Braverman15}]
    Let $f$ be any function, and let $\varepsilon > 0$. For each value of $\alpha \in (0,1)$, we have
    $$\IC(f, \frac{\varepsilon}{\alpha}) \leq \frac{\IC_{\mathsf{D}}(f, \varepsilon)}{1-\alpha}.$$
\label{thm:minimax_ic}
\end{theorem}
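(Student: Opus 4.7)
The plan is to prove this minimax-type statement by combining information cost and error into a single Lagrangian, then applying a minimax theorem to find one protocol that handles every input distribution simultaneously. The payoff to balance is
\[
g(\pi,\mu) \;:=\; \IC_\mu(\pi) + \lambda \cdot \mathrm{err}_\mu(\pi),
\]
where $\mathrm{err}_\mu(\pi) = \Pr_{(x,y)\sim \mu}[\pi(x,y)\neq f(x,y)]$ and $\lambda>0$ will be tuned at the end. First I would bound the inner minimum: for every fixed $\mu$, Definition~\ref{def:dist_ic} gives a protocol $\pi_\mu$ with $\IC_\mu(\pi_\mu) \le \IC_\mathsf{D}(f,\varepsilon)$ and $\mathrm{err}_\mu(\pi_\mu) \le \varepsilon$, so $\max_\mu \min_\pi g(\pi,\mu) \le \IC_\mathsf{D}(f,\varepsilon) + \lambda\varepsilon$.

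Next, I would enlarge the feasible set of protocols to include mixtures (realized by prepending fresh public randomness that selects which underlying protocol to run). Both $\IC_\mu(\cdot)$ and $\mathrm{err}_\mu(\cdot)$ are linear in the mixing weights, so $g$ becomes linear in this enlarged space. Applying a minimax theorem would swap the order of the quantifiers and produce a single mixed protocol $\pi^\star$ such that, for every input distribution $\mu$,
\[
\IC_\mu(\pi^\star) + \lambda \cdot \mathrm{err}_\mu(\pi^\star) \;\le\; \IC_\mathsf{D}(f,\varepsilon) + \lambda\varepsilon.
\]
From this I would extract two consequences. Setting $\mu$ to be arbitrary gives $\IC_\mu(\pi^\star) \le \IC_\mathsf{D}(f,\varepsilon) + \lambda\varepsilon$. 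Setting $\mu$ to be a point mass $\delta_{(x,y)}$ forces $\IC_\mu(\pi^\star)=0$, yielding $\Pr[\pi^\star(x,y)\neq f(x,y)] \le \varepsilon + \IC_\mathsf{D}(f,\varepsilon)/\lambda$ for every input $(x,y)$, i.e.\ a worst-case error bound. Finally, the choice $\lambda := \alpha\cdot \IC_\mathsf{D}(f,\varepsilon)/(\varepsilon(1-\alpha))$ makes the two bounds collapse exactly to $\IC_\mathsf{D}(f,\varepsilon)/(1-\alpha)$ and $\varepsilon/\alpha$, matching the statement when $\pi^\star$ is plugged into Definition~\ref{def:ic}.

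The main obstacle I expect is justifying the minimax swap. On the protocol side everything is clean: $g(\cdot,\mu)$ is linear after allowing public-randomness mixtures, so convex-concave on that coordinate is automatic. On the distribution side, $\mathrm{err}_\mu(\pi)$ is linear in $\mu$, but $\IC_\mu(\pi) = I(M:X\mid Y) + I(M:Y\mid X)$ viewed as a function of $\mu(X,Y)$ with the channel $\pi(M\mid X,Y)$ fixed is not concave in general, since conditional mutual information is neither convex nor concave in the joint input distribution. To get around this, I would work on a finite-dimensional restriction — bound the number of rounds and message alphabet so that the protocol space is compact and $g$ is continuous — apply Sion's theorem (which only needs quasi-concavity in $\mu$, and one can verify this holds after the above restriction, or equivalently reformulate as an LP whose strong duality yields the swap) — and then pass to the limit to lift the bounded-complexity assumption, since any protocol achieving $\IC_\mathsf{D}(f,\varepsilon)$ up to $o(1)$ can be approximated by bounded ones.
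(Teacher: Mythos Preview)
The paper does not prove this statement; it is quoted from \cite{Braverman15} without proof, so there is no ``paper's own proof'' to compare against. Your overall strategy---Lagrangian payoff $g(\pi,\mu)=\IC_\mu(\pi)+\lambda\,\mathrm{err}_\mu(\pi)$, minimax to get a single universal $\pi^\star$, point-mass trick for worst-case error, then tune $\lambda$---is exactly Braverman's argument, and the arithmetic with your choice of $\lambda$ checks out.

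Your one substantive worry, however, is misplaced. You write that ``$\IC_\mu(\pi)$ \ldots\ is not concave in general,'' and then propose workarounds (quasi-concavity on a restriction, LP reformulation). In fact $\IC_\mu(\pi)$ \emph{is} concave in $\mu$ for a fixed protocol, and this is precisely what makes the minimax go through cleanly. To see it, write $\mu=\sum_i p_i\mu_i$, introduce a selector $T$ with $\Pr[T=i]=p_i$ and $(X,Y)\mid T=i\sim\mu_i$, and generate $M$ from $(X,Y)$ via the fixed protocol (so $M\perp T\mid XY$). Then
\[
I(M:X\mid Y)\;=\;I(M:X\mid YT)+I(M:T\mid Y)-I(M:T\mid XY)\;=\;\sum_i p_i\,I^{\mu_i}(M:X\mid Y)+I(M:T\mid Y)\;\ge\;\sum_i p_i\,I^{\mu_i}(M:X\mid Y),
\]
and symmetrically for the other term. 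Hence $g(\pi,\cdot)$ is concave in $\mu$ (concave plus linear), $g(\cdot,\mu)$ is linear in the mixing weights, and Sion's theorem applies directly once you restrict to a compact family of protocols (bounded communication over a finite alphabet) and pass to the limit. No LP reformulation or quasi-concavity verification is needed; you can simply delete that hedge and state the concavity fact.
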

\subsection{Operations on Protocols}
\label{subsec:ops}

Here, we define a set of operations that can be applied on a protocol, turning it into another protocol(s) with some desirable properties. The first operation decomposes a protocol $\pi$ into two $\pi_0$ and $\pi_1$, each running on a smaller set of inputs.

\begin{restatable}[Binary Protocol Decomposition]{defi}{decomp}
Let $\pi = (X, Y, \M)$ be a generalized protocol where $\M$ consists of public randomness $M^0$ and a sequence of messages $M^+ = (M^1,M^2,...,M^r)$. Let $X = (X_0,X_1)$ and $Y = (Y_0, Y_1)$ be the partition of input coordinates. The \emph{binary protocol decomposition} of $\pi$ yields two protocols $\pi_0$ and $\pi_1$ with the following distributions:
\begin{itemize}
    \item $\pi_0 = (X_0, Y_0, \M^{\pi_0})$ where $ \M^{\pi_0} = (M^0, Y_1 \circ M^1, M^2,..., M^r)$
    \item $\pi_1 = (X_1, Y_1, \M^{\pi_1})$ where $ \M^{\pi_1} = (M^0 \circ X_0, M^1 , M^2,..., M^r)$
\end{itemize}  
 \label{def:protocol_decomposition}
\end{restatable}

It is worth noting that we will eventually consider a slight variant of decomposition which, roughly speaking, applies the decomposition to a \emph{conditional} distribution $\pi \mid W$ for some event $W$. 

The next operation captures sending one additional message on top of a protocol.

\begin{definition}[Appending Messages]
Let $\pi = (X,Y,\M)$ be a generalized protocol, and let $B$ be an arbitrary distribution representing an additional message. Denote $\pi \odot B$ to be a generalized protocol where the players execute $\pi$, followed by sending $B$. Distribution-wise, this can be written as $$\pi \odot B = (X,Y, (\M,B)).$$
\end{definition}

It is worth remarking that the distributions of $\pi$ and $\pi \odot B$ over the variables $(X,Y, \M, B)$ are in fact identical. Their only distinction is that $\pi \odot B$ includes $B$ as a part of the messages, while $\pi$ does not. Hence, we can also write the distribution of $\pi \odot B$ as:
$$(\pi \odot B)(X,Y,\M,B) := \pi(X,Y,\M) \cdot \pi(B \mid X,Y,\M).$$

Last but not least, given a generalized protocol, we can convert it into a new standard protocol via the following procedure.

\begin{restatable}[Standarization]{defi}{standardize}
Let $\pi = (X,Y,\M)$ be a generalized protocol, and let $\mu$ be an arbitrary input distribution. 
Say $\pi' = \stdz(\pi, \mu)$ is the \emph{standardization of $\pi$ with respect to $\mu$} iff $\pi'$ admits the following distribution:
$$\pi'(X,Y,\M) = \pi(M^0) \cdot \mu(X,Y) \cdot \prod_{\text{odd }i \geq 1}  \pi(M^i \mid X M^{<i}) \cdot \prod_{\text{even }i \geq 2}\pi(M^i \mid Y M^{<i}).$$    
\end{restatable}

A key observation is that $\pi' = \stdz(\pi, \mu)$ is a standard protocol. This follows from the definition: in each odd round $i$, Alice generates her message $M^i$ from the distribution $\pi(M^i \mid X M^{<i})$, which depends solely on her input and the previous messages. In other words, each of Alice's message in $\pi'$ disregards any correlation with Bob's inputs. A similar argument applies to Bob's messages.

\subsection{Rectangle Properties}
\label{subsec:rectangle}

The \emph{rectangle property} is a fundamental aspect of communication protocols. The following set of notions were introduced in \cite{Yu22}.

\begin{definition}[Rectangle Property]
Let $\pi = (X, Y, \M)$ be a generalized protocol. We say $\pi$ has the \emph{rectangle property} with respect to $\mu$ iff there exists nonnegative functions $g_1$ and $g_2$ such that
$$\pi(X, Y, \M) = \mu(X, Y) \cdot g_1(X, \M) \cdot g_2(Y, \M).$$
\end{definition}

\begin{fact} Standard communication protocols admit the rectangle properties.
\end{fact}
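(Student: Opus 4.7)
The plan is to unfold the joint distribution of a standard protocol via the chain rule and then collapse each factor using the defining Markov constraints of standard protocols (each player's message depends only on their own input and the transcript so far). Concretely, I would begin by writing
\[
\pi(X, Y, \M) = \pi(X, Y) \cdot \pi(M^0 \mid X, Y) \cdot \prod_{i \geq 1} \pi(M^i \mid X, Y, M^{<i}),
\]
and recalling that, by definition, $\pi(X, Y) = \mu(X, Y)$ and the public randomness $M^0$ is independent of the inputs, so $\pi(M^0 \mid X, Y) = \pi(M^0)$.

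Next I would apply the standard-protocol constraint round-by-round. For odd $i$ (Alice's turn) the definition gives $M^i \perp Y \mid X, M^{<i}$, hence $\pi(M^i \mid X, Y, M^{<i}) = \pi(M^i \mid X, M^{<i})$; symmetrically, for even $i$ (Bob's turn) $\pi(M^i \mid X, Y, M^{<i}) = \pi(M^i \mid Y, M^{<i})$. Substituting these reductions yields
\[
\pi(X, Y, \M) = \mu(X, Y) \cdot \pi(M^0) \cdot \prod_{\text{odd } i \geq 1} \pi(M^i \mid X, M^{<i}) \cdot \prod_{\text{even } i \geq 2} \pi(M^i \mid Y, M^{<i}).
\]

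To finish, I would package the resulting factors into two nonnegative functions as required by the definition of the rectangle property:
\[
g_1(X, \M) := \pi(M^0) \cdot \prod_{\text{odd } i \geq 1} \pi(M^i \mid X, M^{<i}), \qquad g_2(Y, \M) := \prod_{\text{even } i \geq 2} \pi(M^i \mid Y, M^{<i}).
\]
Every factor inside $g_1$ depends on $X$ and transcript variables only (never on $Y$), and every factor inside $g_2$ depends on $Y$ and transcript variables only, while both are nonnegative as products of conditional probabilities. Plugging them into the previous display gives exactly $\pi(X, Y, \M) = \mu(X, Y) \cdot g_1(X, \M) \cdot g_2(Y, \M)$, which is the rectangle property.

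There is no genuine obstacle here: the only bookkeeping point is to place the public-randomness factor $\pi(M^0)$ on exactly one side (here in $g_1$) so that it is not double-counted, and to keep careful track of parity when splitting the product over rounds. The content of the proof is essentially the observation that the standard-protocol Markov conditions are precisely what is needed to rewrite $\pi(\M \mid X, Y)$ as a product of an $(X, \M)$-dependent piece and a $(Y, \M)$-dependent piece.
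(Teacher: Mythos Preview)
Your proof is correct and is the standard argument for this well-known fact. The paper states this as a Fact without proof, so there is nothing to compare against; your chain-rule expansion followed by the round-by-round application of the Markov constraints $M^i \perp Y \mid X, M^{<i}$ (odd $i$) and $M^i \perp X \mid Y, M^{<i}$ (even $i$) is exactly the canonical justification.
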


Recall that by Definition \ref{def:protocol_decomposition}, the protocol decomposition procedure splits inputs into two parts. To facilitate such partition, \cite{Yu22} also proposed the \emph{partial rectangle property}.
\begin{definition}[Partial Rectangle Property]
Let $\pi = (X, Y, \M)$ be a generalized protocol such that $X = (X_0, X_1)$ and $Y = (Y_0, Y_1)$. We say $\pi$ has the \emph{partial rectangle property} with respect to $\mu$ iff there exists nonnegative functions $g_1, g_2, g_3$ such that 
$$\pi(X,Y,\M) = \mu(X,Y) \cdot g_1(X, \M) \cdot g_2(Y,\M) \cdot g_3(X_0, Y_1, \M).$$
\label{def:partial_rec}
\end{definition}

It turns out that the partial rectangle property suffices to ensure the independence between inputs after the decomposition.

\begin{proposition}
    If a generalized protocol $\pi = (X,Y,\M)$ has the partial rectangle property with respect to $\mu = (\mu_0, \mu_1)$ where $\mu_0 \perp \mu_1$, then $X_1 \perp Y_0 \mid X_0Y_1\M$ in the distribution $\pi$. 
\label{prop:partial_rec_independent}
\end{proposition}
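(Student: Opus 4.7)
The plan is to unpack the partial rectangle property and observe that, once we condition on $X_0$, $Y_1$, and $\M$, the remaining joint density of $(X_1, Y_0)$ factors as a product of a function of $X_1$ alone and a function of $Y_0$ alone. Concretely, I would start from Definition \ref{def:partial_rec},
$$\pi(X,Y,\M) = \mu(X,Y) \cdot g_1(X, \M) \cdot g_2(Y,\M) \cdot g_3(X_0, Y_1, \M),$$
and use the hypothesis $\mu_0 \perp \mu_1$ to rewrite $\mu(X,Y) = \mu_0(X_0, Y_0) \cdot \mu_1(X_1, Y_1)$. This yields
$$\pi(X,Y,\M) = \mu_0(X_0, Y_0) \cdot \mu_1(X_1, Y_1) \cdot g_1(X_0, X_1, \M) \cdot g_2(Y_0, Y_1, \M) \cdot g_3(X_0, Y_1, \M).$$

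Next, I would fix arbitrary values $X_0 = x_0$, $Y_1 = y_1$, $\M = m$ and examine the conditional density. The factor $g_3(x_0, y_1, m)$ becomes a constant (absorbed into the normalization), and the remaining expression cleanly separates as
$$\pi(X_1, Y_0 \mid x_0, y_1, m) \;\propto\; \underbrace{\bigl[\mu_1(X_1, y_1) \cdot g_1(x_0, X_1, m)\bigr]}_{\text{depends only on } X_1} \;\cdot\; \underbrace{\bigl[\mu_0(x_0, Y_0) \cdot g_2(Y_0, y_1, m)\bigr]}_{\text{depends only on } Y_0}.$$
A standard fact in probability (a joint density that factors as a product of single-variable functions implies conditional independence) then yields $X_1 \perp Y_0 \mid X_0 Y_1 \M$ under $\pi$.

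Since this argument is essentially a careful bookkeeping of which variables each factor depends on, I do not anticipate any real obstacle. The only subtlety worth being explicit about is verifying that the $g_3$ factor — which is the "non-rectangular" term permitted by the partial rectangle property — depends solely on the conditioned variables $X_0$ and $Y_1$ (together with $\M$); this is precisely why the partial rectangle property is sufficient even though it is strictly weaker than the full rectangle property, and it is what forces the specific conditioning set $\{X_0, Y_1, \M\}$ in the conclusion.
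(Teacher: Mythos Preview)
Your proposal is correct and follows essentially the same approach as the paper: expand the joint density via the partial rectangle property, split $\mu(X,Y) = \mu_0(X_0,Y_0)\,\mu_1(X_1,Y_1)$ using $\mu_0 \perp \mu_1$, and observe that after conditioning on $X_0, Y_1, \M$ the remaining expression factors into a function of $Y_0$ alone times a function of $X_1$ alone. The paper writes the conditional density explicitly as the ratio $\pi(X,Y,\M)/\pi(X_0,Y_1,\M)$ rather than using your proportionality notation, but the argument is the same.
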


\begin{proof} It follows from the partial rectangular property that 
\begin{align*}
    \pi(X_1 Y_0 \mid X_0Y_1\M) & = \frac{\pi(X, Y, \M )}{\pi(X_0,Y_1,\M)} \\
    & = \frac{\mu(X,Y) \cdot g_1(X, M) \cdot g_2(Y,\M) \cdot g_3(X_0, Y_1, \M)}{\pi(X_0,Y_1,\M)} \\
    & = \left(\mu_0(X_0, Y_0) \cdot g_2(Y, \M)\right) \cdot \left(\mu_1(X_1, Y_1) \cdot g_1(X, \M) \cdot \frac{g_3(X_0, Y_1, \M)}{\pi(X_0,Y_1,\M)}\right).
\end{align*}

Now, conditioned on $X_0Y_1\M$, the first term only depends on $Y_0$ and the second term only depends on $X_1$. This proves the conditional independence.
\end{proof}

More importantly, the partial rectangle property of $\pi$ implies the rectangle property of its decomposed protocols if $\mu$ is a product distribution. The following lemma will be used recursively throughout our proofs.

\begin{lemma}
    Suppose that a generalized protocol $\pi = (X, Y, \M)$ has the partial rectangle property with respect to $\mu = (\mu_0, \mu_1)$ where $\mu_0 \perp \mu_1$. Let $\pi_0 = (X_0, Y_0, \M^{(\pi_0)})$ and $\pi_1= (X_1, Y_1, \M^{(\pi_1)})$ be generalized protocols obtained via decomposing $\pi$ (recall Definition \ref{def:protocol_decomposition}.) Then, $\pi_0$ has a rectangle property with respect to $\mu_0$, and $\pi_1$ has a rectangle property with respect to $\mu_1$.
\label{lem:decompose_then_rec}
\end{lemma}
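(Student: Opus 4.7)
The plan is to verify the rectangle factorizations of $\pi_0$ and $\pi_1$ directly by marginalization, exploiting the crucial observation that decomposition \emph{moves} certain input variables ($Y_1$ in the case of $\pi_0$, and $X_0$ in the case of $\pi_1$) into the message transcript of the smaller protocol. Once that is noticed, $g_3(X_0, Y_1, \M)$, which is the ``non-rectangular'' factor in the partial rectangle property, becomes a one-sided factor in each decomposed protocol.

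First I would combine the partial rectangle property of $\pi$ with the product structure $\mu(X,Y) = \mu_0(X_0,Y_0)\,\mu_1(X_1,Y_1)$ to write
\[
\pi(X, Y, \M) \;=\; \mu_0(X_0,Y_0)\,\mu_1(X_1,Y_1)\,g_1(X,\M)\,g_2(Y,\M)\,g_3(X_0,Y_1,\M).
\]
For $\pi_0$, whose variables are $X_0, Y_0$ and $\M^{\pi_0} = (M^0, Y_1\circ M^1, M^2, \ldots, M^r)$, I would marginalize out $X_1$ (and $X_1$ only, since $Y_1$ is retained as part of $\M^{\pi_0}$):
\[
\pi_0(X_0, Y_0, \M^{\pi_0}) \;=\; \mu_0(X_0,Y_0)\,g_2(Y_0,Y_1,\M)\,g_3(X_0,Y_1,\M)\,\sum_{X_1}\!\bigl[\mu_1(X_1,Y_1)\,g_1(X_0,X_1,\M)\bigr].
\]
Since $Y_1$ and $\M$ are both components of $\M^{\pi_0}$, the product $g_3(X_0,Y_1,\M)\cdot\sum_{X_1}\!\bigl[\mu_1(X_1,Y_1)\,g_1(X_0,X_1,\M)\bigr]$ is a function solely of $X_0$ and $\M^{\pi_0}$, while $g_2(Y_0,Y_1,\M)$ is a function solely of $Y_0$ and $\M^{\pi_0}$. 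This is exactly the rectangle property of $\pi_0$ with respect to $\mu_0$.

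The argument for $\pi_1$ is symmetric: its transcript $\M^{\pi_1} = (M^0 \circ X_0, M^1, \ldots, M^r)$ absorbs $X_0$, so I would marginalize out $Y_0$ to get
\[
\pi_1(X_1, Y_1, \M^{\pi_1}) \;=\; \mu_1(X_1,Y_1)\,g_1(X_0,X_1,\M)\,g_3(X_0,Y_1,\M)\,\sum_{Y_0}\!\bigl[\mu_0(X_0,Y_0)\,g_2(Y_0,Y_1,\M)\bigr].
\]
Now $g_1(X_0,X_1,\M)$ becomes a function of $X_1$ and $\M^{\pi_1}$, while $g_3(X_0,Y_1,\M)\cdot\sum_{Y_0}\!\bigl[\mu_0(X_0,Y_0)\,g_2(Y_0,Y_1,\M)\bigr]$ becomes a function of $Y_1$ and $\M^{\pi_1}$.

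There is no serious obstacle here beyond careful bookkeeping; the content of the proof is the observation that the very purpose of the partial rectangle property (allowing the ``mixed'' factor $g_3(X_0,Y_1,\M)$) is precisely to survive exactly one round of decomposition, because after decomposition $X_0$ is private to one subprotocol and $Y_1$ is private to the other. The product structure $\mu_0 \perp \mu_1$ is needed only to make the partial-sum factors one-sided; without it the remaining $\mu$-factor would couple $X_0,Y_0$ with $X_1,Y_1$ and the rectangle structure would be destroyed.
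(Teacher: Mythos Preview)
Your proposal is correct and follows essentially the same approach as the paper: marginalize out $X_1$ (resp.\ $Y_0$), use $\mu = \mu_0 \cdot \mu_1$, and observe that because $Y_1$ (resp.\ $X_0$) has been absorbed into the transcript, the mixed factor $g_3$ becomes one-sided. The only cosmetic difference is that the paper leaves $g_3$ inside the sum over $X_1$ (harmless, since $g_3$ does not depend on $X_1$) and omits the explicit $\pi_1$ computation, whereas you pull $g_3$ out and spell out both cases.
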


\begin{proof} We forst prove the rectangle propertiy of $\pi_0$. Consider
\begin{align*}
    \pi_0(X_0, Y_0, \M^{(\pi_0)}) & = \pi(X_0, Y, \M) \\
    & = \sum_{X_1} \pi(X = (X_0, X_1), Y, \M) \\
    & = \sum_{X_1} \mu(X, Y) \cdot g_1(X, \M) \cdot g_2(Y,\M) \cdot g_3(X_0, Y_1, \M) \\
    & = \mu_0(X_0, Y_0) \cdot g_2(Y,\M) \cdot \left(\sum_{X_1} \mu_1(X_1, Y_1) \cdot g_1(X, \M) \cdot g_3(X_0, Y_1, \M)\right).
\end{align*}
Notice that the second term is a function of $Y_0, \M^{(\pi_0)}$. The third term is a function of $X_0, \M^{(\pi_0)}$. This concludes rectangle property of $\pi_0$. The proof of the rectangle property of $\pi_1$ also follows closely.
\end{proof}

\subsection{\texorpdfstring{$\theta$}{O}-cost and \texorpdfstring{$\gamma$}{r}-cost}
\label{subsec:costs}

The $\theta$-cost of a generalized protocol $\pi$ (with respect to $\mu$) roughly measure a combination of two distances: the closeness of $\pi$ to $\stdz(\pi, \mu)$, and the closeness of its input distribution of $\pi(X,Y)$ to $\mu$.\footnote{Our version of $\theta$-cost and $\gamma$-cost are equivalent to the logarithmic version of the $\theta$-cost and $\chi^2$-cost from \cite{Yu22}, respectively.}

\begin{definition}[Pointwise-$\theta$-cost]
For a generalized protocol $\pi = (X,Y,\M)$, input distribution $\mu$, and points $(X,Y,\M)$, denote the \emph{pointwise-$\theta$-cost} of $\pi$ with respect to $\mu$ at $(X,Y,\M)$ by the following quantity:
    \begin{align*}
    \theta_{\mu}(\pi @ X,Y,\M) & = \log \frac{\pi(X,Y \mid M^0)}{\mu(X,Y)} + \sum_{\text{odd } i \geq 1} \log{\frac{\pi(M^i \mid X, Y, M^{<i})}{\pi( M^i \mid X, M^{<i})}} + \sum_{\text{even } i \geq 2} \log{\frac{\pi(M^i \mid X, Y, M^{<i})}{\pi( M^i \mid Y, M^{<i})}} \\
    & = \log \left(\frac{\pi(X, Y, \M)}{ \pi(M^0) \cdot \mu(X,Y) \cdot \prod_{\text{odd } i \geq 1} \pi( M^i \mid X, M^{<i}) \cdot \prod_{\text{even } i \geq 2} \pi( M^i \mid Y, M^{<i})}\right) \\
    & = \log \frac{\pi(X,Y,\M)}{\eta(X,Y,\M)}
\end{align*}
where $\eta = \stdz(\pi, \mu)$ is the standardization of $\pi$ with respect to $\mu$.
\end{definition}

We can also define the $\theta$-cost of a protocol as follows.

\begin{definition}[$\theta$-cost]
Let $\pi = (X,Y,\M)$ be a generalized protocol. The \emph{$\theta$-cost} of $\pi$ with respect to $\mu$ is defined as
$$\theta_\mu(\pi) = \E_{(X,Y,\M) \sim \pi} \theta_{\mu}(\pi@ X,Y,\M).$$
\label{def:theta}
\end{definition}

By expanding the pointwise-$\theta$-cost, the following fact is immediate.

\begin{fact} Let $\pi = (X,Y,\M)$ be a generalized protocol, and let $\mu$ be an input distribution. Let $\eta = \stdz(\pi, \mu)$ be the standardization of $\pi$ with respect to $\mu$. Then, we have
$$\theta_\mu(\pi) = \dkl{\pi(X,Y,\M)}{\eta(X,Y,\M)}.$$
\label{fact:theta_dkl}
\end{fact}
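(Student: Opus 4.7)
The plan is essentially to unfold the definitions and observe that Fact 16 is an immediate consequence of how the pointwise-$\theta$-cost was defined together with the definition of KL-divergence. Concretely, the third line in the definition of the pointwise-$\theta$-cost already records the identity
\[
\theta_{\mu}(\pi @ X,Y,\M) \;=\; \log \frac{\pi(X,Y,\M)}{\eta(X,Y,\M)},
\]
and Definition 15 sets $\theta_\mu(\pi) = \E_{(X,Y,\M) \sim \pi}\, \theta_{\mu}(\pi @ X,Y,\M)$, so taking expectations of both sides under $\pi$ and comparing with the definition of KL-divergence finishes the argument.

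First I would verify the closed form for the pointwise-$\theta$-cost that is implicit in its definition. Starting from the chain-rule expansion
\[
\pi(X,Y,\M) \;=\; \pi(M^0)\,\pi(X,Y \mid M^0)\,\prod_{\text{odd } i\ge 1}\pi(M^i\mid X,Y,M^{<i})\,\prod_{\text{even } i\ge 2}\pi(M^i\mid X,Y,M^{<i}),
\]
one sees that the denominator inside the logarithm is exactly the analogous product with $\pi(X,Y\mid M^0)$ replaced by $\mu(X,Y)$ and each conditional message factor replaced by its one-sided conditional; and this is precisely $\eta(X,Y,\M)$ by the definition of $\stdz(\pi,\mu)$. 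Thus the sum of log-ratios telescopes to $\log \bigl(\pi(X,Y,\M) / \eta(X,Y,\M)\bigr)$, confirming the third line of the definition.

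Next I would take the expectation under $\pi$ on both sides:
\[
\theta_\mu(\pi) \;=\; \E_{(X,Y,\M)\sim \pi}\, \log \frac{\pi(X,Y,\M)}{\eta(X,Y,\M)},
\]
which is precisely $\dklhor{\pi(X,Y,\M)}{\eta(X,Y,\M)}$ by the definition of KL-divergence given earlier in Section 4.1.

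There is no real obstacle here; the only thing worth being careful about is confirming that the product expansion of $\eta(X,Y,\M)$ from the definition of $\stdz(\pi,\mu)$ matches the denominator in the closed-form expression for $\theta_\mu(\pi @ X,Y,\M)$, and that the support of $\pi$ is contained in that of $\eta$ so that the logarithms are well-defined (which holds because every conditional factor appearing in $\eta$ is a marginal of $\pi$, hence strictly positive on the support of $\pi$). Once this bookkeeping is done the identity is immediate.
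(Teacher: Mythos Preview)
Your proposal is correct and matches the paper's approach exactly: the paper simply states that the fact is immediate ``by expanding the pointwise-$\theta$-cost,'' and you have spelled out precisely that expansion, noting that the third line of the pointwise-$\theta$-cost definition already gives $\log(\pi/\eta)$ and then taking the expectation under $\pi$ yields the KL-divergence.
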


\begin{observation} If $\pi$ is a standard protocol with input distribution $\pi(x,y) = \mu$, then $\theta_{\mu}(\pi) = 0$.
\label{obs:std_theta_zero}
\end{observation}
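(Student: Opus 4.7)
The plan is to verify the observation by directly unpacking the pointwise-$\theta$-cost and showing that each logarithmic term vanishes when $\pi$ is standard with input marginal $\mu$. Equivalently, using Fact \ref{fact:theta_dkl}, the goal reduces to proving $\pi(X,Y,\M) = \eta(X,Y,\M)$ as distributions, where $\eta = \stdz(\pi,\mu)$; since KL-divergence between identical distributions is $0$, the observation follows.

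First I would write out $\pi(X,Y,\M)$ using the chain rule in the natural message order:
\[
\pi(X,Y,\M) \;=\; \pi(M^0)\cdot\pi(X,Y \mid M^0)\cdot\prod_{i\geq 1}\pi(M^i\mid X,Y,M^{<i}).
\]
Since $M^0$ is public randomness sampled independently of the inputs, $\pi(X,Y\mid M^0)=\pi(X,Y)$, and by hypothesis this equals $\mu(X,Y)$. Next, I would invoke the defining independence properties of a standard protocol: for odd $i$, $M^i \perp Y \mid X,M^{<i}$, which gives $\pi(M^i\mid X,Y,M^{<i})=\pi(M^i\mid X,M^{<i})$; for even $i$, $M^i \perp X \mid Y,M^{<i}$, which gives $\pi(M^i\mid X,Y,M^{<i})=\pi(M^i\mid Y,M^{<i})$. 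Substituting these back into the chain-rule expansion yields
\[
\pi(X,Y,\M) \;=\; \pi(M^0)\cdot\mu(X,Y)\cdot\prod_{\text{odd }i\geq 1}\pi(M^i\mid X,M^{<i})\cdot\prod_{\text{even }i\geq 2}\pi(M^i\mid Y,M^{<i}),
\]
which is exactly the definition of $\eta(X,Y,\M)=\stdz(\pi,\mu)(X,Y,\M)$.

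Thus $\pi=\eta$ pointwise, so the pointwise-$\theta$-cost $\log(\pi(X,Y,\M)/\eta(X,Y,\M))$ is identically zero, and taking expectation under $\pi$ yields $\theta_\mu(\pi)=0$. There is no real obstacle here; the only thing to be careful about is handling $M^0$ (ensuring it is independent of the inputs in the standard-protocol formalism) and being explicit that the two independence properties of standard protocols are precisely what convert the conditioning from $(X,Y,M^{<i})$ to the sender-only form used in $\stdz$.
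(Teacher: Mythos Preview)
Your proposal is correct and is exactly the natural verification the paper has in mind; the paper states this as an Observation without proof, relying implicitly on Fact~\ref{fact:theta_dkl} and the definition of $\stdz$, and your argument spells out precisely why $\pi=\stdz(\pi,\mu)$ when $\pi$ is standard with input marginal $\mu$.
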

It turns out that the pointwise-$\theta$-cost admit linearity when undergoing a decomposition.

\begin{restatable}[Linearity of pointwise-$\theta$-cost]{lem}{ptwisetheta} Let $\pi = (X,Y,\M)$ be a generalized protocol with the partial rectangle property with respect to $\mu = (\mu_0, \mu_1)$. Let $\pi_0 = (X_0, Y_0, \M^{(\pi_0)})$ and $\pi_1= (X_1, Y_1, \M^{(\pi_1)})$ be generalized protocols obtained via decomposing $\pi$ (recall Definition \ref{def:protocol_decomposition}.)  Then, for any point $(X,Y, \M)$, we have
$$\theta_{\mu}( \pi @ X,Y,\M) = \theta_{\mu_{0}}( \pi_{0} @ X_0,Y_0,\M^{(\pi_0)}) + \theta_{\mu_{1}}( \pi_{1} @ X_1,Y_1,\M^{(\pi_1)}).$$
\label{lem:pointwise_linear_theta}   
\end{restatable}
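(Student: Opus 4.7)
My plan is to apply the second (ratio) form of the pointwise-$\theta$-cost, namely $\theta_\mu(\pi @ X,Y,\M) = \log\tfrac{\pi(X,Y,\M)}{\eta(X,Y,\M)}$ with $\eta = \stdz(\pi,\mu)$, and the analogous identities for $\pi_0,\pi_1$. The claimed additivity then reduces to the pointwise multiplicative identity
\[
\pi(X,Y,\M)\cdot\eta_0(X_0,Y_0,\M^{(\pi_0)})\cdot\eta_1(X_1,Y_1,\M^{(\pi_1)}) \;=\; \eta(X,Y,\M)\cdot\pi_0(X_0,Y_0,\M^{(\pi_0)})\cdot\pi_1(X_1,Y_1,\M^{(\pi_1)}),
\]
where $\eta_0 = \stdz(\pi_0,\mu_0)$ and $\eta_1 = \stdz(\pi_1,\mu_1)$; equivalently, I want to show $\eta_0\eta_1/\eta = \pi_0\pi_1/\pi$ under the natural identification of points.

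First I would unpack the three standardizations using their defining product form. The crucial observation is that $\pi_0$ and $\pi_1$ are merely marginals of $\pi$: specifically, $\pi_0(X_0,Y_0,\M^{(\pi_0)}) = \pi(X_0,Y,\M)$ and $\pi_1(X_1,Y_1,\M^{(\pi_1)}) = \pi(X,Y_1,\M)$. Consequently, every conditional probability appearing in $\eta_0$ or $\eta_1$ whose conditioning set is consistent with the marginalization coincides with the same conditional under $\pi$; for example $\pi_0(Y_1,M^1\mid X_0,M^0) = \pi(Y_1,M^1\mid X_0,M^0)$, $\pi_0(M^i\mid X_0,Y_1,M^{<i}) = \pi(M^i\mid X_0,Y_1,M^{<i})$ for odd $i\ge 3$, and symmetric identities hold on the $\pi_1$ side.

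Next I would form the ratio $\eta_0\eta_1/\eta$ and cancel. The factor $\pi(M^0)$, the input factor $\mu(X,Y) = \mu_0(X_0,Y_0)\mu_1(X_1,Y_1)$, the first-round Alice factor $\pi(M^1\mid X,M^0)$, and the matching products $\prod_{\text{odd }i\ge 3}\pi(M^i\mid X,M^{<i})$ and $\prod_{\text{even }i\ge 2}\pi(M^i\mid Y,M^{<i})$ all cancel between numerator and denominator. What survives is
\[
\pi(X_0,M^0)\cdot\pi(Y_1,M^1\mid X_0,M^0)\cdot\prod_{i\ge 2}\pi(M^i\mid X_0,Y_1,M^{<i}),
\]
and by a direct chain-rule telescoping this product equals the marginal $\pi(X_0,Y_1,\M)$.

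Finally, the remaining identity $\pi(X_0,Y_1,\M)\cdot\pi(X,Y,\M) = \pi(X_0,Y,\M)\cdot\pi(X,Y_1,\M)$ that I must verify is exactly the conditional independence $X_1 \perp Y_0 \mid X_0Y_1\M$ under $\pi$, which is \Cref{prop:partial_rec_independent} applied to the partial rectangle property of $\pi$ together with $\mu_0 \perp \mu_1$. The main obstacle is the round-level bookkeeping in the cancellation step: the asymmetric structure of the decomposition (attaching $Y_1$ to Alice's first message in $\pi_0$, and attaching $X_0$ to the public randomness of $\pi_1$) means one must carefully match each first-round factor of $\eta_0\eta_1$ against the corresponding factor of $\eta$. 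Once that matching is verified, the telescoping and the concluding appeal to the rectangle independence are both immediate.
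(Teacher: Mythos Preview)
Your proposal is correct and follows essentially the same route as the paper: expand the standardizations of $\pi,\pi_0,\pi_1$, use that $\pi_0(X_0,Y_0,\M^{(\pi_0)}) = \pi(X_0,Y,\M)$ and $\pi_1(X_1,Y_1,\M^{(\pi_1)}) = \pi(X,Y_1,\M)$ to rewrite all conditionals in terms of $\pi$, telescope what remains to $\pi(X_0,Y_1,\M)$, and finish with $X_1 \perp Y_0 \mid X_0Y_1\M$ from Proposition~\ref{prop:partial_rec_independent}. The only cosmetic difference is that you package the computation as the multiplicative identity $\eta_0\eta_1/\eta = \pi_0\pi_1/\pi$, whereas the paper directly expands $\theta_{\mu_0}(\pi_0@\cdot)+\theta_{\mu_1}(\pi_1@\cdot)$ and compares denominators; the underlying cancellations and the concluding independence step are identical.
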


Next, we discuss the $\gamma$-cost which roughly measures the amount of information that each parties reveals about their inputs via the messages.

\begin{definition}[Pointwise-$\gamma$-cost] For a generalized protocol $\pi = (X,Y,\M)$, input distribution $\mu$, and points $(X,Y,\M)$, denote the \emph{pointwise-$\gamma$-cost} of $\pi$ with respect to $\mu$ at $(X,Y,\M)$ by the following quantities:
\begin{align*}
    \gamma_{\mu, A}(\pi @ X,Y,\M) & = \log \frac{\pi(X \mid Y \M)}{\mu(X \mid Y)}\\
    \gamma_{\mu, B}(\pi @ X,Y,\M) & = \log \frac{\pi(Y \mid X \M)}{\mu(Y \mid X)}.
\end{align*}
\end{definition}

Similar to the $\theta$-cost, we can define the $\gamma$-cost of a protocol. Additionally, the pointwise-$\gamma$-cost also admit linearity when undergoing a decomposition.

\begin{definition}[$\gamma$-cost]
Let $\pi$ be a generalized protocol. The \emph{$\gamma$-cost} of $\pi$ with respect to $\mu$ is defined as
$$\gamma_\mu(\pi) = \E_{(X,Y,\M) \sim \pi} \gamma_{\mu, A}(\pi @ X,Y, \M) + \E_{(X,Y,\M) \sim \pi} \gamma_{\mu, B}(\pi @ X,Y, \M).$$
\label{def:delta}
\end{definition}

\begin{restatable}[Linearity of pointwise-$\gamma$-cost]{lem}{ptwisedelta} Let $\pi = (X,Y,\M)$ be a generalized protocol with the partial rectangle property with respect to $\mu = (\mu_0, \mu_1)$. Let $\pi_0 = (X_0, Y_0, \M^{(\pi_0)})$ and $\pi_1= (X_1, Y_1, \M^{(\pi_1)})$ be generalized protocols obtained via decomposing $\pi$ (recall Definition \ref{def:protocol_decomposition}.)  Then, for any point $(X,Y, \M)$, we have
\begin{align*}
    \gamma_{\mu, A}(\pi @ X,Y, \M) & =  \gamma_{\mu_0, A}(\pi_0 @ X_0,Y_0, \M^{(\pi_0)}) + \gamma_{\mu_1, A}(\pi_1 @ X_1,Y_1, \M^{(\pi_1)}) \\
    \gamma_{\mu, B}(\pi @ X,Y, \M) & =  \gamma_{\mu_0, B}(\pi_0 @ X_0,Y_0, \M^{(\pi_0)}) + \gamma_{\mu_1, B}(\pi_1 @ X_1,Y_1, \M^{(\pi_1)}).
\end{align*}
\label{lem:pointwise_linear_delta}
\end{restatable}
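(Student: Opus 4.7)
The plan is to prove the claim for $\gamma_{\mu,A}$ directly and observe that $\gamma_{\mu,B}$ follows by symmetry. First, I would rewrite the messages of the decomposed protocols in a more convenient form: since $\M^{(\pi_0)} = (M^0, Y_1 \circ M^1, M^2, \ldots, M^r)$, conditioning on $\M^{(\pi_0)}$ together with $Y_0$ is the same as conditioning on $(Y, \M)$; thus $\pi_0(X_0 \mid Y_0\,\M^{(\pi_0)}) = \pi(X_0 \mid Y\,\M)$. Similarly, since $\M^{(\pi_1)} = (M^0 \circ X_0, M^1, \ldots, M^r)$, conditioning on $(Y_1, \M^{(\pi_1)})$ is the same as conditioning on $(X_0, Y_1, \M)$, so $\pi_1(X_1 \mid Y_1\,\M^{(\pi_1)}) = \pi(X_1 \mid X_0\,Y_1\,\M)$.

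Next, I would apply the partial rectangle property: by Proposition \ref{prop:partial_rec_independent}, $X_1 \perp Y_0 \mid X_0\,Y_1\,\M$ under $\pi$, which lets me upgrade the conditioning to the full $Y$, giving $\pi(X_1 \mid X_0\,Y_1\,\M) = \pi(X_1 \mid X_0\,Y\,\M)$. Combining this with the rewriting above and invoking the chain rule of conditional probability,
\begin{align*}
\pi_0(X_0 \mid Y_0\,\M^{(\pi_0)}) \cdot \pi_1(X_1 \mid Y_1\,\M^{(\pi_1)})
&= \pi(X_0 \mid Y\,\M) \cdot \pi(X_1 \mid X_0\,Y\,\M) \\
&= \pi(X \mid Y\,\M).
\end{align*}

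On the $\mu$ side, the partial rectangle property is taken with respect to $\mu = (\mu_0, \mu_1)$ with $\mu_0 \perp \mu_1$, so $\mu(X \mid Y) = \mu_0(X_0 \mid Y_0) \cdot \mu_1(X_1 \mid Y_1)$. Taking logs and summing the two decomposed $\gamma$-terms then yields $\gamma_{\mu,A}(\pi @ X,Y,\M)$ exactly. The argument for $\gamma_{\mu,B}$ is completely symmetric, using the analogous conditional independence $Y_1 \perp X_0 \mid X_1\,Y_0\,\M$ (also a consequence of Proposition \ref{prop:partial_rec_independent} applied with roles of players swapped) and the factorization $\mu(Y \mid X) = \mu_0(Y_0 \mid X_0) \cdot \mu_1(Y_1 \mid X_1)$.

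The only step with any real content is verifying that the conditioning variables in $\M^{(\pi_0)}$ and $\M^{(\pi_1)}$ together with the partial rectangle property let us insert/remove $Y_0$ (respectively $X_0$) from the conditioning without changing the distribution — that is, ensuring I am invoking Proposition \ref{prop:partial_rec_independent} in the correct direction. Everything else is bookkeeping with chain rule and the product structure of $\mu$, so no delicate estimates are needed.
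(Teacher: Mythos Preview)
Your proposal is correct and matches the paper's proof essentially step for step: rewrite the conditioning sets of $\M^{(\pi_0)}$ and $\M^{(\pi_1)}$, invoke $X_1 \perp Y_0 \mid X_0 Y_1 \M$ from Proposition~\ref{prop:partial_rec_independent}, and combine via the chain rule and the product structure of $\mu$. One small slip: for the $B$-side you do \emph{not} need a new independence $Y_1 \perp X_0 \mid X_1 Y_0 \M$ (which the stated partial rectangle property does not give); the same independence $Y_0 \perp X_1 \mid X_0 Y_1 \M$ is what lets you pass from $\pi(Y_0 \mid X_0 Y_1 \M)$ to $\pi(Y_0 \mid X Y_1 \M)$, and then the chain rule finishes it.
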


The proofs of Lemma \ref{lem:pointwise_linear_theta} and \ref{lem:pointwise_linear_delta} are deferred to the appendix.

\section{Useful Facts and Lemmas}
\label{sec:lemmas}

In this section, we present key facts and lemmas which will be needed later in the paper.

\begin{fact} For any $p \in (0,1)$, denote $\cH(p) = p \log \frac{1}{p} + (1-p) \log{\frac{1}{1-p}}$. Then, we have:
\begin{enumerate}
    \item $\cH$ is concave.
    \item For any $p \in (0,\frac{1}{2}]$, we have $\cH(p) \leq 2p \log \frac{1}{p}$.
\end{enumerate}
\end{fact}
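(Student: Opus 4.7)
Both parts are one‑variable calculus exercises, so I will describe the route rather than carry out the differentiations in full detail.

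For part~(1), the cleanest plan is simply to compute the second derivative. Starting from $\cH(p) = -p\log p - (1-p)\log(1-p)$, differentiating twice yields $\cH''(p) = -\frac{1}{p(1-p)}$, which is strictly negative on $(0,1)$. Hence $\cH$ is (strictly) concave on $(0,1)$, and in particular concave on $[0,1]$ by continuity at the endpoints. Alternatively, one can package this as ``concavity is closed under sums and affine reparametrization'': writing $\cH(p) = h(p) + h(1-p)$ with $h(x) = -x\log x$, the second derivative $h''(x) = -1/x < 0$ shows $h$ is concave, and so is $p \mapsto h(1-p)$, and so is their sum.

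For part~(2), I would reduce the inequality $\cH(p) \le 2p\log\tfrac{1}{p}$ to nonnegativity of the auxiliary function
$$
F(p) \;:=\; 2p\log\tfrac{1}{p} - \cH(p) \;=\; -p\log p + (1-p)\log(1-p),
$$
on $[0, 1/2]$. The proof proceeds in three short steps: (i) verify the boundary values $F(0) = 0$ (using the convention $0\log(1/0) = 0$, justified by $\lim_{x \to 0^+} x\log x = 0$) and $F(1/2) = 0$ (the two terms cancel by symmetry); (ii) compute $F''(p) = \frac{2p-1}{p(1-p)}$, which is strictly negative on $(0, 1/2)$, so $F$ is concave on that interval; (iii) invoke the elementary fact that a concave function on an interval which vanishes at both endpoints is nonnegative throughout the interval. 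This gives $F \geq 0$ on $[0, 1/2]$, equivalently $\cH(p) \leq 2p\log\tfrac{1}{p}$, as required.

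There is no real obstacle; the only mild care is the boundary behaviour at $p = 0$, handled by the limit above. It is worth noting that the bound is tight at \emph{both} endpoints $p = 0$ and $p = 1/2$ (where $F$ vanishes), which is precisely why the concavity‑plus‑vanishing‑endpoints argument is the natural proof strategy and explains why the constant $2$ in front of $p\log(1/p)$ cannot be improved.
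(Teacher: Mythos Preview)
Your proof is correct in both parts. Note, however, that the paper does not actually prove this Fact; it is stated without proof as a standard property of the binary entropy function. So there is no ``paper's own proof'' to compare against, and your argument stands on its own as a complete justification.
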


\begin{lemma} Let $A,B,C$ be variables and $E$ be an event which occurs with probability $p$. Then, 
    $$I(A:B \mid CE) \leq \frac{1}{p} \cdot \left[I(A:B \mid C) + \cH(p)\right].$$
\label{lem:ic_condition}
\end{lemma}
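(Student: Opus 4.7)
The plan is to introduce the indicator random variable $\mathbb{1}_E$ of the event $E$ and relate $I(A:B \mid CE)$ to $I(A:B \mid C\mathbb{1}_E)$, which is easier to bound against $I(A:B \mid C)$ since $\mathbb{1}_E$ is a single bit.

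First, I would decompose $I(A:B \mid C\mathbb{1}_E)$ by conditioning on the value of $\mathbb{1}_E$:
\begin{align*}
I(A:B \mid C\mathbb{1}_E) = p \cdot I(A:B \mid C E) + (1-p) \cdot I(A:B \mid C \bar{E}).
\end{align*}
Since mutual information is nonnegative, this yields
$$p \cdot I(A:B \mid CE) \leq I(A:B \mid C\mathbb{1}_E).$$

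Next, I would upper bound $I(A:B \mid C\mathbb{1}_E)$ in terms of $I(A:B \mid C)$. Using the chain rule for mutual information in two different orderings,
\begin{align*}
I(A: B\mathbb{1}_E \mid C)
&= I(A:B \mid C) + I(A:\mathbb{1}_E \mid BC) \\
&= I(A:\mathbb{1}_E \mid C) + I(A:B \mid C\mathbb{1}_E).
\end{align*}
Rearranging and dropping the nonnegative term $I(A:\mathbb{1}_E \mid C)$ gives
$$I(A:B \mid C\mathbb{1}_E) \leq I(A:B \mid C) + I(A:\mathbb{1}_E \mid BC) \leq I(A:B \mid C) + H(\mathbb{1}_E) = I(A:B \mid C) + \cH(p),$$
where the second inequality uses $I(A:\mathbb{1}_E \mid BC) \leq H(\mathbb{1}_E \mid BC) \leq H(\mathbb{1}_E)$, and the final equality uses $\Pr(E) = p$ so $H(\mathbb{1}_E) = \cH(p)$.

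Combining the two inequalities, $p \cdot I(A:B \mid CE) \leq I(A:B \mid C) + \cH(p)$, and dividing by $p$ gives the claim. The proof is elementary and routine; there is no real obstacle, merely the choice of introducing $\mathbb{1}_E$ as an auxiliary random variable so that ``conditioning on the event $E$'' can be handled via the chain rule.
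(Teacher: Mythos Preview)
Your proof is correct and follows essentially the same approach as the paper: introduce the indicator $\mathbbm{1}_E$, lower bound $I(A:B\mid C\mathbbm{1}_E)$ by $p\cdot I(A:B\mid CE)$, and upper bound it by $I(A:B\mid C)+\cH(p)$. The only cosmetic difference is that the paper expands $I(A:B\mid C\mathbbm{1}_E)$ via entropies directly, whereas you use the chain rule for mutual information; both arrive at the same bound.
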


\begin{proof} Observe that $I(A:B \mid C\mathbbm{1}_E) \geq p \cdot I(A:B \mid CE)$. Moreover, we have
\begin{align*}
    I(A:B \mid C\mathbbm{1}_E) & = H(A \mid C \mathbbm{1}_E) - H(A \mid BC \mathbbm{1}_E) \\
    & \leq H(A \mid C) - H(A \mid BC\mathbbm{1}_E) \\
    & = H(A \mid C) - \left[ H(A \mathbbm{1}_E\mid BC) - H(\mathbbm{1}_E \mid BC) \right] \\
    & \leq H(A \mid C) - \left[ H(A \mid BC) - H(\mathbbm{1}_E)\right] \\
    & = I(A:B \mid C) + H(\mathbbm{1}_E) \\
    & = I(A:B \mid C) + \cH(p).
\end{align*}
Combining two inequalities conclude the proof.
\end{proof}

\begin{lemma} Let $\pi$ and $\eta$ be probability distributions and $E$ be an event in the same probability space. Then, we have
    $$\dkl{\pi( X \mid E)}{\eta(X)} \leq \frac{1}{\pi(E)} \cdot  \left[\dkl{\pi(X)}{\eta(X)} + \cH(\pi(E))\right].$$
\label{lem:expected_kl_helper}
\end{lemma}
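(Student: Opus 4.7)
The plan is to mirror the proof of Lemma \ref{lem:ic_condition} by coupling $X$ with the indicator $\mathbbm{1}_E$ and applying the chain rule for KL-divergence in two different orders. Concretely, I will work with the joint distribution $\pi(X, \mathbbm{1}_E)$ and compare it against the product reference distribution $\eta(X) \otimes \pi(\mathbbm{1}_E)$, which under $\eta$ makes $X$ and $\mathbbm{1}_E$ independent with marginals $\eta(X)$ and $\mathrm{Ber}(\pi(E))$.

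First I would expand the divergence $\dklhor{\pi(X,\mathbbm{1}_E)}{\eta(X)\otimes \pi(\mathbbm{1}_E)}$ by conditioning on $X$: the chain rule produces $\dklhor{\pi(X)}{\eta(X)}$ plus the average KL between $\pi(\mathbbm{1}_E \mid X=x)$ and the marginal $\pi(\mathbbm{1}_E)$, which is exactly $I_\pi(X : \mathbbm{1}_E)$. Since $\mathbbm{1}_E$ is Bernoulli, this mutual information is bounded by $H(\mathbbm{1}_E) = \cH(\pi(E))$. This yields the upper bound
$$\dklhor{\pi(X,\mathbbm{1}_E)}{\eta(X)\otimes \pi(\mathbbm{1}_E)} \leq \dklhor{\pi(X)}{\eta(X)} + \cH(\pi(E)).$$

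Next I would expand the same joint divergence by conditioning on $\mathbbm{1}_E$ instead. The marginal term $\dklhor{\pi(\mathbbm{1}_E)}{\pi(\mathbbm{1}_E)}$ vanishes, and what remains is the convex combination
$$\pi(E)\cdot \dklhor{\pi(X\mid E)}{\eta(X)} + \pi(\overline{E})\cdot \dklhor{\pi(X\mid \overline{E})}{\eta(X)},$$
which is at least $\pi(E)\cdot \dklhor{\pi(X\mid E)}{\eta(X)}$ by non-negativity of the other term. Chaining this lower bound with the upper bound from the previous paragraph and dividing by $\pi(E)$ gives exactly the claimed inequality.

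I do not anticipate a serious obstacle: both chain-rule expansions are standard, and the only inequalities used are non-negativity of KL and $I(X:\mathbbm{1}_E)\leq H(\mathbbm{1}_E)$. The only mild care needed is the absolute-continuity requirement: the bound is meaningful only when $\eta(X)$ dominates $\pi(X\mid E)$ on its support, which is implied by $\dklhor{\pi(X)}{\eta(X)} < \infty$, so when the right-hand side is finite the left-hand side is automatically well-defined.
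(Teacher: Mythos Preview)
Your argument is correct. You compute $\dklhor{\pi(X,\mathbbm{1}_E)}{\eta(X)\otimes\pi(\mathbbm{1}_E)}$ two ways via the chain rule: conditioning on $X$ gives $\dklhor{\pi(X)}{\eta(X)}+I_\pi(X:\mathbbm{1}_E)\le \dklhor{\pi(X)}{\eta(X)}+\cH(\pi(E))$, while conditioning on $\mathbbm{1}_E$ gives $\pi(E)\,\dklhor{\pi(X\mid E)}{\eta(X)}+\pi(\overline E)\,\dklhor{\pi(X\mid \overline E)}{\eta(X)}\ge \pi(E)\,\dklhor{\pi(X\mid E)}{\eta(X)}$. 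Both steps are standard and the combination yields the lemma.

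This is a genuinely different packaging from the paper's proof. The paper works directly with the defining sum $\sum_x \pi(x)\log\frac{\pi(x)}{\eta(x)}$, splits it as $\pi(E)\sum_x \pi(x\mid E)\log\frac{\pi(x)}{\eta(x)}+\sum_x \pi(x\overline E)\log\frac{\pi(x)}{\eta(x)}$, and then lower-bounds each piece separately: the first via the pointwise inequality $\pi(x)\ge\pi(xE)$ (producing the $\log\pi(E)$ term), and the second via Jensen's inequality (producing the $\pi(\overline E)\log\pi(\overline E)$ term). Your route is more conceptual and mirrors the mutual-information argument of Lemma~\ref{lem:ic_condition} exactly, which makes the analogy between the two lemmas transparent; the paper's route is slightly more hands-on but avoids introducing the auxiliary product reference $\eta(X)\otimes\pi(\mathbbm{1}_E)$. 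Either proof is short and self-contained.
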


\begin{proof}
Consider
    \begin{align*}
        \dkl{\pi(X)}{\eta(X)} & = \sum_{x} \pi(x) \cdot \log{\frac{\pi(x)}{\eta(x)}} \\ 
        & = \pi(E) \cdot \left(\sum_{x} \pi(x \mid E) \cdot \log{\frac{\pi(x)}{\eta(x)}} \right)+ \left(\sum_{x} \pi(x\overline{E}) \cdot \log{\frac{\pi(x)}{\eta(x)}}\right).
    \end{align*}

Note that 
    \begin{align*}
        \sum_{x} \pi(x \mid E) \log{\frac{\pi(x)}{\eta(x)}} & \geq \sum_{x} \pi(x \mid E)  \log{\frac{\pi(x E)}{\eta(x)}} \\
        & = \sum_{x} \pi(x \mid E) \log{\frac{\pi(x \mid E) \cdot \pi(E)}{\eta(x)}} \\
        & = \log \pi(E) \cdot \sum_x \pi(x\mid E) +  \sum_{x} \pi(x \mid E) \log{\frac{\pi(x \mid E)}{\eta(x)}} \\
        & = \log \pi(E) + \dkl{\pi(X\mid E)}{\eta(X)}
    \end{align*}

    Also via Jensen,
    \begin{align*}
        \sum_{x} \pi(x\overline{E}) \cdot \log{\frac{\pi(x)}{\eta(x)}} & \geq  \sum_{x} \pi(x\overline{E}) \cdot \log{\frac{\pi(x\overline{E})}{\eta(x)}} \\
        & \geq \pi(\overline{E}) \cdot \log{(\pi(\overline{E})}) \tag{Jensen's}
    \end{align*}

    Combining the two inequalities yield the result.
\end{proof}

The next lemma offers its extension.

\begin{lemma} Let $\pi$ and $\eta$ be probability distributions and $E$ be an event in the same probability space. Then, we have
    $$\E_{y \sim \pi \mid E}\dkl{\pi( X \mid yE)}{\eta(X)} \leq \frac{1}{\pi(E)} \cdot  \left[\E_{y \sim \pi }\dkl{\pi(X \mid y)}{\eta(X)} + \cH(\pi(E))\right].$$
\label{lem:expected_kl}
\end{lemma}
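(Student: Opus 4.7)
The plan is to reduce this to Lemma \ref{lem:expected_kl_helper} by applying it conditionally on each value of $y$, and then to average the resulting bounds using concavity of the binary entropy function.

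First, I would fix an arbitrary $y$ in the support of $\pi(Y \mid E)$ and apply Lemma \ref{lem:expected_kl_helper} to the distribution $\pi(\cdot \mid y)$ (playing the role of $\pi$ in that lemma) with the event $E$. Since the probability of $E$ under $\pi(\cdot \mid y)$ is $\pi(E \mid y)$, this yields the per-$y$ bound
\[
\dkl{\pi(X \mid yE)}{\eta(X)} \;\leq\; \frac{1}{\pi(E\mid y)}\Bigl[\dkl{\pi(X\mid y)}{\eta(X)} + \cH(\pi(E\mid y))\Bigr].
\]

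Next, I would take expectation over $y \sim \pi \mid E$. Using Bayes' rule $\pi(y \mid E) = \pi(y)\cdot\pi(E\mid y)/\pi(E)$, the factor $\pi(E\mid y)$ in the denominator cancels against the one in $\pi(y\mid E)$, giving
\[
\E_{y \sim \pi \mid E}\dkl{\pi(X\mid yE)}{\eta(X)} \;\leq\; \frac{1}{\pi(E)}\Bigl[\E_{y\sim\pi}\dkl{\pi(X\mid y)}{\eta(X)} + \E_{y\sim\pi}\cH(\pi(E\mid y))\Bigr].
\]

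Finally, since $\cH$ is concave (Fact, item 1), Jensen's inequality gives
\[
\E_{y\sim\pi}\cH(\pi(E\mid y)) \;\leq\; \cH\bigl(\E_{y\sim\pi}\pi(E\mid y)\bigr) = \cH(\pi(E)),
\]
which plugged into the previous display yields the claimed bound. The only step that requires any care is the cancellation via Bayes' rule together with checking that points $y$ with $\pi(E \mid y) = 0$ contribute nothing on either side; these $y$ lie outside the support of $\pi(Y \mid E)$, so they can be excluded from the expectation. No step here appears to be a genuine obstacle — the argument is a direct ``conditioning and Jensen'' extension of the previous lemma.
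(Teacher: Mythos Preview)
Your proposal is correct and follows essentially the same approach as the paper: apply Lemma~\ref{lem:expected_kl_helper} pointwise for each $y$, reweight using Bayes' rule $\pi(y\mid E)=\pi(y)\pi(E\mid y)/\pi(E)$, and finish with Jensen's inequality on the concave function $\cH$. Your treatment of the Bayes cancellation is in fact slightly cleaner than the paper's (which separately uses $\pi(y\mid E)\leq \pi(y)/\pi(E)$ on the entropy term), but the argument is the same in substance.
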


\begin{proof}
    For any value of $y$, we have
    \begin{align*}
        \pi(y \mid E) \cdot \dkl{\pi(X \mid yE)}{\eta(X)} & \leq \pi(y \mid E) \cdot \left[\frac{1}{\pi(E \mid y)} \cdot \dkl{\pi(X \mid y)}{\eta(X)} + \cH(\pi(E \mid y))\right] \tag{Lemma \ref{lem:expected_kl_helper}}\\
        & = \frac{\pi(y)}{\pi(E)} \cdot \dkl{\pi(X \mid y)}{\eta(X)} + \pi(y \mid E) \cdot \cH(\pi(E \mid y))
    \end{align*}
    Summing over $y$, we have
    \begin{align*}
        \E_{y \sim \pi \mid E}\dkl{\pi(X \mid yE)}{\eta(X)}  & \leq \frac{1}{\pi(E)} \cdot \E_{y \sim \pi}\dkl{\pi(X \mid y)}{\eta(X)} + \sum_{y} \pi(y \mid E) \cdot \cH(\pi(E \mid y)) \\
        & \leq \frac{1}{\pi(E)} \cdot \E_{y \sim \pi}\dkl{\pi(X \mid y)}{\eta(X)}  + \frac{1}{\pi(E)}\sum_{y} \pi(y) \cdot \cH(\pi(E \mid y)) \tag{$\pi(y \mid E) \leq \frac{\pi(y)}{\pi(E)}$} \\
        & \leq \frac{1}{\pi(E)} \cdot \E_{y \sim \pi}\dkl{\pi(X \mid y)}{\eta(X)}  + \frac{1}{\pi(E)} \cdot \cH\left(\sum_{y} \pi(y) \cdot \pi(E \mid y) \right)\tag{Jensen's} \\
        & = \frac{1}{\pi(E)} \cdot \E_{y \sim \pi}\dkl{\pi(X \mid y)}{\eta(X)}  + \frac{1}{\pi(E)} \cdot \cH\left(\pi(E)\right)
    \end{align*}
Rearranging it yields the result.
\end{proof}

The following lemma will be used for analyzing the losses incurred by protocol standardization. It appears as the ``correlated sampling'' in  Lemma 7.5 of \cite{Rao_Yehudayoff_2020}. We present our proof in the Appendix.

\begin{restatable}{lem}{couplinglemma}
Let $\mu$ and $\mu'$ be distributions over supports $\mathcal{A}$. There exists a random process such that at the end of the process, we obtain $a$ and $a'$ such that $a$ distributes according to $\mu$, $a'$ distributes according to $\mu'$, and the probability that $a \ne a'$ is at most $O(\|\mu - \mu'\|)$.
\label{lem:coupling_lemma}
\end{restatable}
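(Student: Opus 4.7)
The plan is to exhibit the classical \emph{maximum coupling} of $\mu$ and $\mu'$, which achieves the stated bound (in fact with the tight constant $\tfrac{1}{2}$). Set $\alpha := \sum_{a \in \mathcal{A}} \min(\mu(a), \mu'(a))$; under the paper's un-normalized TV convention $\|\mu - \mu'\| = \sum_a |\mu(a) - \mu'(a)|$, a short calculation gives $\alpha = 1 - \tfrac{1}{2}\|\mu - \mu'\|$. The random process then flips a biased coin $B$ that takes value $1$ with probability $\alpha$ and $0$ with probability $1-\alpha$.

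If $B = 1$, sample $a$ from the ``common mass'' distribution $\nu(a) \propto \min(\mu(a), \mu'(a))$ and set $a' := a$. If $B = 0$, sample $a$ and $a'$ independently from the ``residual'' distributions $\mu_r(a) \propto \mu(a) - \min(\mu(a),\mu'(a))$ and $\mu'_r(a) \propto \mu'(a) - \min(\mu(a),\mu'(a))$, respectively. The degenerate cases $\alpha \in \{0,1\}$ are handled trivially: when $\alpha=1$ we have $\mu=\mu'$ and sample once, and when $\alpha=0$ the supports are disjoint and we sample $a$ and $a'$ independently from $\mu, \mu'$.

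Verification is then routine. Checking the marginals reduces to the pointwise identity $\alpha \nu(x) + (1-\alpha)\mu_r(x) = \min(\mu(x),\mu'(x)) + (\mu(x) - \min(\mu(x),\mu'(x))) = \mu(x)$, which gives $a \sim \mu$, and an identical computation on the other side gives $a' \sim \mu'$. For the mismatch probability, the construction guarantees $a = a'$ whenever $B = 1$, so
\[
\Pr[a \ne a'] \;\le\; \Pr[B = 0] \;=\; 1 - \alpha \;=\; \tfrac{1}{2}\|\mu - \mu'\| \;=\; O\!\left(\|\mu - \mu'\|\right),
\]
as required.

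There is no real obstacle: the lemma is a standard coupling statement and the entire argument is a setup-and-verify exercise. The only mild care needed is in normalizing $\nu, \mu_r, \mu'_r$ and handling the boundary cases, which is purely mechanical. I note in passing that the paper's citation to Rao--Yehudayoff points to a slightly fancier ``correlated sampling'' construction that extracts both $a$ and $a'$ from a \emph{shared} stream of public randomness; but since the lemma as stated only asks for existence of a random process with the marginal and mismatch guarantees, the clean maximum coupling above suffices.
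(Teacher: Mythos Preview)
Your proof is correct. You construct the classical maximum coupling and obtain $\Pr[a\ne a'] \le \tfrac{1}{2}\|\mu-\mu'\|$, which is even slightly sharper than what the paper records.

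The paper, however, takes a different route: it builds both samples out of a single shared stream of i.i.d.\ pairs $(a_i,\rho_i)$ uniform on $\mathcal{A}\times[0,1]$, letting $a$ be the first $a_i$ with $\rho_i<\mu(a_i)$ and $a'$ the first $a_{i'}$ with $\rho_{i'}<\mu'(a_{i'})$; it then bounds $\Pr[a\ne a']$ by the ratio of the ``symmetric-difference volume'' to the ``union volume,'' getting $\|\mu-\mu'\|$. The practical difference is exactly what you flag in your last paragraph: the paper's construction is the \emph{correlated sampling} scheme, where a party knowing only $\mu$ (resp.\ only $\mu'$) can produce $a$ (resp.\ $a'$) from the same public randomness with no communication---a feature that matters in communication-complexity applications even though the lemma as stated does not demand it. Your maximum coupling is cleaner and yields a better constant, but it requires a sampler who knows both $\mu$ and $\mu'$ simultaneously to flip the $\alpha$-biased coin and draw from $\nu$, $\mu_r$, $\mu'_r$. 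For the lemma exactly as phrased, your argument is a perfectly valid (and arguably more elementary) proof.
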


\section{Obtaining a Generalized Protocol for \texorpdfstring{$f$}{f}}
\label{sec:decomposition}

For the next two sections, we will prove Lemma \ref{lem:main_lemma}. We assume the setup of the lemma as follows. Let $\pi = (X,Y,\M)$ where $\M = (M^0, M^1,...,M^r)$ be a standard protocol that computes $f^{\oplus n}$ over the input distribution $\mu^n$ with disadvantage $\varepsilon = 1/5$.\footnote{This disadvantage $\varepsilon = 1/5$ corresponds to the error probability of $1/10$ for $\pi$ over $\mu^n$.} Denote the input $(X,Y) \sim \mu^n$ by $X = (X_1,...,X_n)$ and $Y = (Y_1,...,Y_n)$ such that each $(X_i, Y_i)$ is independently drawn from $\mu$. Also denote the information cost of $\pi$ by $\mathcal{I}$.

We will eventually show that there exists a standard protocol $\eta$ that computes $f$ over $\mu$, errs with probability $1/poly(n)$, and has information cost approximately $O(\mathcal{I}/n+1)$. In this section, we address a relaxed version of this statement: showing that there exists a \emph{generalized} protocol that \emph{almost} achieves these properties while remaining \emph{close} to being standard.

\subsection{Refinements of Conditional Decomposition}
\label{subsec:one_step}

We begin by discussing a revision of the \emph{conditional decomposition} described in Section \ref{sec:tech_overview}, which we will eventually apply recursively. For simplicity, in this subsection, we demonstrate only the top level of the recursion.

\paragraph{Conventions.} We assume that $n = 2^m$ is a power of two, and label the $n$ coordinates with $\{0,1\}^m$. Additionally, for any string $S \in \{0,1\}^{\leq m}$, we denote $\mu_S$, $X_S$, and $Y_S$ as the input distributions, Alice's input coordinates, and Bob's input coordinates for the $2^{m - |S|}$ instances prefixed with $S$, respectively. Let $\alpha \in (0,1)$ be a constant such that $\alpha = 1-2\varepsilon$.\footnote{As long as $\pi$ errs on $\mu^n$ with probability at most $1/4-O(1)$, we have $\varepsilon$ is twice the error, which is at most $1/2-O(1)$ Then, we are able to set $\alpha = 1-2\varepsilon = \Omega(1)$.}. Let $\tau \in (0,1)$ be a constant such that $\frac{1+\sqrt{\alpha}}{2} = 2^{-\tau}$.

Recall the binary protocol decomposition procedures.

\decomp*

\remark{In protocol $\pi_0$, we can interpret the first two messages, $(M^0, Y_1 \circ M^1)$, as Alice drawing $Y_1$ and prepending it to her first message to Bob, allowing him to recover $Y_1$. A crucial observation is that, distribution-wise, the distribution of $M^{\pi_0}$ remains unchanged if the first two terms are replaced by $(M^0 \circ Y_1, M^1, \ldots, M^r)$. This perspective, in fact, corresponds to the protocol decomposition discussed earlier in the overview. These interpretations are equivalent in terms of \emph{information} cost; however, their \emph{communication} costs differ due to $Y_1$ becoming a part of the transcript.}

\paragraph{Conditional Decomposition of $\pi$.} We begin by discussing the first level of our decomposition, where we use the protocol $\pi$ to derive two \emph{generalized} protocols, $\pi_0$ and $\pi_1$, each of which computes the function $f^{\oplus n/2}$. We highlight that the process can be viewed as two steps: (1) apply the binary decomposition to $\pi \mid W$, for some event $W$, to obtain $\widetilde{\pi}_0$ and $\widetilde{\pi}_1$, and (2) complete the each protocol by appending the ``final bit".

\begin{figure}[H]
    \centering
    \begin{tabular}{|p{15.5cm}|}
    \hline ~\\
    \multicolumn{1}{|c|}{\textbf{Conditional Decomposition of $\pi = (X,Y,\M)$ into $\pi_0$ and $\pi_1$.}} \\ 
    \\ \hline
    \begin{enumerate}
        \item Let $\mathcal{G}$ be a set of $(X_0, Y_1, \M)$ such that $\adv^{\pi}(f^{\oplus n}(X,Y) \mid X_0, Y_1, \M) \geq \alpha$.
        \item Let $W := \left[(X_0, Y_1, M) \in \mathcal{G}\right]$ be an event that $(X_0,Y_1, \M)$ yields $\adv^{\pi}(f^{\oplus n}(X,Y) \mid X_0, Y_1, \M) \geq \alpha$.
        \item Apply the Binary Decomposition to $\pi \mid W$. This yields two protocols, each of which computes $f^{\oplus n/2}$: 
        \begin{itemize}
            \item $\widetilde{\pi}_0 = (X_0, Y_0, \widetilde{\M}_0)$ where $\widetilde{\M}^0_0 = M^0$ and $\widetilde{\M}^+_0 = (Y_1 \circ M^1, M^2,...,M^r)$
            \item $\widetilde{\pi}_1 = (X_1, Y_1, \widetilde{\M}_1)$ where $\widetilde{\M}^0_1 = M^0 \circ X_0$ and $\widetilde{\M}^+_1 = (M^1, M^2,...,M^r)$
        \end{itemize}
        \item[4a.] The protocol $\pi_0 = (X_0, Y_0, \M_0)$ is obtained as follows.
        \begin{enumerate}
            \item[$(i)$] Upon finishing $\widetilde{\pi}_0$, Alice determines the more-likely bit $B_0$ among the posterior distribution $$\widetilde{\pi}_0(f^{\oplus n/2}(X_0,Y_0) \mid X_0, \widetilde{\M}_0).$$ 
            \item[$(ii)$] Alice sends $B_0$ to Bob, and both players declare $B_0$ as the answer.
        \end{enumerate}
        Equivalently, distribution-wise we can write $\pi_0 := \widetilde{\pi}_0 \odot B_0$.
        
        \item[4b.] The protocol $\pi_1 = (X_1, Y_1, \M_1)$ is obtained as follows.
        \begin{enumerate}
            \item[$(i)$] Upon finishing $\widetilde{\pi}_1$, Bob determines the more-likely bit $B_1$ among the posterior distribution $$\widetilde{\pi}_1(f^{\oplus n/2}(X_1,Y_1) \mid Y_1, \widetilde{\M}_1).$$ 
            \item[$(ii)$] Bob sends $B_1$ to Alice, and both players declare $B_1$ as the answer.
        \end{enumerate}
        Equivalently, distribution-wise we can write $\pi_1 := \widetilde{\pi}_1 \odot B_1$.
    \end{enumerate}
    \\
    \hline
    \end{tabular}
    \caption{The ``conditional'' decomposition  of $\pi$ into $\pi_{0}$ and $\pi_1$.}
    \label{figure:decompose_protocols}
\end{figure}

As an important remark, we observe the distribution of $\pi_0(X_0, Y_0, \M_0)$ and $\widetilde{\pi}_0(X_0,Y_0, M_0, B_0)$ are in fact identical. However, their differences lies within the language of communication protocol: that $\pi_{0}$ contains the final message $B_0$ while $\widetilde{\pi}_0$ does not.

Notice further that in the procedure of $\widetilde{\pi}_0$, Alice knows $X_0$ and $\widetilde{\M}_0$. Therefore, she can determine $B_0$ by herself. In other words, we have $B_0 \perp Y_0 \mid X_0 \widetilde{\M}_0$ in the the distribution of $\widetilde{\pi}_0$. With the same reasoning, we have $B_1 \perp X_1 \mid Y_1\widetilde{\M}_1$ in the the distribution of $\widetilde{\pi}_1$.

Following our decomposition, it can be shown that $\pi_0$ and $\pi_1$ has the rectangle property with respect to $\mu_0$ and $\mu_1$ respectively.

\begin{claim} If $\pi$ has a rectangle property with respect to $\mu_{\emptyset}$, then $\pi_0$ has a rectangle property with respect to $\mu_0$, and $\pi_1$ has a rectangle property with respect to $\mu_1$.
\label{clm:partial_rec}
\end{claim}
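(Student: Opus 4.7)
The plan is to reduce this to the previously-established Lemma~\ref{lem:decompose_then_rec}, which turns a \emph{partial} rectangle property of a protocol into a \emph{rectangle} property of each of its binary-decomposition pieces. Concretely, I will (i) promote the rectangle property of $\pi$ to a partial rectangle property of $\pi \mid W$ with respect to $\mu_\emptyset = \mu_0 \times \mu_1$, (ii) apply Lemma~\ref{lem:decompose_then_rec} to conclude the rectangle property for $\widetilde{\pi}_0$ and $\widetilde{\pi}_1$, and (iii) verify that appending the final bits $B_0$ and $B_1$ (yielding $\pi_0$ and $\pi_1$) preserves the rectangle property.

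\emph{Step 1: From rectangle to partial rectangle.} Write the rectangle property of $\pi$ as $\pi(X,Y,\M) = \mu_\emptyset(X,Y) \cdot g_1(X,\M) \cdot g_2(Y,\M)$. By construction in Figure~\ref{figure:decompose_protocols}, the event $W$ depends only on $(X_0, Y_1, \M)$, so $\mathbbm{1}_W = h(X_0, Y_1, \M)$ for some $\{0,1\}$-valued function $h$. Hence
\[
(\pi \mid W)(X,Y,\M) \;=\; \mu_\emptyset(X,Y) \cdot g_1(X,\M) \cdot g_2(Y,\M) \cdot \frac{h(X_0, Y_1, \M)}{\pi(W)},
\]
which matches Definition~\ref{def:partial_rec} of the partial rectangle property with $g_3(X_0, Y_1, \M) := h(X_0, Y_1, \M)/\pi(W)$.

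\emph{Step 2: Invoke the decomposition lemma.} The input distribution $\mu_\emptyset = \mu^n$ factors as $\mu_0 \times \mu_1$ across the prefix-$0$ and prefix-$1$ coordinate blocks, so $\mu_0 \perp \mu_1$. Lemma~\ref{lem:decompose_then_rec} applied to $\pi \mid W$ (with its partial rectangle property established in Step 1) yields the rectangle property of $\widetilde{\pi}_0$ with respect to $\mu_0$ and of $\widetilde{\pi}_1$ with respect to $\mu_1$.

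\emph{Step 3: Appending the final bit.} It remains to see that $\pi_0 = \widetilde{\pi}_0 \odot B_0$ keeps the rectangle property. Alice computes $B_0$ as a deterministic function of $(X_0, \widetilde{\M}_0)$, so $\widetilde{\pi}_0(B_0 \mid X_0, Y_0, \widetilde{\M}_0) = \widetilde{\pi}_0(B_0 \mid X_0, \widetilde{\M}_0)$. Writing $\M_0 = (\widetilde{\M}_0, B_0)$ and using the rectangle property $\widetilde{\pi}_0(X_0, Y_0, \widetilde{\M}_0) = \mu_0(X_0, Y_0) \cdot g_1'(X_0, \widetilde{\M}_0) \cdot g_2'(Y_0, \widetilde{\M}_0)$ from Step 2, we get
\[
\pi_0(X_0, Y_0, \M_0) \;=\; \mu_0(X_0, Y_0) \cdot \underbrace{\bigl[g_1'(X_0, \widetilde{\M}_0) \cdot \widetilde{\pi}_0(B_0 \mid X_0, \widetilde{\M}_0)\bigr]}_{\text{function of }(X_0, \M_0)} \cdot \underbrace{g_2'(Y_0, \widetilde{\M}_0)}_{\text{function of }(Y_0, \M_0)},
\]
which is the rectangle property of $\pi_0$ with respect to $\mu_0$. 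A fully symmetric argument (swapping the roles of Alice and Bob and of indices $0$ and $1$) handles $\pi_1$.

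There is no serious obstacle here; the claim is bookkeeping across the three moves (condition on $W$, decompose, append a final bit). The only point that needs a moment of care is noting that $\mathbbm{1}_W$ factors through $(X_0, Y_1, \M)$ so that it can be absorbed into the $g_3$ slot of the partial rectangle property, and that $B_0$ (resp.\ $B_1$) is a function of the sender's view so its conditional probability can be absorbed into the sender's side factor.
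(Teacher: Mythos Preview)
Your proposal is correct and follows essentially the same approach as the paper's proof: both establish the partial rectangle property of $\pi \mid W$ by absorbing $\mathbbm{1}_W/\pi(W)$ into the $g_3$ slot, invoke Lemma~\ref{lem:decompose_then_rec} to get the rectangle property for $\widetilde{\pi}_0$ and $\widetilde{\pi}_1$, and then check that appending $B_0$ (resp.\ $B_1$) preserves it via $B_0 \perp Y_0 \mid X_0\widetilde{\M}_0$. The structure and key observations match the paper exactly.
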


\begin{proof} First we will show that  $\pi \mid W$ has a partial rectangle property with respect to $(\mu_0, \mu_1)$.  Due to the rectangle property of $\pi$, let the functions $g_1,g_2$ be such that at any point $(X,Y,\M)$, we have 
$$\pi(X,Y,\M) = \mu(X,Y) \cdot g_1(X,\M) \cdot g_2(Y,\M).$$

Then, we have
\begin{align*}
\pi(X,Y,\M \mid W) & = \pi(X,Y,\M) \cdot \frac{ \pi(W \mid X,Y,\M)}{\pi(W)} \\
& = \mu(X,Y) \cdot g_1(X,\M) \cdot g_2(Y,\M) \cdot \frac{\mathbbm{1}[(X_0, Y_1, \M) \in \mathcal{G}]}{\pi(W)}.
\end{align*}
Note that the last term can be interpreted as a function of $(X_0,Y_1, \M)$. Therefore, $\pi \mid W$ has the partial rectangle property. 

Lemma \ref{lem:decompose_then_rec} then implies that $\widetilde{\pi}_0$ has a rectangle property with respect to $\mu_0$. Let the functions $g_3,g_4$ be such that for any points $(X_0, Y_0, \widetilde{\M}_0)$, we have $$\widetilde{\pi}_0(X_0,Y_0,\widetilde{\M}_0) = \mu(X_0, Y_0) \cdot g_3(X_0, \widetilde{\M}_0) \cdot g_4(Y_0, \widetilde{\M}_0).$$ Then, we have
\begin{align*}
{\pi}_0(X_0,Y_0, \M_0) & = \widetilde{\pi}_0(X_0,Y_0, \widetilde{\M}_0, B_0) \\
& = \widetilde{\pi}_0(X_0,Y_0,\widetilde{\M}_0) \cdot \widetilde{\pi}_0(B_0 \mid X_0,Y_0,\widetilde{\M}_0) \\
& = \mu(X_0, Y_0) \cdot g_3(X_0, \widetilde{\M}_0) \cdot g_4(Y_0, \widetilde{\M}_0) \cdot \widetilde{\pi}_0(B_0 \mid X_0,Y_0,\widetilde{\M}_0) \\
&  = \mu(X_0, Y_0) \cdot \left(g_3(X_0, \widetilde{\M}_0) \cdot \widetilde{\pi}_0(B_0 \mid X_0,\widetilde{\M}_0) \right)\cdot g_4(Y_0, \widetilde{\M}_0) \tag{$B_0 \perp Y_0 \mid X_0\M_0$ in $\widetilde{\pi}_0$}
\end{align*}
Recall that $\M_0$ consists of $\widetilde{\M}_0$ and $B_0$. Therefore, the second term is a function of $(X_0, M_0)$, and the third term is a function of $(Y_0, M_0)$. This proves that $\pi_0$ has a rectangle property with respect to $\mu_0$. The same proof applies for $\pi_1$.
\end{proof}

Now, we discuss the decomposition of advantages when a protocol undergoes the conditional decomposition. As a starting point, \cite{Yu22} observed the following.

\begin{claim}[Multiplicity of Advantages.] Suppose that $\pi$ has a rectangle property with respect to $\mu_\emptyset$. For any value of $(X_0,Y_1,\M)$, we have $$\adv^{\pi}(f^{\oplus n/2}(X_0, Y_0) \mid X_0, Y_1, \M) \cdot \adv^{\pi}(f^{\oplus n/2}(X_1, Y_1) \mid X_0, Y_1, \M) =  \adv^{\pi}(f^{\oplus n}(X,Y) \mid X_0, Y_1, \M).$$
\label{claim:mult_adv}
\end{claim}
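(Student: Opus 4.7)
The plan is to reduce the claim to the XOR-of-independent-bits fact (Fact~\ref{fact:xor_ind_bits}) by verifying the relevant conditional independence. Since $f^{\oplus n}(X,Y) = f^{\oplus n/2}(X_0,Y_0) \oplus f^{\oplus n/2}(X_1,Y_1)$ holds identically, it suffices to show that conditioned on $(X_0, Y_1, \M)$, the two bits $f^{\oplus n/2}(X_0,Y_0)$ and $f^{\oplus n/2}(X_1,Y_1)$ are independent (as $\{0,1\}$-valued random variables in the distribution $\pi$). Applying Fact~\ref{fact:xor_ind_bits} in the probability space $\pi \mid X_0=x_0, Y_1=y_1, \M=m$ then yields the stated multiplicativity.

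The conditional independence reduces to showing $Y_0 \perp X_1 \mid X_0, Y_1, \M$. Indeed, once we condition on $X_0$ the bit $f^{\oplus n/2}(X_0,Y_0)$ becomes a (deterministic) function of $Y_0$, and once we condition on $Y_1$ the bit $f^{\oplus n/2}(X_1,Y_1)$ becomes a (deterministic) function of $X_1$; hence independence of $Y_0$ and $X_1$ (given $X_0, Y_1, \M$) transfers to independence of these two bits. So the task reduces to establishing the conditional independence $X_1 \perp Y_0 \mid X_0 Y_1 \M$ in $\pi$.

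For this I would invoke Proposition~\ref{prop:partial_rec_independent}. Its hypotheses require that $\pi$ have the partial rectangle property with respect to a product input distribution $(\mu_0,\mu_1)$. Both are immediate in our setting: the input distribution is $\mu_\emptyset = \mu^n$, which factors as $\mu_0 \otimes \mu_1$ across the coordinate partition $\{0,1\}$ used in the decomposition; and the (full) rectangle property of $\pi$ with respect to $\mu_\emptyset$ trivially implies the partial rectangle property with respect to $(\mu_0,\mu_1)$ (take the extra factor $g_3 \equiv 1$ in Definition~\ref{def:partial_rec}). Proposition~\ref{prop:partial_rec_independent} then gives exactly $X_1 \perp Y_0 \mid X_0 Y_1 \M$, completing the chain.

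There is essentially no hard step; the work is organizational. The one subtlety worth being explicit about is that the conditioning variables $(X_0, Y_1, \M)$ lie on the ``opposite'' sides of the XOR from $(Y_0, X_1)$, so that after conditioning each of the two half-XORs becomes a function of a single remaining block of inputs; this is what allows the rectangle-based independence of inputs to translate into independence of the two parity bits, which is the property Fact~\ref{fact:xor_ind_bits} needs.
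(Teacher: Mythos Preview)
Your proposal is correct and follows essentially the same approach as the paper: both reduce to Fact~\ref{fact:xor_ind_bits} after establishing $X_1 \perp Y_0 \mid X_0 Y_1 \M$, and both observe that conditioned on $X_0,Y_1$ the two half-XORs become functions of $Y_0$ and $X_1$ respectively. The paper's proof is terser, simply asserting that the rectangle property of $\pi$ implies the needed conditional independence; your route through Proposition~\ref{prop:partial_rec_independent} (via the trivial implication from rectangle to partial rectangle property) just makes that step explicit.
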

\begin{proof} The rectangle property of $\pi$ implies $X_1 \perp Y_0 \mid X_0Y_1\M$. Thus, we have $f^{\oplus n/2}(X_0, Y_0) \perp f^{\oplus n/2}(X_1, Y_1) \mid X_0 Y_1 \M$ because first term only depends on $Y_0$ and the second term only depends on $X_1$. By Fact \ref{fact:xor_ind_bits}, the proof is concluded.
\end{proof}

For brevity, we denote random variables over the randomness of $(X_0, Y_1, \M)$:
\begin{align*}
    Z  & := \adv^{\pi}\left(f^{\oplus n}(X,Y) \mid X_0, Y_1, \M\right) \\
    A_0 & := \adv^{\pi}\left(f^{\oplus n/2}(X_0, Y_0) \mid X_0, Y_1, \M\right) \\
    A_1 & := \adv^{\pi}\left(f^{\oplus n/2}(X_1, Y_1) \mid X_0, Y_1, \M\right)
\end{align*}

With these notions, we realize that $W$ is, in fact, simply the event that $(Z \geq \alpha)$. Furthermore, Claim \ref{claim:mult_adv} can be expressed as $A_0 A_1 = Z$. According to the decomposition outlined in Figure \ref{figure:decompose_protocols}, the advantage of $\pi_0$ is $\mathbb{E}(A_0 \mid W)$. To see this, let us try to understand the advantage of $\pi_0$ from Alice's side. Let $(X_0, Y_1, \M)$ be arbitrary point which occurs with probability $\pi(X_0, Y_1, \M \mid W)$. Since Alice knows $(X_0, Y_1, \M)$ and outputs the more-likely answer $B_0$, she can compute $f^{\oplus n/2}(X_0, Y_0)$ correctly with probability 
$$\frac{1}{2} + \frac{\adv^{\widetilde{\pi}_0}(f^{\oplus n/2}(X_0,Y_0) \mid X_0, \widetilde{\M}_0)}{2} \ = \ \frac{1}{2} + \frac{\adv^{\pi}(f^{\oplus n/2}(X_0, Y_0) \mid X_0, Y_1, \M, W)}{2}.$$
Therefore, the advantage of $\pi_0$ from Alice's perspective is
$$\sum_{(X_0, Y_1, \M)} \pi(X_0, Y_1, \M \mid W) \cdot \adv^{\pi}(f^{\oplus n/2}(X_0, Y_0) \mid X_0, Y_1, \M, W) = \mathbb{E}(A_0 \mid W).
$$
Finally, by sending Bob an additional bit indicating her answer to $f^{\oplus n/2}(X_0, Y_0)$, Bob can achieve the same advantage. A similar argument applies to $\pi_1$, where its advantage is $\mathbb{E}(A_1 \mid W)$.

Denote $\varepsilon_0 = 1-\adv^{\pi_0}(f^{\oplus n/2}(X_0, Y_0)) = 1 - \E(A_0 \mid W)$ and $\varepsilon_1 = 1-\adv^{\pi_1}(f^{\oplus n/2}(X_1, Y_1)) = 1 - \E(A_1 \mid W)$ to be the disadvantages of $\pi_0$ and $\pi_1$ respectively. Again, by letting Alice output the more-likely, the protocol $\pi_0$ errs with probability $$\frac{1}{2} - \frac{\adv^{\pi_0}(f^{\oplus n/2}(X_0, Y_0))}{2} = \frac{\varepsilon_0}{2}$$
and vice-versa for Bob in $\pi_1$. Thus, it is intuitive to use $\varepsilon_0$ and $\varepsilon_1$ as a proxy to the error of $\pi_0$ and $\pi_1$ respectively.

The following claim shows that the sum of $\varepsilon_0$ and $\varepsilon_1$ cannot be too large.

\begin{claim}
    $\varepsilon_0 + \varepsilon_1 \leq \frac{2}{1+\sqrt{\alpha}} \cdot \left(1 - \E(Z \mid W)\right).$
\label{clm:disadv_bounds}
\end{claim}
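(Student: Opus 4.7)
My plan is to reduce the claim to a \emph{pointwise} inequality on the event $W$, and then take conditional expectation. Since the setup of Lemma \ref{lem:main_lemma} guarantees that $\pi$ is a standard protocol, it has the rectangle property with respect to $\mu_\emptyset$, so Claim \ref{claim:mult_adv} gives the pointwise identity $Z = A_0 A_1$ on the entire probability space (and in particular on $W$). Since by definition $\varepsilon_i = 1 - \E(A_i \mid W)$ for $i \in \{0,1\}$ and $\E(Z \mid W) = \E(A_0 A_1 \mid W)$, the claim will follow by averaging over $\pi \mid W$ the pointwise inequality
\[
(1 - A_0) + (1 - A_1) \;\leq\; \frac{2}{1+\sqrt{\alpha}} \cdot (1 - A_0 A_1),
\]
which I will establish at every realization $(X_0, Y_1, \M) \in W$.

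The pointwise bound reduces to a purely real-analytic statement: for any $a_0, a_1 \in [0,1]$ satisfying $a_0 a_1 \geq \alpha$, the displayed inequality holds. I would prove it by substituting $u = 1 - a_0$ and $v = 1 - a_1$ (both in $[0,1]$), under which the target inequality becomes
\[
2 uv \;\leq\; (u+v)\,(1 - \sqrt{\alpha}),
\]
while the constraint $a_0 a_1 \geq \alpha$ becomes $(1-u)(1-v) \geq \alpha$. By AM-GM, $u + v \geq 2\sqrt{uv}$, so it is enough to prove the one-variable inequality $\sqrt{uv} \leq 1 - \sqrt{\alpha}$. Expanding the constraint gives $u + v \leq 1 - \alpha + uv$, and combining with AM-GM once more yields $2\sqrt{uv} \leq 1 - \alpha + uv$, i.e.\ $(\sqrt{uv} - 1)^2 \geq \alpha$; since $\sqrt{uv} \leq 1$, this forces $\sqrt{uv} \leq 1 - \sqrt{\alpha}$, as required.

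I do not anticipate any major obstacle here: the whole argument is a short, two-step application of AM-GM, and the tight case $a_0 = a_1 = \sqrt{\alpha}$ confirms that the constant $\tfrac{2}{1+\sqrt{\alpha}}$ cannot be improved by any pointwise comparison of this form. The only subtleties to verify are that $A_0, A_1 \in [0,1]$ almost surely (immediate, as advantages lie in $[0,1]$) and that the identity $Z = A_0 A_1$ holds pointwise on $W$, which is exactly the content of Claim \ref{claim:mult_adv}.
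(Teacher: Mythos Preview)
Your proposal is correct and follows essentially the same approach as the paper: both establish the pointwise inequality $(1-A_0)+(1-A_1)\le \tfrac{2}{1+\sqrt{\alpha}}(1-A_0A_1)$ on the event $W$ using AM--GM together with the constraint $A_0A_1=Z\ge\alpha$, and then take the conditional expectation. The only difference is cosmetic: the paper applies AM--GM once to get $A_0+A_1\ge 2\sqrt{Z}$ and then observes $2\sqrt{Z}\ge \tfrac{2(Z+\sqrt{\alpha})}{1+\sqrt{\alpha}}$ via the factorization $(\sqrt{Z}-1)(\sqrt{Z}-\sqrt{\alpha})\le 0$, whereas you substitute $u=1-A_0$, $v=1-A_1$ and apply AM--GM twice to reach the equivalent bound $\sqrt{uv}\le 1-\sqrt{\alpha}$.
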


\begin{proof} Conditioned on $W$ (meaning $Z \geq \alpha$) we derive:
    \begin{equation*}
        A_0 + A_1 \geq 2\sqrt{A_0A_1} = 2\sqrt{Z} \geq 2 \cdot \frac{Z + \sqrt{\alpha}}{1+\sqrt{\alpha}}
    \label{eq:decompose_disadvantage}
    \end{equation*}
where the last inequality is equivalent to $(\sqrt{Z}-1)(\sqrt{Z}-\sqrt{\alpha}) \leq 0$. Taking an expectation conditioned on $W$, the inequality becomes:
$$\E(A_0 \mid W) + \E(A_1 \mid W) \geq 2 \cdot \frac{\E(Z \mid W) + \sqrt{\alpha}}{1+\sqrt{\alpha}}.$$ Recall that $\varepsilon_0 = 1 - \E(A_0 \mid W)$ and $\varepsilon_1 = 1 - \E(A_1 \mid W)$. Rearranging it concludes the proof.
\end{proof}

The following claim shows that the conditioning event $W = (Z \geq \alpha)$ occurs with substantial probability.

\begin{claim}
    $\pi(W) \geq \frac{1-\varepsilon - \alpha}{\E(Z \mid W) - \alpha}.$
\label{clm:prob_bounds} 
\end{claim}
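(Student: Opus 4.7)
The plan is to obtain the lower bound on $\pi(W)$ by a direct application of the law of total expectation to the random variable $Z$, combined with the trivial upper bound on $Z$ under $\overline{W}$. The structure mirrors the ``reverse Markov'' style of argument: if a nonnegative random variable has a large expectation but is bounded above on one part of the sample space, the complementary event must carry enough probability mass to account for the remaining expectation.

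Concretely, I would first recall that the advantage of $\pi$ in computing $f^{\oplus n}$ is at most $\E(Z)$ (this is exactly the observation made in the technical overview, which states that the best possible advantage given the messages $\M$ and the extra side information $(X_0, Y_1)$ dominates the protocol's advantage, since conditioning on more information can only help prediction). Because $\pi$ has disadvantage $\varepsilon$, we obtain
\[
\E(Z) \ \geq \ 1-\varepsilon.
\]
Next, I would split this expectation over $W$ and its complement and use $Z < \alpha$ on $\overline{W}$:
\[
1-\varepsilon \ \leq \ \E(Z) \ = \ \pi(W)\cdot \E(Z\mid W) + \pi(\overline{W})\cdot \E(Z\mid \overline{W}) \ \leq \ \pi(W)\cdot \E(Z\mid W) + (1-\pi(W))\cdot \alpha.
\]
Rearranging yields $\pi(W)\cdot (\E(Z\mid W)-\alpha) \geq 1-\varepsilon-\alpha$, and dividing by $\E(Z\mid W)-\alpha$ gives the stated inequality.

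The only step that requires a moment's care is the division at the end: we need $\E(Z\mid W) > \alpha$ and $1-\varepsilon-\alpha > 0$ to make the inequality meaningful. The first holds because $Z\geq \alpha$ pointwise on $W$ (so the denominator is nonnegative, and strictly positive unless the numerator is also non-positive). The second holds because the setup fixes $\alpha = 1-2\varepsilon$, so $1-\varepsilon-\alpha = \varepsilon > 0$. I do not foresee any real obstacle; the whole argument is a one-line total-expectation split, and the main conceptual ingredient (that $\E(Z)\geq 1-\varepsilon$) was already implicit in the decomposition-of-advantage discussion.
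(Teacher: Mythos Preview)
Your proposal is correct and matches the paper's proof essentially line for line: both establish $\E(Z)\geq 1-\varepsilon$, split the expectation over $W$ and $\overline{W}$, bound $\E(Z\mid\overline{W})\leq\alpha$, and rearrange. The paper justifies $\E(Z)\geq 1-\varepsilon$ via the triangle inequality (expanded later in Claim~\ref{clm:triangle_advantages}), which is precisely the ``conditioning can only help'' observation you invoke.
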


\begin{proof} By definition, $\overline{W}$ implies $Z \leq \alpha$. We also have $\E(Z) \geq \adv^{\pi}(f^{\oplus n}(X,Y)) = 1- \varepsilon$ via the triangle inequality. Then, we have
\begin{align*}
    1-\varepsilon \leq \E(Z) & = \pi(W) \cdot \E(Z \mid  W) + \pi(\overline{W})\cdot \E(Z \mid \overline{W}) \\
    & \leq \pi(W ) \cdot \E(Z \mid W) + \left(1-\pi(W)\right) \cdot \alpha
\end{align*}
Rearranging it concludes the proof.
\end{proof}

\subsection{Obtaining Generalized Protocol(s) for \texorpdfstring{$f$}{f}} 

We now describe the \emph{conditional decomposition} that is applied to $\pi_S$ for an arbitrary $S \in \{0,1\}^{\leq m-1}$ (Figure \ref{figure:protocols_piS0_piS1}), which is simply a succinct generalization of Figure \ref{figure:decompose_protocols}. Precisely, the conditional decomposition of $\pi_S$ yields two protocols, $\pi_{S0}$ and $\pi_{S1}$, each of which computes $f^{\oplus}$ on half of the inputs that $\pi_S$ handles. Specifically, for an event $W_S$ (to be defined shortly), we decompose $\pi_S \mid W_S$ according to Definition \ref{def:protocol_decomposition}, yielding two protocols, $\widetilde{\pi}_{S0}$ and $\widetilde{\pi}_{S1}$. To derive the protocol $\pi_{S0}$, we further let Alice compute the more likely answer of $f^{\oplus}(X_{S0}, Y_{S0})$ based on her knowledge, denoted $B_{S0}$, and send it to Bob. This bit $B_{S0}$ serves as the players' answer to $f^{\oplus}(X_{S0}, Y_{S0})$ in $\pi_{S0}$. Equivalently, we can interpret the protocol $\pi_{S0}$ as appending $B_{S0}$ to $\widetilde{\pi}_{S0}$, i.e., $\pi_{S0} := \widetilde{\pi}_{S0} \odot B_{S0}$. The protocol $\pi_{S1}$ is obtained analogously.

\begin{figure}[H]
    \centering
    \begin{tabular}{|p{15cm}|}
    \hline ~\\
    \multicolumn{1}{|c|}{\textbf{Conditional Decomposition of $\pi_S$ into $\pi_{S0}$ and $\pi_{S1}$.}} \\
    \\ \hline
    \begin{enumerate}
        \item Let $\widetilde{\pi}_{S0} = (X_{S0}, Y_{S0}, \widetilde{\M}_{S0})$ and $\widetilde{\pi}_{S1} = (X_{S1}, Y_{S1}, \widetilde{\M}_{S1})$ be the protocols obtained from decomposing $\pi_S \mid W_S$ via the procedure given in Definition \ref{def:protocol_decomposition}.
        \item Let $B_{S0}$ be the more-likely value of $f^{\oplus}(X_{S0}, Y_{S0})$ in the distribution $\widetilde{\pi}_{S0}(f^{\oplus}(X_{S0},Y_{S0}) \mid X_{S0} \widetilde{\M}_{S0})$ as computed by Alice. The protocol $\pi_{S0}$ is defined to be $\pi_{S0} := \widetilde{\pi}_{S0} \odot B_{S0}$.
    \item  Let $B_{S1}$ be the more-likely value of $f^{\oplus}(X_{S1}, Y_{S1})$ in the distribution $\widetilde{\pi}_{S1}(f^{\oplus}(X_{S1},Y_{S1}) \mid Y_{S1} \widetilde{\M}_{S1})$ as computed by Bob. The protocol $\pi_{S1}$ is defined to be $\pi_{S1} := \widetilde{\pi}_{S1} \odot B_{S1}$.
    
    \end{enumerate}
    \\
    \hline
    \end{tabular}
    \caption{The conditional decomposition procedure of $\pi_S$ into $\pi_{S0}$ and $\pi_{S1}$}
    \label{figure:protocols_piS0_piS1}
\end{figure}

We remark that the event $W_S$ is to be defined shortly via Table \ref{table:params_S}. The following observation is immediate, as by the end of $\widetilde{\pi}_{S0}$, Alice knows $X_{S0}$ and $\widetilde{M}_{S0}$, and vice versa for Bob.
\begin{observation} For any $S$, we have $B_{S0} \perp Y_{S0} \mid X_{S0} \widetilde{\M}_{S0}$ in the distribution of $\widetilde{\pi}_{S0}$, and $B_{S1} \perp X_{S1} \mid Y_{S1} \widetilde{\M}_{S1}$ in the distribution of $\widetilde{\pi}_{S1}$.
\label{obs:bs_ind}
\end{observation}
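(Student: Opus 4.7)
The plan is to observe that $B_{S0}$ is, by construction, a deterministic function of $(X_{S0}, \widetilde{\M}_{S0})$, and then to conclude the independence statement immediately from that observation.

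First I would unpack the definition in Figure~\ref{figure:protocols_piS0_piS1}: $B_{S0}$ is defined as the more-likely value of $f^{\oplus}(X_{S0}, Y_{S0})$ under the posterior $\widetilde{\pi}_{S0}(f^{\oplus}(X_{S0},Y_{S0}) \mid X_{S0}, \widetilde{\M}_{S0})$. Crucially, this posterior is determined by the joint distribution $\widetilde{\pi}_{S0}$ together with the conditioning pair $(X_{S0}, \widetilde{\M}_{S0})$ alone; it does not involve the realized value of $Y_{S0}$. So upon fixing $(X_{S0}, \widetilde{\M}_{S0})$, the value $B_{S0}$ is fully specified (using any fixed deterministic tie-breaking rule, e.g.\ output $0$, when the posterior is exactly balanced). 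This gives $B_{S0} = h(X_{S0}, \widetilde{\M}_{S0})$ for some deterministic function $h$.

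Given this, for any values $x, \widetilde{m}, y, b$, one has
\[
\widetilde{\pi}_{S0}\bigl(B_{S0} = b \mid X_{S0} = x,\, Y_{S0} = y,\, \widetilde{\M}_{S0} = \widetilde{m}\bigr) = \mathbbm{1}[h(x, \widetilde{m}) = b],
\]
which does not depend on $y$. This establishes $B_{S0} \perp Y_{S0} \mid X_{S0}\widetilde{\M}_{S0}$ in $\widetilde{\pi}_{S0}$. The symmetric claim $B_{S1} \perp X_{S1} \mid Y_{S1}\widetilde{\M}_{S1}$ in $\widetilde{\pi}_{S1}$ follows by swapping the roles of Alice and Bob in the identical argument, since $B_{S1}$ is defined as the more-likely value of the posterior $\widetilde{\pi}_{S1}(f^{\oplus}(X_{S1},Y_{S1}) \mid Y_{S1}, \widetilde{\M}_{S1})$ and is therefore a deterministic function of $(Y_{S1}, \widetilde{\M}_{S1})$.

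Since the reasoning is purely definitional, I do not anticipate any genuine obstacle. The only mild subtlety is making the tie-breaking convention in ``more-likely value'' explicit, but this does not affect the conclusion: any deterministic rule preserves the functional dependence on $(X_{S0}, \widetilde{\M}_{S0})$ alone (respectively $(Y_{S1}, \widetilde{\M}_{S1})$), which is all that the independence assertion requires.
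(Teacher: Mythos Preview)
Your proposal is correct and matches the paper's own justification, which simply notes that by the end of $\widetilde{\pi}_{S0}$ Alice knows $X_{S0}$ and $\widetilde{\M}_{S0}$ (so $B_{S0}$ is a function of these alone), and symmetrically for Bob. You have supplied a bit more detail than the paper's one-line remark, but the underlying argument is identical.
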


Next, we describe our \emph{recursive procedure} (Figure \ref{figure:recursive_decomposition}) which applies the conditional decomposition in lexicographical order: for $k = 0, 1, \ldots, m-1$, we split $2^k$ protocols $\{\pi_S\}_{|S| = k}$ into $2^{k+1}$ protocols $\{\pi_{S'}\}_{|S'| = k+1}$. We begin with $k = 0$, which corresponds to the single protocol $\pi_\emptyset = \pi$ for $f^{\oplus n}$. By the end of this process, after completing round $k = m-1$, we will have $n$ protocols $\{\pi_S\}_{|S| = m}$ for $f$. We collectively refer to the protocols $\{\pi_S\}_{|S| = k}$ as the \emph{level $k$}.

\begin{figure}[H]
    \centering
    \begin{tabular}{|p{15cm}|}
    \hline ~\\
    \multicolumn{1}{|c|}{\textbf{Recursive Procedure $\mathcal{P}$}} \\
    \\ \hline
    \begin{enumerate}
        \item set $\pi_{\emptyset}$ to $\pi$
        \item \textbf{for} each $k = 0,1,...,m-1$,
        \item \hspace{5mm} \textbf{for} each $S \in \{0,1\}^k$, 
        \item \hspace{10mm} apply the conditional decomposition from Figure \ref{figure:protocols_piS0_piS1} on $\pi_S$ to obtain $\pi_{S0}$ and $\pi_{S1}$
    \end{enumerate}
    \\
    \hline
    \end{tabular}
    \caption{A recursive procedure begins with a standard protocol $\pi$ for computing $f^{\oplus n}$ and ultimately yields $n$ generalized protocols $\{\pi_S\}_{|S| = m}$, each of which computes $f$.}
    \label{figure:recursive_decomposition}
\end{figure}

For any $S$, we define the following set of parameters related to the protocol $\pi_S$. Note that the case of $S = \emptyset$ corresponds to the first level of conditional decomposition discussed in the previous subsection. We also note that when $S = \emptyset$, we might drop the subscript $S$. By doing that, the notations become consistent with our earlier discussions (e.g. $\pi_\emptyset$ becomes $\pi$, $X_\emptyset$ becomes $X$, or $\varepsilon_{\emptyset}$ becomes $\varepsilon$, etc.)

\begin{table}[H]
    \centering
    \begin{tabular}{|p{15cm}|}
    \hline ~\\
    \multicolumn{1}{|c|}{\textbf{Parameters related to a generalized protocol $\pi_S = (X_S, Y_S, \M_S)$}} \\
    \\ \hline
    \begin{itemize}
        \item $(X_S,Y_S)$ denotes the input of $2^{m-|S|}$ coordinates associated with the protocol $\pi_S$.
        \item $\M_S$ collectively denotes:
        \begin{itemize}
            \item $M^0_S$ denotes public randomness of $\pi_S$
            \item $M^+_S = (M^1_S, M^2_S, ...)$ denotes the transcript of communication, where $M^i_S$ indicates the message sent in round $i$.
        \end{itemize}
        \item $\mu_S$ is the desired input distributions $\mu^{2^{m-|S|}}$ associated with the protocol $\pi_S$.
        \item $Z_S, A_{S0}, A_{S1}$ to be the random variable over the randomness of $(X_{S0}, Y_{S1}, \M_S)$ for which 
        {\begin{align*} Z_S & = \adv^{\pi_S}(f^\oplus(X_S, Y_S) \mid X_{S0}, Y_{S1}, \M_S) \\
        A_{S0} & = \adv^{\pi_S}(f^\oplus(X_{S0}, Y_{S0}) \mid X_{S0}, Y_{S1}, \M_S) \\
        A_{S1} & = \adv^{\pi_S}(f^\oplus(X_{S1}, Y_{S1}) \mid X_{S0}, Y_{S1}, \M_S)
        \end{align*}}
    \item $W_S$ is an event that $Z_S \geq \alpha$.
    \item $\varepsilon_S$ is the disadvantage of $\pi_S$, denoted as $1-\adv^{\pi_S}(f^{\oplus}(X_S, Y_S))$.
    \item $I_S =$ information cost of $\pi_S$. 
    \item $p_S = \pi_S(W_S)$
    \item $\chi_S$ to be defined recursively such that $\chi_\emptyset = 1$, and $\chi_{S0} = \chi_{S1} = p_S \cdot \chi_S$ for any $S$.
    \end{itemize}
    \\
    \hline
    \end{tabular}
    \caption{A set of parameters and variables related to  $\pi_S = (X_S, Y_S, \M_S).$}
    \label{table:params_S}
\end{table}

Let us now rationalize the sequence of $\chi_S$. Intuitively, $\chi_S$ is the probability of all conditioning events that are attached with $\pi_S$. To see this, we notice that $\pi_\emptyset = \pi$ is unconditioned; i.e. is conditioned by an event with probability $1 =: \chi_\emptyset$. Next, $\pi_0$ and $\pi_1$ is a result of decomposition of $\pi \mid W$. In other words, they are attached with the conditioning event $W$ which occurs with probability $\pi(W)$ which happens to be equal to $\chi_{S0}$ and $\chi_{S1}$. We can then apply this reasoning inductively.

It turns out that the analogues of Claims \ref{clm:partial_rec}, \ref{claim:mult_adv}, \ref{clm:disadv_bounds}, and \ref{clm:prob_bounds} hold at any level of the iterative process. These results are summarized in the following lemma.

\begin{lemma} For any $S$, the following statements are true. \begin{itemize}
    \item[$(i)$] $\pi_S$ has a rectangle property with respect to $\mu_S$.
    \item[$(ii)$] $\pi_S \mid W_S$ has a partial rectangle property with respect to $(\mu_{S0}, \mu_{S1})$.
    \item[$(iii)$] $Z_S = A_{S0}A_{S1}$.
    \item[$(iv)$] $\varepsilon_{S0} + \varepsilon_{S1} \leq \frac{2}{1+\sqrt{\alpha}} \cdot \left(1 - \E(Z_S \mid W_S) \right).$
    \item[$(v)$] $p_S \geq \frac{1-\varepsilon_S - \alpha}{\E(Z_S \mid W_S) - \alpha}.$
    \item[$(vi)$] $\varepsilon_{S0} = 1-\E(A_{S0} \mid W_S)$ and $\varepsilon_{S1} = 1-\E(A_{S1} \mid W_S).$
\end{itemize}
\label{lem:1_level_decomp}
\end{lemma}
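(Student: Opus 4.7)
\textbf{Proof plan for Lemma \ref{lem:1_level_decomp}.} The plan is to prove all six statements simultaneously by induction on $|S|$, using $(i)$ as the inductive invariant that is propagated through the levels of the recursion, with statements $(ii)$--$(vi)$ being derived from $(i)$ at each level in a manner parallel to the single-level arguments already given as Claims \ref{clm:partial_rec}, \ref{claim:mult_adv}, \ref{clm:disadv_bounds}, and \ref{clm:prob_bounds}.

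For the base case $S = \emptyset$, we have $\pi_\emptyset = \pi$, which is a standard protocol and hence has the rectangle property with respect to $\mu_\emptyset = \mu^n$; this establishes $(i)$. For the inductive step, suppose $(i)$ holds for some $\pi_S$. I would first establish $(ii)$ by rewriting
\[\pi_S(X_S, Y_S, \M_S \mid W_S) \;=\; \pi_S(X_S, Y_S, \M_S) \cdot \frac{\mathbbm{1}[(X_{S0}, Y_{S1}, \M_S) \in \mathcal{G}_S]}{p_S},\]
and observing that the indicator factor depends only on $(X_{S0}, Y_{S1}, \M_S)$, which is exactly the extra factor $g_3$ allowed by Definition \ref{def:partial_rec}; combined with the rectangle decomposition of $\pi_S$ this yields the partial rectangle property w.r.t.\ $(\mu_{S0}, \mu_{S1})$. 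Statement $(iii)$ then follows from $(ii)$: Proposition \ref{prop:partial_rec_independent} gives $X_{S1} \perp Y_{S0} \mid X_{S0} Y_{S1} \M_S$, so $f^{\oplus}(X_{S0}, Y_{S0})$ and $f^{\oplus}(X_{S1}, Y_{S1})$ are conditionally independent given $(X_{S0}, Y_{S1}, \M_S)$, and Fact \ref{fact:xor_ind_bits} applied pointwise yields $Z_S = A_{S0} A_{S1}$. From this, $(iv)$ follows exactly as in Claim \ref{clm:disadv_bounds}: on the event $W_S$ we have $A_{S0} + A_{S1} \geq 2\sqrt{A_{S0} A_{S1}} = 2\sqrt{Z_S} \geq 2(Z_S + \sqrt{\alpha})/(1 + \sqrt{\alpha})$, and taking conditional expectation under $W_S$ and using $(vi)$ gives the bound. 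Statement $(v)$ mirrors Claim \ref{clm:prob_bounds}: the triangle inequality gives $\E(Z_S) \geq 1 - \varepsilon_S$, and splitting on $W_S$ versus $\overline{W_S}$ (where $Z_S \leq \alpha$) and solving yields the lower bound on $p_S$. Statement $(vi)$ is by construction of the conditional decomposition in Figure \ref{figure:protocols_piS0_piS1}: since Alice outputs the more-likely bit $B_{S0}$ based on her view $(X_{S0}, \widetilde{\M}_{S0})$, her error conditioned on any $(X_{S0}, Y_{S1}, \M_S) \in \mathcal{G}_S$ is exactly $(1 - A_{S0})/2$, and taking expectation under $\pi_S(\cdot \mid W_S)$ gives $\varepsilon_{S0} = 1 - \E(A_{S0} \mid W_S)$; symmetrically for $\varepsilon_{S1}$.

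Finally, to propagate $(i)$ to the next level, I apply Lemma \ref{lem:decompose_then_rec} to $\pi_S \mid W_S$, which by $(ii)$ has the partial rectangle property with respect to $(\mu_{S0}, \mu_{S1})$ (noting $\mu_{S0} \perp \mu_{S1}$ since $\mu_\emptyset = \mu^n$ is a product distribution). This gives the rectangle property for $\widetilde{\pi}_{S0}$ with respect to $\mu_{S0}$, say with factors $g_3(X_{S0}, \widetilde{\M}_{S0})$ and $g_4(Y_{S0}, \widetilde{\M}_{S0})$. To lift this to $\pi_{S0} = \widetilde{\pi}_{S0} \odot B_{S0}$, I write
\[\pi_{S0}(X_{S0}, Y_{S0}, \M_{S0}) \;=\; \widetilde{\pi}_{S0}(X_{S0}, Y_{S0}, \widetilde{\M}_{S0}) \cdot \widetilde{\pi}_{S0}(B_{S0} \mid X_{S0}, Y_{S0}, \widetilde{\M}_{S0}),\]
and use Observation \ref{obs:bs_ind} ($B_{S0} \perp Y_{S0} \mid X_{S0} \widetilde{\M}_{S0}$) to replace the last factor with $\widetilde{\pi}_{S0}(B_{S0} \mid X_{S0}, \widetilde{\M}_{S0})$, which can be absorbed into $g_3$. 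An analogous argument handles $\pi_{S1}$. This completes the induction.

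The main obstacle will be verifying the propagation of $(i)$ at the level of $\pi_{S0}$ and $\pi_{S1}$, specifically checking that appending the final bit $B_{S0}$ preserves the rectangle structure; this is where Observation \ref{obs:bs_ind} is essential. The remaining pieces are either direct algebraic manipulations or applications of lemmas already proved at the $S = \emptyset$ level, and they go through unchanged modulo notational relabeling.
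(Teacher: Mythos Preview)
Your proposal is correct and follows essentially the same inductive approach as the paper's proof sketch, which likewise propagates $(i)$ through the levels and reduces $(ii)$--$(vi)$ at each level to the single-level arguments of Claims \ref{clm:partial_rec}--\ref{clm:prob_bounds}. One small correction: $(iii)$ should be derived directly from $(i)$ rather than $(ii)$, since $Z_S, A_{S0}, A_{S1}$ are defined with respect to $\pi_S$ (not $\pi_S \mid W_S$), and it is the rectangle property of $\pi_S$ itself that gives $X_{S1} \perp Y_{S0} \mid X_{S0} Y_{S1} \M_S$ in $\pi_S$ for \emph{all} values of $(X_{S0}, Y_{S1}, \M_S)$; your route via $(ii)$ would only cover points in $\mathcal{G}_S$.
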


For brevity, we only sketch its proof, as it highly resembles the contents of Section \ref{subsec:one_step}.
\\

\proofsketch{We first prove statements $(i)$ by induction. The base case, where $S = \emptyset$, is trivial due to the fact that $\pi$ is a standard protocol. For the induction step, suppose that the statement is true for $|S| = k$. Then, for any $S$ with $|S|=k$, $\pi_S$ has the rectangle property with respect to $\mu_S$. Following the approach in the proof of Claim \ref{clm:partial_rec}, we can show that $\pi_{S0}$ has the rectangle property with respect to $\pi_{S0}$ and $\pi_{S1}$ has the rectangle property with respect to $\pi_{S1}$. Apply this argument to all $|S| = k$ concludes the induction step.

For $(ii)$, given $(i)$ the statement is equivalent to the first half of the proof of Claim \ref{clm:partial_rec}.

Having established $(i)$ and $(ii)$, the proofs for statements $(iii)$, $(iv)$, and $(v)$ closely follow those of Claims \ref{claim:mult_adv}, \ref{clm:disadv_bounds}, and \ref{clm:prob_bounds}, respectively. For the sake of succinctness, we omit their proofs.

Finally, $(vi)$ follows from the fact that $\adv^{\pi_{S0}}(f^{\oplus}(X_{S0}, Y_{S0})) = \E(A_{S0} \mid W_S)$ by the same reasoning we argued in the earlier subsection.
}

\subsection{Related Bounds} 

In this subsection, we present several useful inequality bounds regarding the parameters we have set. These bounds shall be used repeatedly throughout the section.

\begin{claim}
For any $S$, we have $\E(Z_S \mid W_S) \geq 1-\varepsilon_S$. 
\label{clm:triangle_advantages}
\end{claim}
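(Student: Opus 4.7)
The plan is to combine a triangle-inequality lower bound on $\E(Z_S)$ with a pointwise upper bound on $Z_S$ under $\overline{W_S}$, and then finish with a short averaging argument. The overall structure mirrors the proof of Claim~\ref{clm:prob_bounds}, which is the ``twin'' bound on $p_S$ in the same spirit.

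First, I would invoke the same convexity/triangle inequality already used in the proof of Claim~\ref{clm:prob_bounds}. Since $\adv(b \mid C) = |2\Pr(b=0 \mid C) - 1|$, Jensen's inequality applied to $t \mapsto |t|$ gives $\E_C[\adv(b \mid C)] \geq \adv(b)$. Taking $b = f^{\oplus}(X_S, Y_S)$ and $C = (X_{S0}, Y_{S1}, \M_S)$ under the distribution $\pi_S$, this yields $\E(Z_S) \geq \adv^{\pi_S}(f^{\oplus}(X_S, Y_S)) = 1 - \varepsilon_S$.

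Next, I would split by the event $W_S$ using the law of total expectation:
$$1 - \varepsilon_S \;\leq\; \E(Z_S) \;=\; p_S \cdot \E(Z_S \mid W_S) \;+\; (1 - p_S) \cdot \E(Z_S \mid \overline{W_S}).$$
By the definition of $W_S$ as the event $\{Z_S \geq \alpha\}$, conditioning on $\overline{W_S}$ forces $Z_S < \alpha$ pointwise, so $\E(Z_S \mid \overline{W_S}) \leq \alpha$.

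Finally, I would conclude by a two-case split. If $1 - \varepsilon_S \leq \alpha$, then on $W_S$ the pointwise inequality $Z_S \geq \alpha \geq 1 - \varepsilon_S$ gives $\E(Z_S \mid W_S) \geq 1 - \varepsilon_S$ immediately. Otherwise $1 - \varepsilon_S > \alpha \geq \E(Z_S \mid \overline{W_S})$, so in the convex combination above one summand lies strictly below $1 - \varepsilon_S$ while the total is at least $1 - \varepsilon_S$; this forces $\E(Z_S \mid W_S) \geq 1 - \varepsilon_S$, for otherwise both summands would fall below $1 - \varepsilon_S$ and the total could not reach it. (The degenerate case $p_S = 1$ gives the claim directly from the first step.) I do not anticipate any real technical obstacle: this is an elementary averaging argument, parallel to and essentially no harder than the proof of Claim~\ref{clm:prob_bounds}.
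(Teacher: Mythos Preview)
Your proposal is correct and follows essentially the same approach as the paper: first establish $\E(Z_S) \geq 1-\varepsilon_S$ via the triangle inequality on advantages, then split by $W_S$ and use $\E(Z_S \mid \overline{W_S}) \leq \alpha$. The only cosmetic difference is in the last step: rather than your two-case split, the paper simply observes $\E(Z_S \mid \overline{W_S}) \leq \alpha \leq \E(Z_S \mid W_S)$ and replaces the smaller conditional expectation by the larger one in the convex combination, giving $\E(Z_S) \leq \E(Z_S \mid W_S)$ in one line.
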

\begin{proof} Consider the following calculation:
\begin{align*}
    \E(Z_S) & = \sum_{(X_{S0}, Y_{S1}, \M_S)} \pi_S(X_{S0}, Y_{S1}, \M_S) \cdot \adv^{\pi_S}(f^{\oplus}(X_S, Y_S) \mid X_{S0} Y_{S1} \M_S) \\
    & = \sum_{(X_{S0}, Y_{S1}, \M_S)} \pi_S(X_{S0}, Y_{S1}, \M_S) \cdot \left| 2 \cdot \pi_S(f^{\oplus}(X_S, Y_S) = 0 \mid X_{S0} Y_{S1} \M_S) - 1\right| \\
    & \geq \left|2 \left(\sum_{(X_{S0}, Y_{S1}, \M_S)}\pi_S(X_{S0}, Y_{S1}, \M_S) \cdot \pi_S(f^{\oplus}(X_S, Y_S) = 0 \mid X_{S0} Y_{S1} \M_S) 
    \right) -1 \right| \\
    & = \left|2 \cdot \pi_S(f^{\oplus}(X_S, Y_S) = 0) -1 \right| \\
    & = \adv^{\pi_S} (f^{\oplus}(X_S, Y_S)) \\
    & = 1-\varepsilon_S
\end{align*}
where the only inequality uses the triangle inequality. Then, we can derive
\begin{align*}
    1 - \varepsilon_S & \leq \E(Z_S) \\
    & = \Pr(W_S) \cdot \E(Z_S \mid W_S) + \Pr(\overline{W_S}) \cdot \E(Z_S \mid \overline{W_S}) \\
    & \leq \Pr(W_S) \cdot \E(Z_S \mid W_S) + \Pr(\overline{W_S}) \cdot \E(Z_S \mid W_S) \\
    & = \E(Z_S \mid W_S).
\end{align*}
where the inequality follows from the notation of $W_S = (Z_S \geq \alpha)$; thus, $\E(Z_S \mid \overline{W_S}) \leq \alpha \leq \E(Z_S \mid W_S)$. This concludes the proof.
\end{proof}

\begin{corollary}
    For any $k$, we have $\sum_{|S| = k} \varepsilon_{S} \leq 2^{\tau k} \varepsilon$.
    \label{col:bound_sum_eps}
\end{corollary}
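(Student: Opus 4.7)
The plan is to prove this by a direct induction on $k$, using Lemma \ref{lem:1_level_decomp}$(iv)$ together with Claim \ref{clm:triangle_advantages} at each step. The key observation is that these two results combine to give a clean one-level recurrence of the form $\varepsilon_{S0} + \varepsilon_{S1} \leq 2^\tau \cdot \varepsilon_S$, and iterating this across the $k$ levels of the recursive decomposition immediately yields the claim.

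More concretely, first I would unpack the constant: since $\tau$ is defined by $\frac{1+\sqrt{\alpha}}{2} = 2^{-\tau}$, we have $\frac{2}{1+\sqrt{\alpha}} = 2^\tau$. Combining Lemma \ref{lem:1_level_decomp}$(iv)$, which gives
\[ \varepsilon_{S0} + \varepsilon_{S1} \leq \frac{2}{1+\sqrt{\alpha}} \cdot \bigl(1 - \E(Z_S \mid W_S)\bigr), \]
with Claim \ref{clm:triangle_advantages}, which gives $\E(Z_S \mid W_S) \geq 1 - \varepsilon_S$ and hence $1 - \E(Z_S \mid W_S) \leq \varepsilon_S$, yields the key one-level inequality
\[ \varepsilon_{S0} + \varepsilon_{S1} \leq 2^\tau \cdot \varepsilon_S. \]

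Now I would induct on $k$. The base case $k=0$ is trivial since $\sum_{|S|=0} \varepsilon_S = \varepsilon_\emptyset = \varepsilon = 2^{\tau \cdot 0} \varepsilon$. For the inductive step, assuming $\sum_{|S|=k} \varepsilon_S \leq 2^{\tau k} \varepsilon$, we partition the strings of length $k+1$ by their length-$k$ prefix and apply the one-level inequality to each:
\[ \sum_{|S'|=k+1} \varepsilon_{S'} = \sum_{|S|=k} (\varepsilon_{S0} + \varepsilon_{S1}) \leq \sum_{|S|=k} 2^\tau \cdot \varepsilon_S \leq 2^\tau \cdot 2^{\tau k} \varepsilon = 2^{\tau(k+1)} \varepsilon. \]

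There is essentially no obstacle here: the proof is a two-line induction once one recognizes that Claim \ref{clm:triangle_advantages} exactly eliminates the $\E(Z_S \mid W_S)$ term from Lemma \ref{lem:1_level_decomp}$(iv)$. All the real work was done earlier in establishing those two results; this corollary is just the iterated form. The only thing worth noting is that the bound is summed (not averaged), which is why the factor $2^{\tau k}$ is much smaller than $2^k$ and will later enable the geometric decay of the \emph{average} disadvantage across levels as described in the technical overview.
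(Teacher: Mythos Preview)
Your proposal is correct and follows essentially the same approach as the paper's own proof: combine Lemma~\ref{lem:1_level_decomp}$(iv)$ with Claim~\ref{clm:triangle_advantages} to get the one-level inequality $\varepsilon_{S0}+\varepsilon_{S1}\le 2^{\tau}\varepsilon_S$, then induct on $k$. In fact, the paper's proof writes the constant as $2^{-\tau}$ where it should be $2^{\tau}$ (since $\frac{2}{1+\sqrt{\alpha}}=2^{\tau}$ by the definition of $\tau$), so your version is actually more accurate on that point.
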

\begin{proof} Combining Lemma \ref{lem:1_level_decomp} and Claim \ref{clm:triangle_advantages}, we have $\varepsilon_{S0} + \varepsilon_{S1} \leq \frac{2}{1+\sqrt{\alpha}} \cdot \left(1-\E(Z_S \mid W_S) \right) \leq 2^{-\tau} \varepsilon_S.$ Induction on $|S|$ finishes the proof.
\end{proof}

\begin{corollary}
    For any $S$, we have $\varepsilon_S \leq \varepsilon$.
\label{cor:bound_vareps}
\end{corollary}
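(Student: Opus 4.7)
The plan is to prove the corollary by induction on $|S|$, establishing the stronger claim that at each decomposition step the disadvantage weakly decreases, i.e., $\varepsilon_{Sb} \leq \varepsilon_S$ for every $S$ and every $b \in \{0,1\}$. The base case $|S|=0$ is immediate from the definition $\varepsilon_\emptyset = \varepsilon$.

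For the inductive step, I would combine three ingredients that are already in hand. First, by Lemma~\ref{lem:1_level_decomp}$(iii)$ we have the pointwise identity $A_{S0}A_{S1} = Z_S$; since both $A_{S0}, A_{S1} \in [0,1]$ as advantages, this gives the pointwise lower bound $A_{Sb} = Z_S / A_{S,1-b} \geq Z_S$ at every point in the support of $\pi_S$. Taking expectation conditioned on $W_S$ yields $\E(A_{Sb} \mid W_S) \geq \E(Z_S \mid W_S)$. Then Claim~\ref{clm:triangle_advantages} gives $\E(Z_S \mid W_S) \geq 1-\varepsilon_S$, and Lemma~\ref{lem:1_level_decomp}$(vi)$ converts this into $\varepsilon_{Sb} = 1 - \E(A_{Sb} \mid W_S) \leq \varepsilon_S$. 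Applying the inductive hypothesis $\varepsilon_S \leq \varepsilon$ then closes the induction.

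I do not anticipate a substantive obstacle here. Morally, the corollary follows from the observation that conditioning on $W_S = \{Z_S \geq \alpha\}$ forces each factor $A_{Sb}$ to be at least $Z_S$ \emph{pointwise} (not merely in expectation), which is strictly stronger than what the AM-GM step in the proof of Lemma~\ref{lem:1_level_decomp}$(iv)$ extracts. This per-child pointwise bound is what prevents $\varepsilon_S$ from ever exceeding the top-level disadvantage $\varepsilon$, even though the \emph{sum} $\varepsilon_{S0}+\varepsilon_{S1}$ is only bounded by $2^{\tau}\varepsilon_S > \varepsilon_S$ in Corollary~\ref{col:bound_sum_eps}.
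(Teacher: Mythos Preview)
Your proposal is correct and matches the paper's proof essentially line for line: both argue by induction, use the pointwise inequality $A_{Sb} \geq A_{S0}A_{S1} = Z_S$ (since advantages lie in $[0,1]$), take expectation conditioned on $W_S$, invoke Claim~\ref{clm:triangle_advantages}, and convert via Lemma~\ref{lem:1_level_decomp}$(vi)$. One cosmetic remark: writing $A_{Sb} = Z_S / A_{S,1-b}$ is ill-defined when $A_{S,1-b}=0$; it is cleaner (and what the paper does) to simply note $Z_S = A_{S0}A_{S1} \leq A_{Sb}$ directly.
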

\begin{proof} It suffices to show that $\varepsilon_{S0}, \varepsilon_{S1} \leq \varepsilon_S$ for any $S$, as the fact follows inductively. By Lemma \ref{lem:1_level_decomp}, we have $Z_S = A_{S0}A_{S1} \leq A_{S0}$ holds pointwisely.  Moreover, by Lemma \ref{lem:1_level_decomp} and Claim \ref{clm:triangle_advantages}, the disadvantage of $\pi_{S0}$ is $\E(A_{S0} \mid W_S) \geq \E(Z_S \mid W_S) \geq 1-\varepsilon_S$. Plugging in $\varepsilon_{S0} = 1-\E(A_{S0} \mid W_S) $ yields $\varepsilon_{S0} \leq \varepsilon_S$. Similarly, we also have $\varepsilon_{S1} \leq \varepsilon_S$.
\end{proof}

The following claim is critical to several of our proofs in the next subsection.

\begin{claim} $\sum_{|S| = m} \chi_S = \Omega(n)$.
\label{clm:sum_chi}
\end{claim}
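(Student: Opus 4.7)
The plan is to track $C_k := \sum_{|S|=k}\chi_S$ jointly with an auxiliary quantity $D_k := \sum_{|S|=k}\chi_S\,\varepsilon_S$ via a carefully engineered potential. Since $p_S \leq 1$, the normalized sequence $a_k := C_k/2^k$ is monotone non-increasing with $a_0 = 1$, so the claim reduces to bounding the cumulative drop $\sum_{k=0}^{m-1}\Delta_k$, where $\Delta_k := a_k - a_{k+1} \geq 0$, by a constant strictly less than~$1$.

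The key technical step is a sharpening of Lemma~\ref{lem:1_level_decomp}(iv) that extracts a $1-p_S$ correction. Decomposing $\E(Z_S) = p_S\E(Z_S\mid W_S) + (1-p_S)\E(Z_S\mid\overline{W_S})$ and plugging in $\E(Z_S)\geq 1-\varepsilon_S$ (from the proof of Claim~\ref{clm:triangle_advantages}) together with $\E(Z_S\mid\overline{W_S}) \leq \alpha = 1-2\varepsilon$ (by definition of $W_S$) yields
$$p_S\bigl(1 - \E(Z_S\mid W_S)\bigr) \;\leq\; \varepsilon_S \;-\; 2\varepsilon\,(1-p_S).$$
Multiplying by $2^\tau$, combining with $\varepsilon_{S0}+\varepsilon_{S1}\leq 2^\tau(1-\E(Z_S\mid W_S))$, summing $\chi_S$-weighted over $|S|=k$, and using the identity $\sum_{|S|=k}(1-p_S)\chi_S = C_k - C_{k+1}/2 = 2^k\Delta_k$ produces the refined recursion
$$D_{k+1} \;\leq\; 2^\tau D_k \;-\; 2^{\tau+1}\varepsilon\cdot 2^k\,\Delta_k.$$

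Normalizing via $\tilde D_k := D_k/(2^{k\tau}\varepsilon)$, so that $\tilde D_0 = 1$ and $\tilde D_k \geq 0$, this becomes $\tilde D_{k+1} \leq \tilde D_k - 2\cdot 2^{k(1-\tau)}\Delta_k$. Telescoping from $k=0$ to $m-1$ and using $\tilde D_m \geq 0$ yields $\sum_{k=0}^{m-1} 2^{k(1-\tau)}\Delta_k \leq \tfrac12$. Since $\tau\in(0,1)$ makes each weight $2^{k(1-\tau)}\geq 1$ and each $\Delta_k\geq 0$, one obtains $\sum_k\Delta_k \leq \tfrac12$, whence $a_m \geq \tfrac12$ and $\sum_{|S|=m}\chi_S = 2^m a_m \geq n/2$.

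The main obstacle is precisely retaining the $-2\varepsilon(1-p_S)$ correction above. Without it, the refined recursion degrades to $D_{k+1}\leq 2^\tau D_k$; combined with the natural inequality $C_{k+1}\geq 2C_k - D_k/\varepsilon$ (from $p_S \geq 1-\varepsilon_S/(2\varepsilon)$), telescoping then yields only $\sum\Delta_k \leq 1/(2(1-2^{\tau-1}))$, which exceeds~$1$ for every $\alpha\in(0,1)$ and produces a vacuous negative bound on $a_m$. The correction effectively reroutes each drop $\Delta_k$ as a deposit into $\tilde D_k$ with the critical \emph{growing} weight $2^{k(1-\tau)}$, which lets the telescope close purely against the trivial floor $\tilde D_m \geq 0$.
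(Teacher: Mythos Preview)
Your proof is correct and takes a genuinely different route from the paper's. The paper defines a \emph{per-node} potential
\[
\Lambda_S \;=\; 2^{-|S|}\log\frac{1}{\chi_S} \;+\; \frac{1}{\varepsilon}\Bigl(\tfrac{1+\sqrt{\alpha}}{2}\Bigr)^{|S|}\varepsilon_S,
\]
shows $\Lambda_{S0}+\Lambda_{S1}\le\Lambda_S$, and deduces $\frac{1}{n}\sum_{|S|=m}\log\chi_S \ge -1$; the conclusion $\sum\chi_S \ge n/e$ then comes from AM--GM. You instead track \emph{level-aggregate} quantities $C_k=\sum_{|S|=k}\chi_S$ and $D_k=\sum_{|S|=k}\chi_S\varepsilon_S$ directly, and your key insight is to extract the sharper inequality $p_S(1-\E(Z_S\mid W_S)) \le \varepsilon_S - 2\varepsilon(1-p_S)$ rather than separately bounding $p_S$ and $1-\E(Z_S\mid W_S)$ as the paper does. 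This lets you telescope $\tilde D_k$ against the drops $\Delta_k$ with growing weights $2^{k(1-\tau)}$, yielding $\sum\Delta_k \le 1/2$ and hence $\sum\chi_S \ge n/2$ without any detour through logarithms or AM--GM. Your argument is more elementary, delivers a slightly better constant, and makes the role of the $-2\varepsilon(1-p_S)$ correction very transparent; the paper's potential-function formulation, on the other hand, packages the same trade-off in a way that may generalize more readily if one later needs to track other per-node quantities alongside $\chi_S$ and $\varepsilon_S$.
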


\begin{proof}
For any string $S$ with $|S| \leq m$, denote \textit{potential} of $\pi_S$ to be

$$\Lambda_S = \underbrace{\left(2^{-|S|} \log{\frac{1}{\chi_S}}\right)}_{\phi_S} \ + \ \frac{1}{\varepsilon} \cdot \underbrace{\left(\frac{1+\sqrt{\alpha}}{2}\right)^{|S|} \varepsilon_S}_{\psi_S}.$$

We claim that a conditional decomposition which splits $\pi_S$ into $\pi_{S0}$ and $\pi_{S1}$ never decreases total potentials; that is for any $S$ we must have $\Lambda_{S0} + \Lambda_{S1} \leq \Lambda_S$. To see this, consider
\begin{align*}
        \phi_{S0} + \phi_{S1} - \phi_S & = 2^{-|S|} \cdot \log \frac{\chi_S}{\sqrt{\chi_{S0} \chi_{S1}}}  \\
        &  = 2^{-|S|} \log{\frac{1}{p_S}} \tag{$\chi_{S0} = \chi_{S1} = p_S \cdot \chi_S$}\\
        & \leq 2^{-|S|} \log\left(\frac{\E(Z_S \mid W_S) - \alpha}{1-\varepsilon_S - \alpha}\right) \tag{Lemma \ref{lem:1_level_decomp}}\\
        & \leq 2^{-|S|} \cdot \frac{ \E(Z_S \mid W_S) - (1-\varepsilon_S)}{1-\varepsilon_S - \alpha} \tag{$\log x \leq x-1$}\\
        & \leq  2^{-|S|} \cdot \frac{ \E(Z_S \mid W_S) - (1-\varepsilon_S)}{\varepsilon}
    \end{align*}
where the last equality follows Claim \ref{clm:triangle_advantages} that $\E(Z_S \mid W_S) \geq 1-\varepsilon_S$, and Corollary \ref{cor:bound_vareps} that $1-\varepsilon_S - \alpha \geq 1-\varepsilon - \alpha = \varepsilon$. Moreover, \begin{align*}
        \psi_{S0} + \psi_{S1} - \psi_S & = \left(\frac{1+\sqrt{\alpha}}{2}\right)^{|S|}  \cdot \left(\frac{1+\sqrt{\alpha}}{2} \cdot (\varepsilon_{S0} + \varepsilon_{S1}) - \varepsilon_S\right) \\
        & \leq \left(\frac{1+\sqrt{\alpha}}{2}\right)^{|S|} \cdot \left(1 - \varepsilon_S - \E(Z_S \mid W_S)\right)  \tag{Lemma \ref{lem:1_level_decomp}}\\
        & \leq 2^{-|S|} \cdot \left(1 - \varepsilon_S - \E(Z_S \mid W_S)\right). \tag{Claim \ref{clm:triangle_advantages}}
    \end{align*}

Combining the two inequalities, we have $\Lambda_{S0} + \Lambda_{S1} \leq \Lambda_S$ for any $S$. Applying it recursively, we have $\Lambda_{\emptyset} \geq \sum_{|S| = m} \Lambda_S$ which leads to:
\begin{align*}
        1 = \Lambda_{\emptyset} & \geq \sum_{|S| = m} \Lambda_S = \left(\sum_{|S| = m} \frac{1}{n} \log{\frac{1}{\chi_S}}\right) + \frac{1}{\varepsilon} \cdot \left(\sum_{|S| = m} n^{-\tau} \varepsilon_S \right) \geq \sum_{|S| = m} \frac{1}{n} \log{\frac{1}{\chi_S}} .
    \end{align*}
In other words, we have $$\frac{1}{n} \sum_S {\log{\chi_S}} \geq -1.$$ Then, by the AM-GM inequality, we have

    \begin{align*}
        \sum_{S} \chi_S & \geq n \cdot \left(\prod_S \chi_S\right)^{1/n} \geq  n e^{-1}
    \end{align*}
    as wished.
\end{proof}

The following patterns will be prevalent throughout this section.

\begin{restatable}{lem}{pattern} Let $c > 0$ be an arbitrary constant. For each $S \in \{0,1\}^{\leq m}$, let $q_S \in \mathbb{R}_{\geq 0}$ satisfying the following inequality:
\begin{equation}
    q_{S0} + q_{S1} \leq \frac{1}{p_S} \cdot (q_S + c \cdot \cH(p_S)).
    \label{eq:pattern}
\end{equation}
Then, we have
$$\sum_{|S| = m} \chi_S q_S \leq q_\emptyset + O(n^{\tau} \log{n}).$$
\label{lem:pattern}
\end{restatable}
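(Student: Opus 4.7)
The plan is to multiply the hypothesized recursion by the weights $\chi_{S0}=\chi_{S1}=p_S\chi_S$, telescope along the binary tree of protocols, and then control the accumulated error via Jensen's inequality and the bounds from the previous subsection. Multiplying the assumed inequality at node $S$ by $p_S\chi_S$ clears the $1/p_S$ factor and yields $\chi_{S0}q_{S0}+\chi_{S1}q_{S1}\le \chi_S q_S+c\,\chi_S\cH(p_S)$. Summing over $|S|=k$ and iterating from $k=0$ up to $k=m-1$ telescopes to
$$\sum_{|S|=m}\chi_S q_S \;\le\; q_\emptyset+c\sum_{k=0}^{m-1}H_k, \qquad H_k\;:=\;\sum_{|S|=k}\chi_S\cH(p_S),$$
so it suffices to prove that $\sum_{k=0}^{m-1}H_k=O(n^\tau\log n)$.

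Next, I will obtain a pointwise bound on $\cH(p_S)$. From Lemma~\ref{lem:1_level_decomp}(v) together with $\alpha=1-2\varepsilon$, the same estimate used in the proof of Claim~\ref{clm:sum_chi} gives $p_S\ge 1/2$, so the elementary inequality $\cH(p)\le (1-p)\log\tfrac{1}{1-p}+2(1-p)$ is valid at every node. In addition, Lemma~\ref{lem:1_level_decomp}(v) together with $\E(Z_S\mid W_S)\le 1$ gives $1-p_S\le \varepsilon_S/\varepsilon$, which combined with $\chi_S\le 1$ and Corollary~\ref{col:bound_sum_eps} yields the level-wise cap
$$R_k\;:=\;\sum_{|S|=k}\chi_S(1-p_S)\;\le\;\tfrac{1}{\varepsilon}\sum_{|S|=k}\varepsilon_S\;\le\;2^{\tau k}.$$

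The main obstacle --- and the most delicate step --- is taming the $\log\tfrac{1}{1-p_S}$ factor, since individual $(1-p_S)$'s can be arbitrarily small, making the log arbitrarily large. I will handle this via Jensen's inequality for the concave function $x\mapsto x\log(1/x)$, weighted by $\chi_S$, using $T_k:=\sum_{|S|=k}\chi_S\le 2^k\le n$:
$$\sum_{|S|=k}\chi_S(1-p_S)\log\tfrac{1}{1-p_S}\;\le\;T_k\cdot\tfrac{R_k}{T_k}\log\tfrac{T_k}{R_k}\;=\;R_k\log\tfrac{T_k}{R_k}\;\le\;R_k\log n+O(1),$$
using $R_k\log(1/R_k)=O(1)$. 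Therefore $H_k\le R_k(\log n+2)+O(1)$, and summing over $k$ gives
$$\sum_{k=0}^{m-1}H_k\;\le\;(\log n+2)\sum_{k=0}^{m-1}2^{\tau k}+O(m)\;=\;O(n^\tau\log n),$$
which, plugged back into the telescoped inequality, finishes the proof. The naive pointwise bound $\cH(p_S)\le 1$ yields only $\sum_k H_k=O(n)$; the Jensen step, which trades per-node log blow-ups for a single global factor $\log(T_k/R_k)\le\log n$, is what enables the subpolynomial rate $n^\tau\log n$.
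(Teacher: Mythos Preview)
Your proof is correct and follows essentially the same approach as the paper: multiply by $\chi_{S0}=\chi_{S1}=p_S\chi_S$, telescope, and control the accumulated $\cH(p_S)$ terms via concavity together with the level-wise bound $1-p_S\le\varepsilon_S/(2\varepsilon)$ and Corollary~\ref{col:bound_sum_eps}. The only cosmetic differences are that the paper drops the $\chi_S$ weight immediately (using $\chi_S\le 1$) and applies Jensen directly to the concave function $\cH$ over the uniform weights on $2^k$ nodes, whereas you retain the $\chi_S$ weights and first bound $\cH(p)\le(1-p)\log\tfrac{1}{1-p}+2(1-p)$ before applying Jensen to $x\log(1/x)$; both routes yield the same $O(2^{\tau k}\cdot k)$ per-level estimate.
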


\begin{restatable}{lem}{patternconstant} Let $c, c' > 0$ be arbitrary constants. For each $S \in \{0,1\}^{\leq m}$, let $q_S \in \mathbb{R}_{\geq 0}$ satisfying the following inequality:
\begin{equation}
    q_{S0} + q_{S1} \leq \frac{1}{p_S} \cdot (q_S + c \cdot \cH(p_S)) + c'
    \label{eq:pattern_constant}
\end{equation}
Then, we have
$$\sum_{|S| = m} \chi_S q_S \leq q_\emptyset + O(n).$$
\label{lem:pattern_constant}
\end{restatable}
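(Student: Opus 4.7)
The plan is to mimic the weighted-telescoping argument underlying Lemma~\ref{lem:pattern}, while tracking the new additive constant $c'$ as a separate, geometric contribution. Multiplying inequality~(\ref{eq:pattern_constant}) by $\chi_{S0}=\chi_{S1}=p_S\chi_S$, the factor $1/p_S$ on the right-hand side cancels and we obtain the pointwise telescoping bound
\begin{equation*}
\chi_{S0}\,q_{S0}+\chi_{S1}\,q_{S1}\;\leq\;\chi_S q_S+c\,\chi_S\,\cH(p_S)+c'\,p_S\chi_S.
\end{equation*}
Setting $T_k:=\sum_{|S|=k}\chi_S q_S$, $E_k:=\sum_{|S|=k}\chi_S\cH(p_S)$, and $\Sigma_j:=\sum_{|S|=j}\chi_S$, summing over $|S|=k$ and using the identity $p_S\chi_S=\chi_{S0}$ (so that $\sum_{|S|=k}p_S\chi_S=\tfrac12\Sigma_{k+1}$), this becomes $T_{k+1}\leq T_k+c\,E_k+\tfrac{c'}{2}\,\Sigma_{k+1}$. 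Telescoping over $k=0,\ldots,m-1$ yields
\begin{equation*}
T_m\;\leq\;q_\emptyset\;+\;c\sum_{k=0}^{m-1}E_k\;+\;\tfrac{c'}{2}\sum_{j=1}^{m}\Sigma_j.
\end{equation*}

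Next, I would bound the two error sums separately. For the entropy sum, I would apply Lemma~\ref{lem:pattern} as a black box to the auxiliary sequence $\hat q_\emptyset=0$ with $\hat q_{S0}+\hat q_{S1}=\tfrac{1}{p_S}(\hat q_S+c\,\cH(p_S))$ taken with equality. The same telescoping identity gives $\sum_{|S|=m}\chi_S\hat q_S=c\sum_{k=0}^{m-1}E_k$, while Lemma~\ref{lem:pattern} bounds the left-hand side by $0+O(n^\tau\log n)$, so $\sum_{k=0}^{m-1}E_k=O(n^\tau\log n)$. For the $\Sigma$-sum, since $\chi_\emptyset=1$ and $\Sigma_{j+1}=\sum_{|S|=j}(\chi_{S0}+\chi_{S1})=2\sum_{|S|=j}p_S\chi_S\leq 2\Sigma_j$, an easy induction yields $\Sigma_j\leq 2^j$, and hence $\sum_{j=1}^{m}\Sigma_j\leq 2^{m+1}=2n$. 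Plugging both estimates back in,
\begin{equation*}
T_m\;\leq\;q_\emptyset+O(n^\tau\log n)+O(n)\;=\;q_\emptyset+O(n),
\end{equation*}
using $\tau\in(0,1)$ so that $n^\tau\log n=o(n)$.

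The only real subtlety is the extraction of $\sum_{k}E_k=O(n^\tau\log n)$ from Lemma~\ref{lem:pattern}, but once one notices that the same telescoping machinery lets that lemma be invoked as a black box on the ``null'' sequence $\hat q_\emptyset=0$, everything else is routine: the new $c'$ contribution is local to each decomposition step and accumulates only through the doubling of $\Sigma_j$, giving exactly the extra $O(n)$ slack that the statement allows. No control over the balance of the $p_S$'s is required beyond $p_S\leq 1$, so no delicate potential-function argument (as in Claim~\ref{clm:sum_chi}) is needed here.
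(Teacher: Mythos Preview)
Your proof is correct and follows essentially the same weighted-telescoping argument as the paper: multiply \eqref{eq:pattern_constant} by $\chi_{S0}=\chi_{S1}=p_S\chi_S$, sum over each level, and telescope, then bound the entropy contribution by $O(n^\tau\log n)$ and the $c'$ contribution by $O(n)$. The only cosmetic difference is that the paper directly recalls the intermediate bound $\sum_{k=0}^{m-1}\sum_{|S|=k}\cH(p_S)=O(n^\tau\log n)$ from the proof of Lemma~\ref{lem:pattern} and crudely bounds $c'\chi_S p_S\le c'$ to get $c'\cdot 2^k$ per level, whereas you invoke Lemma~\ref{lem:pattern} as a black box on the auxiliary sequence $\hat q$ and track the $c'$ term via $\Sigma_j\le 2^j$; both routes yield the same estimate.
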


The proof of Lemma \ref{lem:pattern} and Lemma \ref{lem:pattern_constant} involves algebraic computations and will be deferred to the Appendix.

\subsection{Certifying a \emph{Nice} Generalized Protocol for \texorpdfstring{$f$}{f}}

We will show that following the recursive procedure (Figure \ref{figure:recursive_decomposition}), there exists an index $S \in \{0,1\}^m$ such that the protocol $\pi_S$ has small error (i.e. disdavantage), small information cost, and small $\theta$-cost with respect to $\mu$. The proof will rely on a probabilistic argument: we will show that if the index $S$ is sampled from the ``proportional'' distribution $\D$ over $\{0,1\}^m$, then $\pi_S$ has these properties in expectation. Specifically, we define the distribution $\D$ as follows.
\begin{figure}[H]
    \centering
    \begin{tabular}{|p{15cm}|}
    \hline ~\\
    \multicolumn{1}{|c|}{\textbf{A distribution $\D$ sampling an index}} \\
    \begin{itemize}
        \item Let $\D$ be a distribution over $\{0,1\}^m$ where $\D(S) = \frac{\chi_S}{\sum_{|S'| = m} \chi_{S'}}$ (i.e. is proportional to $\chi_S$.)
    \end{itemize}
    \\
    \hline
    \end{tabular}
    \caption{The ``proportional'' distribution $\D$ for sampling an index $S \in \{0,1\}^m$.}
    \label{figure:proportional_dist}
\end{figure}

We wish for these properties for the protocol $\pi_S$.

\begin{lemma}[Decomposition Lemma; informal] Over the distribution $S \sim \D$, the generalized protocol $\pi_S$  has the following properties in expectation.

\begin{enumerate}
    \item [(1)] $\pi_S$ errs with small probability $\frac{1}{\text{poly}(n)}$.
    \item[(2)] $\pi_S$ has small information cost $O(\mathcal{I}/n+1)$.
    \item[(3)] $\pi_S$ has small $\theta$-cost $\frac{1}{\text{poly}(n)}$ with respect to $\mu$.
\end{enumerate}
\label{lem:main_decomp_informal}
\end{lemma}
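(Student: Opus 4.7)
My plan is to prove all three bounds via a single probabilistic argument over $S \sim \D$. For each quantity $Q_S \in \{\varepsilon_S,\, I_S,\, \theta_{\mu_S}(\pi_S)\}$, I will bound the weighted sum $\sum_{|S|=m}\chi_S Q_S$ and divide by $\sum_{|S|=m}\chi_S = \Omega(n)$ (Claim~\ref{clm:sum_chi}) to extract the expectation under $\D$. Part~(1) is essentially immediate: Corollary~\ref{col:bound_sum_eps} gives $\sum_{|S|=m}\varepsilon_S \leq n^{\tau}\varepsilon$, and since $\chi_S \leq 1$, we obtain $\E_{S\sim\D}[\varepsilon_S] = O(n^{\tau-1})$. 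Because $\pi_S$ outputs the more-likely bit $B_S$, its error is exactly $\varepsilon_S/2$, giving expected error $1/\mathrm{poly}(n)$ since $\tau < 1$.

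For part~(2), I aim to establish a one-step recursion fitting Lemma~\ref{lem:pattern_constant}. The conditional decomposition of $\pi_S$ into $\pi_{S0}, \pi_{S1}$ decomposes into three operations: conditioning on $W_S$, binary decomposition of $\pi_S \mid W_S$, and appending the final bits $B_{S0}, B_{S1}$. The first step by Lemma~\ref{lem:ic_condition} contributes a multiplicative $1/p_S$ and an additive $\cH(p_S)/p_S$. The second step is additive on information cost via the standard chain-rule argument, justified for generalized protocols by the partial rectangle property of Lemma~\ref{lem:1_level_decomp}(ii). The third step contributes $O(1)$ because each $B_{Sj}$ is a single bit computed by one player from their own view. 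Assembling these, I expect the recursion
\[
I_{S0} + I_{S1} \;\leq\; \frac{1}{p_S}\!\left(I_S + O(\cH(p_S))\right) + O(1),
\]
which by Lemma~\ref{lem:pattern_constant} yields $\sum_{|S|=m}\chi_S I_S \leq \mathcal{I} + O(n)$. Dividing by $\Omega(n)$ gives $\E_{S\sim\D}[I_S] = O(\mathcal{I}/n + 1)$.

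Part~(3) follows the same template but uses Lemma~\ref{lem:pattern} (without the additive constant). Fact~\ref{fact:theta_dkl} identifies $\theta_{\mu_S}(\pi_S)$ with a KL divergence, and Lemma~\ref{lem:pointwise_linear_theta} gives pointwise additivity across the binary decomposition; the conditioning on $W_S$ is handled by the KL-conditioning Lemma~\ref{lem:expected_kl}, contributing $\cH(p_S)/p_S$. Crucially, appending the deterministic bits $B_{Sj}$ does not change $\theta$-cost, because both $\pi_{Sj}$ and its standardization acquire the same conditional distribution of $B_{Sj}$ given the sender's view. The resulting recursion
\[
\theta_{\mu_{S0}}(\pi_{S0}) + \theta_{\mu_{S1}}(\pi_{S1}) \;\leq\; \frac{1}{p_S}\!\left(\theta_{\mu_S}(\pi_S) + O(\cH(p_S))\right)
\]
plugs into Lemma~\ref{lem:pattern}, and since $\theta_\mu(\pi) = 0$ by Observation~\ref{obs:std_theta_zero}, we obtain $\sum_{|S|=m}\chi_S \theta_{\mu_S}(\pi_S) \leq O(n^\tau \log n)$, yielding $\E_{S\sim\D}[\theta_{\mu_S}(\pi_S)] = O(n^{\tau-1}\log n) = 1/\mathrm{poly}(n)$.

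The main obstacle is the clean accounting of the final bits in the information-cost recursion. A naive treatment that propagates the $\log_2 n$ final bits trivially would accumulate $\Omega(\log n)$ extra information per leaf, which is too costly. The key observation is that the $+O(1)$ is per-application of the recursion, so the total overhead across all $2^k$ applications at level $k$ sums to $\sum_{k=0}^{m-1} 2^k \cdot O(1) = O(n)$, matching the overhead bound in Lemma~\ref{lem:pattern_constant}; divided by the $\Omega(n)$ normalization, this becomes the benign additive $O(1)$ in the final bound. A secondary technical point, already handled by Lemma~\ref{lem:1_level_decomp}, is that the rectangle and partial-rectangle properties must propagate correctly through $\pi_S \to \pi_S \mid W_S \to \widetilde{\pi}_{Sj} \to \pi_{Sj}$ so that all applications of the chain-rule additivity and the KL pointwise linearity are legitimate.
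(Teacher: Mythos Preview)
Your plan for parts (1) and (3) is essentially the paper's approach and is correct; the overall strategy of bounding $\sum_{|S|=m}\chi_S Q_S$ and dividing by $\sum_{|S|=m}\chi_S=\Omega(n)$ is exactly right. One small omission in part~(3): after conditioning on $W_S$, the quantity $\theta_{\mu_S}(\pi_S\mid W_S)$ is \emph{not} the KL divergence $\dklhor{\pi_S\mid W_S}{\stdz(\pi_S,\mu_S)}$, because the standardization of $\pi_S\mid W_S$ differs from $\stdz(\pi_S,\mu_S)$ (the denominators pick up the conditioning $W_S$). The paper bridges this via Lemma~\ref{lem:remove_W}, which shows $\theta_{\mu_S}(\pi_S\mid W_S)\le \dklhor{\pi_S\mid W_S}{\stdz(\pi_S,\mu_S)}$; only then can you invoke the KL-conditioning lemma. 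You should flag this step.

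The real gap is in part~(2). Your claim that the binary decomposition step is ``additive on information cost via the standard chain-rule argument, justified by the partial rectangle property'' does not go through. Concretely, summing the Bob-side terms gives
\[
I(Y_{S1}M_S^+:X_{S0}\mid Y_{S0}M_S^0)+I(M_S^+:X_{S1}\mid Y_{S1}X_{S0}M_S^0)
= I(M_S^+:X_S\mid Y_SM_S^0) + \Delta,
\]
where $\Delta = I(Y_{S1}:X_{S0}\mid Y_{S0}M_S^0)+I(Y_{S0}:X_{S1}\mid Y_{S1}X_{S0}M_S^0)\ge 0$, all under $\pi_S\mid W_S$. The partial rectangle property only yields $X_{S1}\perp Y_{S0}\mid X_{S0}Y_{S1}M_S$, i.e.\ conditioned on the \emph{full} transcript, not on $M_S^0$ alone; hence $\Delta$ is uncontrolled (it measures input cross-correlations introduced by conditioning on $W_S$) and there is no reason it should be $O(1)$. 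This breaks the recursion $I_{S0}+I_{S1}\le \tfrac{1}{p_S}(I_S+O(\cH(p_S)))+O(1)$.

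The paper's fix is to run the recursion not on $I_S$ but on the $\gamma$-cost $\Gamma_S=\gamma_{\mu_S}(\pi_S)$, whose denominators reference the \emph{fixed product} distribution $\mu_S$ rather than the actual (distorted) input distribution of $\pi_S$. Because $\mu_S$ factors, the pointwise linearity Lemma~\ref{lem:pointwise_linear_delta} applies exactly, with no $\Delta$-type residue; appending $B_{Sj}$ costs $+O(1)$, and conditioning is handled by Lemma~\ref{lem:expected_kl}, giving $\Gamma_{S0}+\Gamma_{S1}\le \tfrac{1}{p_S}(\Gamma_S+2\cH(p_S))+4$. Finally one checks $I_S\le\Gamma_S$ (Claim~\ref{claim:IC_bounds}), so the bound on $\E_{S\sim\D}[\Gamma_S]$ transfers. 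You should replace your direct $I_S$ recursion with this $\Gamma_S$-proxy argument.
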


For the remaining of this section, we will prove a formal version of Lemma \ref{lem:main_decomp_informal}.

\paragraph{$\pi_S$ Has Small Error.} We first prove property (1). Recall that the disadvantage of $\pi_S$ is $\varepsilon_S$, which is exactly twice the error of $\pi_S$. Thus, it suffices to bound the expectation of $\varepsilon_S$.

\begin{claim} $\E(\varepsilon_S) \leq O(n^{-(1-\tau)})$.
\label{claim:eps_bounds}
\end{claim}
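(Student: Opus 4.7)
The plan is to combine the two main bounds already established earlier in the section: Corollary \ref{col:bound_sum_eps}, which controls the sum of disadvantages at the leaf level, and Claim \ref{clm:sum_chi}, which lower-bounds the sum of the $\chi_S$ at the leaf level. The expectation $\E_{S \sim \D}(\varepsilon_S)$ under the proportional distribution $\D$ is, by definition, a ratio of these two quantities, so the bound should follow almost immediately.

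More concretely, by the definition of $\D$,
\[
\E_{S \sim \D}(\varepsilon_S) \;=\; \frac{\sum_{|S|=m} \chi_S \,\varepsilon_S}{\sum_{|S'|=m} \chi_{S'}}.
\]
For the numerator, the first step is to observe that $\chi_S \in [0,1]$ for every $S$, since $\chi_\emptyset = 1$ and the recursion $\chi_{Sb} = p_S \chi_S$ only multiplies by probabilities $p_S \in [0,1]$. Therefore $\sum_{|S|=m} \chi_S \varepsilon_S \le \sum_{|S|=m} \varepsilon_S$, and applying Corollary \ref{col:bound_sum_eps} at level $k = m = \log_2 n$ gives $\sum_{|S|=m} \varepsilon_S \le 2^{\tau m} \varepsilon = n^\tau \varepsilon = O(n^\tau)$. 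For the denominator, Claim \ref{clm:sum_chi} directly yields $\sum_{|S'|=m} \chi_{S'} = \Omega(n)$.

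Combining the two bounds gives
\[
\E_{S \sim \D}(\varepsilon_S) \;\le\; \frac{O(n^\tau)}{\Omega(n)} \;=\; O\bigl(n^{-(1-\tau)}\bigr),
\]
which is exactly the claim. There is no real obstacle here: the genuine technical work was already done in proving Corollary \ref{col:bound_sum_eps} (the geometric decay of disadvantages at each level of the conditional decomposition) and Claim \ref{clm:sum_chi} (the AM-GM/potential argument showing that the $\chi_S$'s stay large on average). The present claim is essentially a one-line consequence of both bounds together with the fact that $\chi_S \le 1$. The only thing to be careful about is to express the expectation under $\D$ in the correct ratio form and to recognize that we can drop the $\chi_S$ factor in the numerator when upper-bounding, rather than trying to analyze $\sum_S \chi_S \varepsilon_S$ directly through a potential function.
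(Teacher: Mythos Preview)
Your proposal is correct and matches the paper's proof essentially line for line: write $\E_{S\sim\D}(\varepsilon_S)$ as the ratio $\frac{\sum_{|S|=m}\chi_S\varepsilon_S}{\sum_{|S|=m}\chi_S}$, use $\chi_S\le 1$ to drop $\chi_S$ in the numerator, then apply Corollary~\ref{col:bound_sum_eps} and Claim~\ref{clm:sum_chi}.
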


\begin{proof} It follows by the calculation:
\begin{align*}
    \E(\varepsilon_S) = \frac{\sum_{|S|=m} \varepsilon_S \chi_S }{\sum_{|S|=m}\chi_S} \leq \frac{\sum_{|S|=m} \varepsilon_S}{\sum_{|S|=m}\chi_S} & \leq \frac{\varepsilon n^{\tau}}{\Omega(n)} \tag{Corollary \ref{col:bound_sum_eps} and Claim \ref{clm:sum_chi}} \\
    & = O(n^{-(1-\tau)}).
\end{align*}

\end{proof}

\paragraph{$\pi_S$ Has Small Information Cost.}Next, we will prove property (2). Directly upper-bounding the information cost of $\pi_S$ can be quite complicated. Instead, we will use the $\gamma$-cost (Definition \ref{def:delta}) as an intermediate. For any subset $S \in \{0,1\}^m$, denote:
$$\Gamma_S = \gamma_{\mu_S}(\pi_S).$$
Observe that $\Gamma_{\emptyset}$ is precisely the information cost of $\pi$, which is 
$\mathcal{I}$. This is due to:
\begin{align*}\Gamma_{\emptyset} & = \gamma_{\mu^n}(\pi) \\
& = \E_{(X,Y,\M) \sim \pi} \gamma_{\mu^n, A}(\pi @ X,Y,\M) + \E_{(X,Y,\M) \sim \pi}\gamma_{\mu^n, B}(\pi @ X,Y,\M) \\
& = \E_{(X,Y,\M) \sim \pi} \log \frac{\pi(X \mid Y \M)}{\mu^n(X \mid Y)} + \E_{(X,Y,\M) \sim \pi} \log \frac{\pi(Y \mid X \M)}{\mu^n(Y \mid X)} \\
& = \E_{(X,Y,\M) \sim \pi} \log \frac{\pi(X \mid Y \M)}{\pi(X \mid Y)} + \E_{(X,Y,\M) \sim \pi} \log \frac{\pi(Y \mid X \M)}{\pi(Y \mid X)}\\
& = \mathcal{I}.
\end{align*}

The following claim argues that the $\gamma$-cost are low in expectation.

\begin{claim} $\E(\Gamma_S) = O\left(\mathcal{I}/n +1\right)$.
\end{claim}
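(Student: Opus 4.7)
The plan is to establish the recursive pattern-inequality
\[
\Gamma_{S0} + \Gamma_{S1} \;\le\; \frac{1}{p_S}\bigl(\Gamma_S + 2\cH(p_S)\bigr) + 2,
\]
and then feed it into Lemma~\ref{lem:pattern_constant} with $q_S = \Gamma_S$, $c = 2$, $c' = 2$. The lemma would then yield $\sum_{|S|=m}\chi_S \Gamma_S \le \Gamma_\emptyset + O(n)$. Since $\pi = \pi_\emptyset$ is a \emph{standard} protocol with input distribution $\mu^n$, we have $\pi(X \mid Y) = \mu^n(X \mid Y)$ (and symmetrically), so the $\gamma$-cost collapses to mutual information: $\Gamma_\emptyset = I(\M : X \mid Y) + I(\M : Y \mid X) = \mathcal{I}$. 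Combined with $\sum_{|S|=m}\chi_S = \Omega(n)$ from Claim~\ref{clm:sum_chi}, this gives
\[
\E_{S \sim \D}[\Gamma_S] \;=\; \frac{\sum_{|S|=m}\chi_S\Gamma_S}{\sum_{|S|=m}\chi_S} \;\le\; \frac{\mathcal{I} + O(n)}{\Omega(n)} \;=\; O\!\left(\mathcal{I}/n + 1\right).
\]

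The recursive inequality factors into three moves corresponding to how $(\pi_{S0}, \pi_{S1})$ are built from $\pi_S$. The \emph{binary decomposition} step is essentially free: by Lemma~\ref{lem:1_level_decomp}(ii) $\pi_S \mid W_S$ enjoys the partial rectangle property with respect to the product $\mu_{S0} \times \mu_{S1}$, so pointwise linearity of $\gamma$-cost (Lemma~\ref{lem:pointwise_linear_delta}) integrates directly to $\gamma_{\mu_S}(\pi_S \mid W_S) = \gamma_{\mu_{S0}}(\widetilde{\pi}_{S0}) + \gamma_{\mu_{S1}}(\widetilde{\pi}_{S1})$. The \emph{appending-bit} step costs at most $+1$ per child: since $B_{S0}$ is deterministic given $(X_{S0}, \widetilde{\M}_{S0})$ by Observation~\ref{obs:bs_ind}, the $B$-side $\gamma$-cost is unchanged, while a direct expansion of the log-ratio gives $\gamma_{\mu_{S0}, A}(\pi_{S0}) - \gamma_{\mu_{S0}, A}(\widetilde{\pi}_{S0}) = I(B_{S0} : X_{S0} \mid Y_{S0} \widetilde{\M}_{S0}) \le H(B_{S0}) \le 1$; symmetrically for $\pi_{S1}$, contributing the $+2$ to the recursive bound.

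The main obstacle is the \emph{conditioning} step: showing $\gamma_{\mu_S}(\pi_S \mid W_S) \le p_S^{-1}(\Gamma_S + 2\cH(p_S))$. Lemma~\ref{lem:expected_kl} is stated for a reference $\eta(X)$ not depending on auxiliary variables, whereas here the reference $\mu_S(X \mid Y)$ depends on $Y$. The plan is to work fiber-by-fiber in $y = Y$: for each fixed $y$ apply Lemma~\ref{lem:expected_kl} to the inner randomness $\M$ with event $W_S$ and reference $\mu_S(X \mid y)$, yielding the fiber-wise bound
\[
\E_{\M \sim \pi_S(\cdot \mid y, W_S)}\!\dklhor{\pi_S(X \mid y\M, W_S)}{\mu_S(X \mid y)} \;\le\; \frac{\E_{\M \sim \pi_S(\cdot \mid y)}\!\dklhor{\pi_S(X \mid y\M)}{\mu_S(X \mid y)} + \cH(\pi_S(W_S \mid y))}{\pi_S(W_S \mid y)}.
\]
Multiplying by $\pi_S(y \mid W_S)$ and summing over $y$, the identity $\pi_S(y \mid W_S)/\pi_S(W_S \mid y) = \pi_S(y)/p_S$ telescopes the reciprocal fiber factors cleanly into a single $1/p_S$, so the outer weight reverts to $\pi_S(y)$ and recovers $\gamma_{\mu_S, A}(\pi_S) + \E_{y \sim \pi_S}\cH(\pi_S(W_S \mid y))$. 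Jensen applied to the concave $\cH$ then bounds the last term by $\cH(p_S)$, and the $B$-side argument is symmetric, giving the factor $2$ in front of $\cH(p_S)$. Without this fiber-wise maneuver one would incur a dependence on the worst-case fiber probability $\min_y \pi_S(W_S \mid y)$, which need not be comparable to $p_S$ and would break the recursion.
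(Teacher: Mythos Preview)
Your proposal is correct and follows the same three-step architecture as the paper: linearity of pointwise $\gamma$-cost under decomposition, a $+O(1)$ penalty for appending the final bit, and an application of the conditioned-KL lemma for the $W_S$ step, all fed into Lemma~\ref{lem:pattern_constant}. Two minor differences are worth noting: you obtain the recursion with constant $+2$ rather than the paper's $+4$ (because you exploit that $B_{S0}$ is deterministic given $(X_{S0},\widetilde{\M}_{S0})$, so the $B$-side $\gamma$-cost of $\pi_{S0}$ is literally unchanged), and you handle the conditioning step by fixing $Y_S$ first and applying Lemma~\ref{lem:expected_kl} fiber-by-fiber, whereas the paper applies Lemma~\ref{lem:expected_kl} in one shot with auxiliary variable $(Y_S,\M_S)$ and reference $\mu_S(X_S \mid Y_S)$ depending on part of that auxiliary --- your route is more careful since Lemma~\ref{lem:expected_kl} as stated has a fixed reference $\eta(X)$, though its proof (being fiber-wise via Lemma~\ref{lem:expected_kl_helper}) accommodates either reading.
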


\begin{proof} We will first show that $\Gamma_{S0} + \Gamma_{S1} \leq \frac{1}{p_S} \cdot \left[\Gamma_S + 2\cH(p_S)\right] + 4$. To do so, we compute $\gamma_{\mu_S}( \pi_S \mid W_S)$ in two different ways.

\begin{enumerate}
\item Denote the protocol $\pi_S$ by $(X_S, Y_S, \M_S)$. Recall that $\pi_S \mid W_S$ undergoes a binary decomposition into $\widetilde{\pi}_{S0} = (X_{S0}, Y_{S0}, \widetilde{\M}_{S0})$ and $\widetilde{\pi}_{S1} = (X_{S1}, Y_{S1}, \widetilde{\M}_{S1})$. Moreover, distribution-wise we have  $\pi_{S0} = (X_{S0}, Y_{S0}, \M_{S0})$ where $\M_{S0} = (\widetilde{\M}_{S0}, B_{S0})$ and $\pi_{S1} = (X_{S1}, Y_{S1}, \M_{S1})$ where $\M_{S1} = (\widetilde{\M}_{S1}, B_{S1})$.

By definition, we have:
\begin{align*} \gamma_{\mu_S}( \pi_S \mid W_S) 
= & \E_{(X_S,Y_S,\M_S ) \sim \pi_S \mid W_S } \gamma_{\mu_S, A}(\pi_S \mid W_S @ X_S,Y_S,\M_S ) \\
& \hspace{5mm} + \E_{(X_S,Y_S,\M_S ) \sim \pi_S \mid W_S } \gamma_{\mu_S, B}(\pi_S \mid W_S @ X_S,Y_S,\M_S ).
\end{align*}

Following Lemma \ref{lem:1_level_decomp}, we know that $\pi_S \mid W_S$ has the partial rectangle property with respect to $(\mu_{S0},\mu_{S1})$. The first term, via the linearlity of pointwise-$\gamma$-cost (Lemma \ref{lem:pointwise_linear_delta}), has become:
\begin{align*}
    & \E_{(X_S,Y_S,\M_S ) \sim \pi_S \mid W_S } \gamma_{\mu_S, A}(\pi_S \mid W_S @ X_S,Y_S,\M_S ) \\
    & = \E_{(X_S,Y_S,\M_S ) \sim \pi_S \mid W_S} \gamma_{\mu_{S0}, A}(\widetilde{\pi}_{S0} @ X_{S0}, Y_{S0}, \widetilde{M}_{S0}) + \E_{(X_S,Y_S,\M_S ) \sim \pi_S \mid W_S} \gamma_{\mu_{S1}, A}(\widetilde{\pi}_{S1} @ X_{S1}, Y_{S1}, \widetilde{M}_{S1}) \\
     & = \E_{(X_{S0},Y_{S0},\widetilde{\M}_{S0}) \sim \widetilde{\pi}_{S0}} \log \frac{\widetilde{\pi}_{S0}(X_{S0} \mid Y_{S0} \widetilde{\M}_{S0})}{\mu_{S0}(X_{S0} \mid Y_{S0})} +
     \E_{(X_{S1},Y_{S1},\widetilde{\M}_{S1}) \sim \widetilde{\pi}_{S1}} \log \frac{\widetilde{\pi}_{S1}(X_{S1} \mid Y_{S0} \widetilde{\M}_{S1})}{\mu_{S1}(X_{S1} \mid Y_{S1})}.
\end{align*}

Next, consider the following calculation.
\begin{align*}
     \Gamma_{S0} & = \gamma_{\mu_{S0}}(\pi_{S0}) \\ 
    & = \gamma_{\mu_{S0}}(\widetilde{\pi}_{S0} \odot B_{S0}) \\
    & = \E_{(X_{S0}, Y_{S0}, \widetilde{\M}_{S0}, B_{S0}) \sim \widetilde{\pi}_{S0} \odot B_{S0}} \log{\frac{\widetilde{\pi}_{S0}(X_{S0} \mid Y_{S0} \widetilde{\M}_{S0} B_{S0})}{\mu_{S0}(X_{S0} \mid Y_{S0})}} \\
    & \hspace{5mm} + \E_{(X_{S0}, Y_{S0}, \widetilde{\M}_{S0}, B_{S0}) \sim \widetilde{\pi}_{S0} \odot B_{S0}} \log{\frac{\widetilde{\pi}_{S0}(Y_{S0} \mid X_{S0} \widetilde{\M}_{S0} B_{S0})}{\mu_{S0}(Y_{S0} \mid X_{S0})}}.
\end{align*}
Let us expand on the first term:
\begin{align*}
&  \E_{(X_{S0}, Y_{S0}, \widetilde{\M}_{S0}, B_{S0}) \sim \widetilde{\pi}_{S0} \odot B_{S0}} \log{\frac{\widetilde{\pi}_{S0}(X_{S0} \mid Y_{S0} \widetilde{\M}_{S0} B_{S0})}{\mu_{S0}(X_{S0} \mid Y_{S0})}} \\ 
& = \E_{(X_{S0}, Y_{S0}, \widetilde{\M}_{S0}) \sim \widetilde{\pi}_{S0}}  \E_{B_{S0} \sim \widetilde{\pi}_{S0} \mid X_{S0}, Y_{S0}, \widetilde{\M}_{S0}} \log{\frac{\widetilde{\pi}_{S0}(X_{S0} \mid Y_{S0} \widetilde{\M}_{S0} B_{S0})}{\mu_{S0}(X_{S0} \mid Y_{S0})}} \\
& = \E_{(X_{S0}, Y_{S0}, \widetilde{\M}_{S0}) \sim \widetilde{\pi}_{S0}}  \E_{B_{S0} \sim \widetilde{\pi}_{S0} \mid X_{S0}, Y_{S0}, \widetilde{\M}_{S0}}  \left( \log{\frac{\widetilde{\pi}_{S0}(X_{S0} \mid Y_{S0} \widetilde{\M}_{S0})}{\mu_{S0}(X_{S0} \mid Y_{S0})}} +  \log{\frac{\widetilde{\pi}_{S0}(B_{S0} \mid X_{S0} Y_{S0} \widetilde{\M}_{S0})}{\widetilde{\pi}_{S0}(B_{S0} \mid X_{S0} Y_{S0})}}\right) \\
& = \left(\E_{(X_{S0}, Y_{S0}, \widetilde{\M}_{S0}) \sim \widetilde{\pi}_{S0}}   \log{\frac{\widetilde{\pi}_{S0}(X_{S0} \mid Y_{S0} \widetilde{\M}_{S0})}{\mu_{S0}(X_{S0} \mid Y_{S0})}}  \right) \\
& \hspace{5mm} + \left(\E_{(X_{S0}, Y_{S0}, \widetilde{\M}_{S0}) \sim \widetilde{\pi}_{S0}}  \E_{B_{S0} \sim \widetilde{\pi}_{S0} \mid X_{S0}, Y_{S0}, \widetilde{\M}_{S0}}  \log{\frac{\widetilde{\pi}_{S0}(B_{S0} \mid X_{S0} Y_{S0} \widetilde{\M}_{S0})}{\widetilde{\pi}_{S0}(B_{S0} \mid X_{S0} Y_{S0})}}\right) \\
& = \left(\E_{(X_{S0}, Y_{S0}, \widetilde{\M}_{S0}) \sim \widetilde{\pi}_{S0}}   \log{\frac{\widetilde{\pi}_{S0}(X_{S0} \mid Y_{S0} \widetilde{\M}_{S0})}{\mu_{S0}(X_{S0} \mid Y_{S0})}}  \right) + \left(\E_{(X_{S0}, Y_{S0}, \widetilde{\M}_{S0}) \sim \widetilde{\pi}_{S0}}  I(B_{S0}:\widetilde{\M}_{S0} \mid X_{S0}Y_{S0})\right) \\
& \leq \left(\E_{(X_{S0}, Y_{S0}, \widetilde{\M}_{S0}) \sim \widetilde{\pi}_{S0}}   \log{\frac{\widetilde{\pi}_{S0}(X_{S0} \mid Y_{S0} \widetilde{\M}_{S0})}{\mu_{S0}(X_{S0} \mid Y_{S0})}}  \right) + 1
\end{align*}
where the inequality follows the fact that $I(B_{S0}:M_{S0} \mid X_{S0}Y_{S0}) \leq |B_{S0}| = 1$.

Combining with the other symmetric terms, we have
$$ \Gamma_{S0} + \Gamma_{S1} \leq \gamma_{\mu_S}( \pi_S \mid W_S) + 4.$$

\item Recall that 
$$\gamma_{\mu_S}( \pi_S \mid W_S) = \E_{(X_S,Y_S,\M_S) \sim \pi_S \mid W_S} \log \frac{\pi_S(X_S \mid Y_S \M_S W_S)}{\mu_S(X_S \mid Y_S)} + \E_{(X_S,Y_S,\M_S) \sim \pi_S \mid W_S} \log \frac{\pi_S(Y_S \mid X_S \M_S W_S)}{\mu_S(Y_S \mid X_S)}.$$ For the first term, we have
\begin{align*}
    & \E_{(X_S,Y_S,\M_S) \sim \pi_S \mid W_S} \log \frac{\pi_S(X_S \mid Y_S \M_S W_S)}{\mu_S(X_S \mid Y_S)}  \\
    & = \E_{(Y_S,\M_S) \sim \pi_S \mid W_S} \dkl{\pi_S(X_S \mid Y_S \M_S W_S)}{\mu_S(X_S \mid Y_S)} \\
    & \leq  \frac{1}{p_S} \cdot \E_{(Y_S,\M_S) \sim \pi_S } \left[\dkl{\pi_S(X_S \mid Y_S \M_S  )}{\mu_S(X_S \mid Y_S)} + \cH(p_S)\right] \tag{Lemma \ref{lem:expected_kl}}\\
    & = \frac{1}{p_S} \cdot  \left[\E_{(X_S,Y_S,\M_S) \sim \pi_S }  \log \frac{\pi_S(X_S \mid Y_S \M_S)}{\mu_S(X_S \mid Y_S)} + \cH(p_S)\right].
\end{align*}

Combining with its symmetric term, we have
$$\gamma_{\mu_S}( \pi_S \mid W_S) \leq \frac{1}{p_S} \cdot \left[\Gamma_S + 2\cH(p_S)\right].$$
\end{enumerate}

Putting together the two calculations, we have $\Gamma_{S0} + \Gamma_{S1} \leq \frac{1}{p_S} \cdot \left[\Gamma_S + 2\cH(p_S)\right] + 4$ for any $S$. By Lemma \ref{lem:pattern_constant}, we have
\begin{align*}
    \sum_{|S| = m} \chi_S \Gamma_S & \leq \Gamma_{\emptyset} + O(n) = \mathcal{I} + O(n).
\end{align*}

Recall via Claim \ref{clm:sum_chi} that $\sum_{|S| = m} \chi_S  = \Omega(n)$. Therefore, we have
$$\E(\Gamma_S) = \frac{\sum_{|S| = m} \chi_S  \Gamma_S}{\sum_{|S| = m} \chi_S } = \frac{\mathcal{I} + O(n)}{\Omega(n)} = O(\mathcal{I}/n + 1)$$
as wished.
\end{proof}

As a corollary, we derive the same upper bound for the expectation of information cost.

\begin{claim} $\E(I_S) = O(\mathcal{I}/n + 1).$
\label{claim:IC_bounds}
\end{claim}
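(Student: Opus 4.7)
My plan is to reduce the claim to the preceding bound $\E(\Gamma_S) = O(\mathcal{I}/n + 1)$ by showing that $I_S \leq \Gamma_S$ pointwise in $S$. Both quantities are defined relative to the same distribution $\pi_S$; they differ only in the reference measure (the true marginal $\pi_S(X_S \mid Y_S)$ for $I_S$ versus the target $\mu_S(X_S \mid Y_S)$ for $\gamma_{\mu_S}$), and in whether the public randomness $M^0_S$ is singled out of $\M_S$.

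Concretely, for the Alice contribution to $\gamma_{\mu_S}(\pi_S)$, I would multiply and divide by $\pi_S(X_S \mid Y_S)$ inside the log to split it as $I^{\pi_S}(X_S : \M_S \mid Y_S) + \E_{Y_S \sim \pi_S}\dklhor{\pi_S(X_S \mid Y_S)}{\mu_S(X_S \mid Y_S)}$, and simply drop the nonnegative KL term. Next, applying the chain rule to $\M_S = (M^0_S, M^+_S)$ gives $I^{\pi_S}(X_S : \M_S \mid Y_S) = I^{\pi_S}(X_S : M^0_S \mid Y_S) + I^{\pi_S}(\M_S : X_S \mid Y_S M^0_S)$, which is at least the Alice half of $I_S$. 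The Bob side is symmetric, so summing yields $I_S \leq \Gamma_S$. Taking expectation over $S \sim \D$ and invoking the previous claim concludes $\E(I_S) \leq \E(\Gamma_S) = O(\mathcal{I}/n + 1)$.

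I do not anticipate any serious obstacle. Both the reference-change step (switching from $\mu_S$ to $\pi_S$ inside the log) and the chain-rule step (isolating $M^0_S$) introduce only nonnegative extras, so both inequalities go in the favorable direction and no careful cancellation or error tracking is required. The whole argument is essentially a two-line bookkeeping lemma that $\gamma_{\mu_S}(\pi_S) \geq \IC(\pi_S)$ for any generalized protocol $\pi_S$, which is then combined with the already-established expectation bound on $\Gamma_S$.
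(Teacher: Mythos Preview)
Your proposal is correct and matches the paper's proof essentially verbatim: the paper also reduces to the previous claim by showing $\Gamma_S \geq I_S$ via exactly the ``multiply/divide inside the log and drop the nonnegative KL remainder'' argument. The only cosmetic difference is that the paper inserts $\pi_S(X_S \mid Y_S M^0_S)$ as the intermediate reference in one step, whereas you insert $\pi_S(X_S \mid Y_S)$ first and then peel off $M^0_S$ via the chain rule; both routes discard the same nonnegative extras.
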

\begin{proof} It suffices to show that for any $S$, the information cost of $\pi_S$ is upper-bounded by $\Gamma_S$ (hence the reason that we use $\Gamma_S$ as a proxy.) This is due to the following calculation.
\begin{align*}
    \Gamma_S & = \E_{(X_S,Y_S,\M_S) \sim \pi_S} \log \frac{\pi_S(X_S \mid Y_S \M_S)}{\mu_S(X_S \mid Y_S)} + \E_{(X_S,Y_S,\M_S) \sim \pi_S} \log \frac{\pi_S(Y_S \mid X_S \M_S)}{\mu_S(Y_S \mid X_S)} \\
    & =  \E_{(X_S,Y_S,\M_S) \sim \pi_S} \log \frac{\pi_S(X_S \mid Y_S \M_S)}{\pi_S(X_S \mid Y_SM^0_S)} + \E_{(X_S,Y_S,\M_S) \sim \pi_S}  \log \frac{\pi_S(Y_S \mid X_S \M_S)}{\pi_S(Y_S \mid X_SM^0_S)} \\
    & \hspace{5mm} + \E_{(X_S,Y_S,\M_S) \sim \pi_S}  \log \frac{\pi_S(X_S \mid Y_SM^0_S)}{\mu_S(X_S \mid Y_S)}  + \E_{(X_S,Y_S,\M_S) \sim \pi_S}  \log \frac{\pi_S(Y_S \mid X_SM^0_S)}{\mu_S(Y_S \mid X_S)} \\
    & = I(M_S:X_S \mid Y_SM^{0}_S)  + I(M_S:Y_S \mid X_SM^{0}_S) \\
    & \hspace{5mm} + \E_{(X_S,Y_S,\M_S) \sim \pi_S}  \log \frac{\pi_S(X_S \mid Y_SM^0_S)}{\mu_S(X_S \mid Y_S)}  + \E_{(X_S,Y_S,\M_S) \sim \pi_S}  \log \frac{\pi_S(Y_S \mid X_SM^0_S)}{\mu_S(Y_S \mid X_S)} \\
    & = I_S  + \E_{(Y_S, M^{0}_S) \sim \pi_S} \dkl{\pi_S(X_S \mid Y_SM^0_S)}{\mu_S(X_S \mid Y_S)} + \E_{(X_S, M^{0}_S) \sim \pi_S} \dkl{\pi_S(Y_S \mid X_SM^0_S)}{\mu_S(Y_S \mid X_S)} \\
    & \geq I_S
\end{align*}
where the last inequality is due to the fact that KL-divergences are always non-negative.  
\end{proof}

\paragraph{$\pi_S$ Has Small $\theta$-Cost.} Finally, we prove property (3). Recall the definition of the $\theta$-cost via Definition \ref{def:theta}. For any $S$ such that $|S| \leq m$, denote  $$\Theta_S = \theta_{\mu_S}(\pi_S).$$ With this notions, we have $\Theta_\emptyset = 0$ because $\pi_\emptyset = \pi$ is a standard protocol whose input distribution is exactly $\mu^n$ (via Observation \ref{obs:std_theta_zero}.)

We will soon need the following lemma.

\begin{lemma} For any $S \in \{0,1\}^{\leq m}$ and $i \geq 1$, we have
    $$\E_{(X_S, Y_S, \M_S) \sim \pi_S \mid W_S} \log{\frac{\pi_S(M^i_S \mid X_SY_SM^{<i}_SW_S)}{\pi_S(M^i_S \mid X_SM^{<i}_SW_S)}} \leq \E_{(X_S, Y_S, \M_S) \sim \pi_S \mid W_S} \log{\frac{\pi_S(M^i_S \mid X_SY_SM^{<i}_SW_S)}{\pi_S(M^i_S \mid X_SM^{<i}_S)}}.$$
\label{lem:remove_W}
\end{lemma}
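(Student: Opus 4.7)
The plan is to view both sides as expected KL divergences computed in the distribution $\pi_S \mid W_S$, and then invoke the standard fact that among all choices of a conditional distribution $q(M \mid X)$, the true conditional marginal is the KL-minimizer of $\E_{\hat\pi} \dklhor{\hat\pi(M \mid X,Y)}{q(M \mid X)}$.

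Concretely, I would first denote $\hat\pi := \pi_S \mid W_S$, so that $\hat\pi(M^i_S \mid X_S Y_S M^{<i}_S) = \pi_S(M^i_S \mid X_S Y_S M^{<i}_S W_S)$ and $\hat\pi(M^i_S \mid X_S M^{<i}_S) = \pi_S(M^i_S \mid X_S M^{<i}_S W_S)$. Conditioning the outer expectation first on $(X_S, Y_S, M^{<i}_S)$ rewrites the two expressions as
\[
\text{LHS} = \E_{(X_S, Y_S, M^{<i}_S) \sim \hat\pi}\, \dklhor{\hat\pi(M^i_S \mid X_S Y_S M^{<i}_S)}{\hat\pi(M^i_S \mid X_S M^{<i}_S)},
\]
\[
\text{RHS} = \E_{(X_S, Y_S, M^{<i}_S) \sim \hat\pi}\, \dklhor{\hat\pi(M^i_S \mid X_S Y_S M^{<i}_S)}{\pi_S(M^i_S \mid X_S M^{<i}_S)}.
\]

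The key identity I would then invoke is the chain-rule decomposition: for any distribution $\hat\pi$ over $(M, X, Y)$ and any conditional distribution $q(M \mid X)$,
\[
\E_{\hat\pi} \dklhor{\hat\pi(M \mid X, Y)}{q(M \mid X)} \;=\; I_{\hat\pi}(M : Y \mid X) \;+\; \E_{X \sim \hat\pi} \dklhor{\hat\pi(M \mid X)}{q(M \mid X)},
\]
obtained by inserting $\pm \log \hat\pi(M \mid X)$ inside the logarithm. Because the last term is a nonnegative KL divergence, the expression is minimized (over valid $q$) by setting $q(M \mid X) := \hat\pi(M \mid X)$.

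Applying this with $M \leftarrow M^i_S$, $X \leftarrow (X_S, M^{<i}_S)$, $Y \leftarrow Y_S$, and the test conditional $q(M^i_S \mid X_S, M^{<i}_S) := \pi_S(M^i_S \mid X_S M^{<i}_S)$ (which indeed depends only on the allowed variables) yields $\text{LHS} \leq \text{RHS}$ immediately. The main conceptual point --- and really the only nontrivial step --- is recognizing that the lemma is simply an instance of ``the true conditional marginal is the best KL predictor''; no rectangle or partial rectangle property is needed, nor any delicate decomposition machinery, so I do not anticipate any genuine obstacle.
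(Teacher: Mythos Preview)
Your proposal is correct and is essentially the paper's own argument. The paper cancels the common numerator and shows directly that
\[
\E_{(X_S, M^{\leq i}_S) \sim \pi_S \mid W_S}\,\log \frac{\pi_S(M^i_S \mid X_S M^{<i}_S W_S)}{\pi_S(M^i_S \mid X_S M^{<i}_S)} \;=\; \E_{(X_S, M^{<i}_S) \sim \pi_S \mid W_S}\,\dklhor{\pi_S(M^i_S \mid X_S M^{<i}_S W_S)}{\pi_S(M^i_S \mid X_S M^{<i}_S)} \;\geq\; 0,
\]
which is exactly the ``extra KL term'' in your decomposition $\text{RHS} = I_{\hat\pi}(M^i_S:Y_S \mid X_S M^{<i}_S) + \E\,\dklhor{\hat\pi(M^i_S \mid X_S M^{<i}_S)}{\pi_S(M^i_S \mid X_S M^{<i}_S)}$; your ``true conditional is the KL-minimizer'' framing is just a conceptual repackaging of the same one-line computation.
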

\begin{proof} Observe that the upper term of both sides are identical. Using linearlity of expectation, it is equivalent to showing that 
$$\E_{(X_S, Y_S, \M_S) \sim \pi_S \mid W_S} \log \frac{\pi_S(M^i_S \mid X_SY_SM^{<i}_SW_S)}{\pi_S(M^i_S \mid X_SM^{<i}_S)} \geq 0.$$

Consider:
\begin{align*}
    & \E_{(X_S, Y_S, \M_S) \sim \pi_S \mid W_S} \log \frac{\pi_S(M^i_S \mid X_SM^{<i}_SW_S)}{\pi_S(M^i_S \mid X_SM^{<i}_S)} \\
    & = \sum_{X_S, M^{\leq i}_S}  \pi_S(M^{\leq i}_SX_S \mid W_S) \log \frac{\pi_S(M^i_S \mid X_SM^{<i}_SW_S)}{\pi_S(M^i_S \mid X_SM^{<i}_S) }  \\
    & = \sum_{X_S, M^{\leq i}_S} \pi_S(M^{<i}_SX_S \mid W_S) \cdot \pi_S(M^i_S \mid X_SM^{<i}_SW_S) \cdot \log \frac{\pi_S(M^i_S \mid X_SM^{<i}_SW_S)}{\pi_S(M^i_S \mid X_SM^{<i}_S) } \\
    & = \sum_{X_S, M^{<i}_S}  \pi_S(M^{<i}_SX_S \mid W_S) \cdot \sum_{M^i_S} \pi_S(M^i_S \mid X_SM_{< i}W_S) \cdot \log \frac{\pi_S(M^i_S \mid X_SM^{<i}_SW_S)}{\pi_S(M^i_S \mid X_SM^{<i}_S) } \\
    & = \sum_{X_S, M^{< i}_S}  \pi_S(M^{<i}_SX_S \mid W_S) \cdot \dkl{\pi_S(M^i_S \mid X_SM^{<i}_SW_S)}{\pi_S(M^i_S \mid X_SM^{<i}_S) } \\
    & \geq 0.
\end{align*}
This concludes the proof. 
\end{proof}

The following claim argues that the $\theta$-cost are low in expectation.
\begin{claim} $\E(\Theta_S) = O(n^{-(1-\tau)} \log n)$.
\label{claim:Theta_bounds}
\end{claim}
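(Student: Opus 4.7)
The plan is to establish a recursive inequality of the form $\Theta_{S0}+\Theta_{S1}\le\tfrac{1}{p_S}\bigl[\Theta_S+\cH(p_S)\bigr]$ and then invoke Lemma \ref{lem:pattern} together with $\sum_{|S|=m}\chi_S=\Omega(n)$ (Claim \ref{clm:sum_chi}). Since $\pi_{\emptyset}=\pi$ is a standard protocol on $\mu^n$, Observation \ref{obs:std_theta_zero} gives $\Theta_{\emptyset}=0$, so Lemma \ref{lem:pattern} will yield $\sum_{|S|=m}\chi_S\Theta_S\le O(n^{\tau}\log n)$, which after dividing by $\sum_{|S|=m}\chi_S=\Omega(n)$ gives the desired bound $O(n^{-(1-\tau)}\log n)$.

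To establish the recursive inequality, I first use the linearity of the pointwise $\theta$-cost. By Lemma \ref{lem:1_level_decomp}(ii), $\pi_S\mid W_S$ has the partial rectangle property with respect to $(\mu_{S0},\mu_{S1})$, and its binary decomposition produces $\widetilde{\pi}_{S0}$ and $\widetilde{\pi}_{S1}$. Applying Lemma \ref{lem:pointwise_linear_theta} pointwise and taking expectations over $\pi_S\mid W_S$ gives
\[
\theta_{\mu_S}(\pi_S\mid W_S)=\theta_{\mu_{S0}}(\widetilde{\pi}_{S0})+\theta_{\mu_{S1}}(\widetilde{\pi}_{S1}).
\]
Next, I show that appending $B_{S0}$ contributes nothing: since $B_{S0}\perp Y_{S0}\mid X_{S0}\widetilde{\M}_{S0}$ (Observation \ref{obs:bs_ind}), the extra round in the pointwise $\theta$-cost of $\pi_{S0}=\widetilde{\pi}_{S0}\odot B_{S0}$ contributes $\log\frac{\pi_{S0}(B_{S0}\mid X_{S0}Y_{S0}\widetilde{\M}_{S0})}{\pi_{S0}(B_{S0}\mid X_{S0}\widetilde{\M}_{S0})}=0$, so $\Theta_{S0}=\theta_{\mu_{S0}}(\widetilde{\pi}_{S0})$ and likewise for $\Theta_{S1}$. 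Thus $\Theta_{S0}+\Theta_{S1}=\theta_{\mu_S}(\pi_S\mid W_S)$.

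The main obstacle is to bound $\theta_{\mu_S}(\pi_S\mid W_S)$ by an expression in $\Theta_S$. By Fact \ref{fact:theta_dkl}, $\theta_{\mu_S}(\pi_S\mid W_S)=\dklhor{\pi_S\mid W_S}{\eta'}$ where $\eta'=\stdz(\pi_S\mid W_S,\mu_S)$, whereas $\Theta_S=\dklhor{\pi_S}{\eta}$ with $\eta=\stdz(\pi_S,\mu_S)$. The two standardizations differ because conditioning on $W_S$ alters the per-round transition distributions. To bridge this, I compare $\eta'$ with $\eta$ term-by-term: the logarithm of the ratio $\eta'/\eta$ splits into per-round contributions of the form $\log\frac{\pi_S(M^i\mid XM^{<i}W_S)}{\pi_S(M^i\mid XM^{<i})}$ for odd $i$ (and symmetrically for even $i$), whose expectations under $\pi_S\mid W_S$ are KL-divergences and hence non-negative. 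This is precisely what Lemma \ref{lem:remove_W} asserts, and it implies
\[
\dklhor{\pi_S\mid W_S}{\eta'}\le\dklhor{\pi_S\mid W_S}{\eta}.
\]

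Finally, Lemma \ref{lem:expected_kl_helper} applied to the event $W_S$ gives $\dklhor{\pi_S\mid W_S}{\eta}\le\tfrac{1}{p_S}[\Theta_S+\cH(p_S)]$, yielding the desired recursion
\[
\Theta_{S0}+\Theta_{S1}\le\frac{1}{p_S}\bigl[\Theta_S+\cH(p_S)\bigr].
\]
This matches the hypothesis of Lemma \ref{lem:pattern} with $c=1$. Combined with $\Theta_{\emptyset}=0$, Claim \ref{clm:sum_chi}, and the definition $\E(\Theta_S)=\frac{\sum_{|S|=m}\chi_S\Theta_S}{\sum_{|S|=m}\chi_S}$, the claim $\E(\Theta_S)=O(n^{-(1-\tau)}\log n)$ follows.
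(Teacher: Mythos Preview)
Your proposal is correct and follows essentially the same approach as the paper: both establish the recursion $\Theta_{S0}+\Theta_{S1}\le\frac{1}{p_S}[\Theta_S+\cH(p_S)]$ by first using the linearity of the pointwise $\theta$-cost (Lemma~\ref{lem:pointwise_linear_theta}) together with Observation~\ref{obs:bs_ind} to show $\Theta_{S0}+\Theta_{S1}=\theta_{\mu_S}(\pi_S\mid W_S)$, and then bounding $\theta_{\mu_S}(\pi_S\mid W_S)\le\dklhor{\pi_S\mid W_S}{\stdz(\pi_S,\mu_S)}$ via Lemma~\ref{lem:remove_W} followed by Lemma~\ref{lem:expected_kl_helper}. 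Your framing of the second step as comparing the two standardizations $\eta'=\stdz(\pi_S\mid W_S,\mu_S)$ and $\eta=\stdz(\pi_S,\mu_S)$ through Fact~\ref{fact:theta_dkl} is a clean way to express exactly the ``patching'' manipulation the paper performs; note that the $M^0$ term in $\log(\eta/\eta')$ is handled by the same KL non-negativity argument and should be mentioned alongside the per-round terms.
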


\begin{proof} We will first show that $\Theta_{S0} + \Theta_{S1} \leq \frac{1}{p_S} \cdot \left[\Theta_S + \cH(p_S)\right]$. To do so, we calculate $\theta_{\mu_S}( \pi_S \mid W_S)$ in two different ways.
\begin{enumerate}
    \item Denote the protocol $\pi_S$ by $(X_S, Y_S, \M_S)$. Recall that $\pi_S \mid W_S$ undergoes a binary decomposition into $\widetilde{\pi}_{S0} = (X_{S0}, Y_{S0}, \widetilde{\M}_{S0})$ and $\widetilde{\pi}_{S1} = (X_{S1}, Y_{S1}, \widetilde{\M}_{S1})$. Moreover, distribution-wise we have  $\pi_{S0} = (X_{S0}, Y_{S0}, \M_{S0})$ where $\M_{S0} = (\widetilde{\M}_{S0}, B_{S0})$ and $\pi_{S1} = (X_{S1}, Y_{S1}, \M_{S1})$ where $\M_{S1} = (\widetilde{\M}_{S1}, B_{S1})$. 
    
    By Lemma \ref{lem:1_level_decomp}, we know that $\pi_S \mid W_S$ has the partial rectangle property with respect to $(\mu_{S0},\mu_{S1})$. By the linearlity of pointwise-$\theta$-cost (Lemma \ref{lem:pointwise_linear_theta}), we have
\begin{align*}
& \theta_{\mu_S}( \pi_S \mid W_S) \\
& = \E_{(X_S,Y_S,\M_S) \sim \pi_S \mid W_S } \theta_{\mu_S}(\pi_S \mid W_S @ X, Y, \M) \\
& =  \E_{(X_S,Y_S,\M_S) \sim \pi_S \mid W_S }  \theta_{\mu_{S0}}(\widetilde{\pi}_{S0}  @ X_{S0}, Y_{S0}, \widetilde{\M}_{S0}) + \E_{(X_S,Y_S,\M_S) \sim \pi_S \mid W_S} \theta_{\mu_{S1}}(\widetilde{\pi}_{S1}  @ X_{S1}, Y_{S1}, \widetilde{\M}_{S1}) \\
& =  \E_{(X_{S0},Y_{S0},\widetilde{\M}_{S0}) \sim \widetilde{\pi}_{S0} }  \theta_{\mu_{S0}}(\widetilde{\pi}_{S0}  @ X_{S0}, Y_{S0}, \widetilde{\M}_{S0}) + \E_{(X_{S1},Y_{S1},\widetilde{\M}_{S1}) \sim \widetilde{\pi}_{S1} }  \theta_{\mu_{S1}}(\widetilde{\pi}_{S1}  @ X_{S1}, Y_{S1}, \widetilde{\M}_{S1})
\end{align*}

On the other hand, we can write:
\begin{align*} \Theta_{S0} = \theta_{\mu_{S0}}(\pi_{S0}) & = \theta_{\mu_{S0}}(\widetilde{\pi}_{S0} \odot B_{S0}) \\
 & = \E_{(X_{S0}, Y_{S0}, \widetilde{\M}_{S0}, B_{S0}) \sim \widetilde{\pi}_{S0} \odot B_{S0}}  \theta_{\mu_{S0}}(\pi_{S0} @ X_{S0}, Y_{S0}, \widetilde{\M}_{S0}, B_{S0}) 
\end{align*}

By $\pi_{S0} = \widetilde{\pi}_{S0} \odot B_{S0}$, we know that the protocol $\pi_{S0}$ and $\widetilde{\pi}_{S0}$ are identical, up to the last bit $B_{S0}$ of $\pi_{S0}$. Therefore, we have 
\begin{align*}
    \theta_{\mu_{S0}}(\pi_{S0} @ X, Y, \M) - \theta_{\mu_{S0}}(\widetilde{\pi}_{S0} @ X, Y, \M) =  \log \frac{\widetilde{\pi}_{S0}(B_{S0} \mid X_{S0} Y_{S0}\M_{S0})}{\widetilde{\pi}_{S0}(B_{S0} \mid X_{S0} \M_{S0})} = 0
\end{align*}
where the last equality follows Fact \ref{obs:bs_ind} that $B_{S0} \perp Y_{S0} \mid X_{S0}\widetilde{\M}_{S0}$ in the distribution $\widetilde{\pi}_{S0}$.

Combining with the symmetric terms, we have $$\theta_{\mu_S}( \pi_S \mid W_S) = \Theta_{S0} + \Theta_{S1}.$$

\item As a result of Lemma \ref{lem:remove_W}, we have 
    \begin{align*}
        & \sum_{\text{odd } i}\E_{(X_S, Y_S, \M_S) \sim \pi_S \mid W_S} \log{\frac{\pi_S(M^{i}_S \mid X_SY_SM^{<i}_S W_S)}{\pi_S(M^{i}_S \mid X_SM^{<i}_S W_S)}} + \sum_{\text{even } i}\E_{(X_S, Y_S, \M_S) \sim \pi_S \mid W_S} \log{\frac{\pi_S(M^{i}_S \mid X_SY_SM^{<i}_S W_S)}{\pi_S(M^{i}_S \mid Y_SM^{<i}_S W_S)}} \\
        & \leq \sum_{\text{odd } i}\E_{(X_S, Y_S, \M_S) \sim \pi_S \mid W_S} \log{\frac{\pi_S(M^{i}_S \mid X_SY_SM^{<i}_SW_S)}{\pi_S(M^{i}_S \mid X_SM^{<i}_S)}} + \sum_{\text{even } i}\E_{(X_S, Y_S, \M_S) \sim \pi_S \mid W_S} \log{\frac{\pi_S(M^{i}_S \mid X_SY_SM^{<i}_S W_S)}{\pi_S(M^{i}_S \mid Y_SM^{<i}_S)}}
    \end{align*}
By patching $\E_{(X_S, Y_S, \M_S) \sim \pi_S \mid W_S} \log\frac{\pi_S(X_S, Y_S \mid W_SM^{0}_S)}{\mu_S(X_S,Y_S)}$ into both sides, we get:
$$\theta_{\mu_S}(\pi_S \mid W_S) \leq \dkl{\pi_S(X_S,Y_S,\M_S \mid W_S)}{\upsilon(X_S,Y_S,\M_S )}$$
where $\upsilon = \stdz(\pi_S, \mu_S)$ is the standardization of $\pi_S$ with respect to $\mu_S$. Using Lemma \ref{lem:expected_kl_helper}, we can further bound
\begin{align*}
    \theta_{\mu_S}(\pi_S \mid W_S) & \leq \dkl{\pi_S(X_S,Y_S,\M_S  \mid W_S)}{\upsilon(X_S,Y_S,\M_S )} \\
    & \leq \frac{1}{p_S} \cdot \left[\dkl{\pi_S(X_S,Y_S,\M_S)}{\upsilon(X_S,Y_S,\M_S)} + \cH(p_S)\right] \tag{Lemma \ref{lem:expected_kl_helper}}\\
    & = \frac{1}{p_S} \cdot \left[\theta_{\mu_S}(\pi_S) + \cH(p_S)\right] \tag{Fact \ref{fact:theta_dkl}}\\
     & = \frac{1}{p_S} \cdot \left[\Theta_S + \cH(p_S)\right].
\end{align*}
\end{enumerate}

Putting together the two calculations,  we have $\Theta_{S0} + \Theta_{S1} \leq \frac{1}{p_S} \cdot \left[\Theta_S + \cH(p_S)\right]$ for any $S$. By Lemma \ref{lem:pattern}, we have
\begin{align*}
    \sum_{|S| = m} \chi_S \Theta_S & \leq \Theta_{\emptyset} + O(n^{\tau} \log n) = O(n^{\tau} \log n).
\end{align*}

Recall via Claim \ref{clm:sum_chi} that $\sum_{|S| = m} \chi_S  = \Omega(n)$. Therefore, we have
$$\E(\Theta_S) = \frac{\sum_{|S| = m} \chi_S  \Theta_S}{\sum_{|S| = m} \chi_S } = \frac{O(n^{\tau} \log n)}{\Omega(n)} = O(n^{-(1-\tau)} \log n)$$
as wished.
\end{proof}

As a summary of this section, we reformulate our decomposition lemma. 

\begin{lemma}[Decomposition Lemma; formal] There exists $S \in \{0,1\}^m$ such that the protocol $\pi_S$ for computing $f$ has the following properties:
\begin{enumerate}
    \item[(1)] $\pi_S$ errs with small probability: $\varepsilon_S = O(n^{-(1-\tau)})$.
    \item[(2)] $\pi_S$ has small information cost: $I_S = O(\mathcal{I}/n + 1)$.
    \item[(3)] $\pi_S$ has small $\theta$-cost with respect to $\mu$: $\theta_{\mu}(\pi_S) = O(n^{-(1-\tau)} \log n)$.
\end{enumerate}
\label{lem:main_decomp_formal}
\end{lemma}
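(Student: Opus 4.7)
The plan is a straightforward probabilistic-method argument that packages the three expectation bounds already established in Claims \ref{claim:eps_bounds}, \ref{claim:IC_bounds}, and \ref{claim:Theta_bounds} into a single existence statement. By those claims, under the randomness $S \sim \mathcal{D}$, the non-negative random variables $\varepsilon_S$, $I_S$, and $\Theta_S$ have expectations $O(n^{-(1-\tau)})$, $O(\mathcal{I}/n + 1)$, and $O(n^{-(1-\tau)} \log n)$ respectively. The goal is to witness a single index $S \in \{0,1\}^m$ for which all three quantities are simultaneously small.

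First, I would apply Markov's inequality to each of the three random variables: for any constant $c > 1$, the probability (under $\mathcal{D}$) that $\varepsilon_S$ exceeds $c \cdot \E[\varepsilon_S]$ is at most $1/c$, and analogously for $I_S$ and $\Theta_S$. Taking, say, $c = 4$, a union bound over the three ``bad'' events shows that with $\mathcal{D}$-probability at least $1 - 3/4 > 0$, the index $S$ simultaneously satisfies $\varepsilon_S \leq 4\E[\varepsilon_S]$, $I_S \leq 4\E[I_S]$, and $\Theta_S \leq 4\E[\Theta_S]$. In particular, such an $S$ exists, and the constant factor of $4$ is absorbed into the $O(\cdot)$ notation, yielding the three asymptotic bounds in the statement.

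Finally, I would reconcile the notation: the lemma writes the third quantity as $\theta_\mu(\pi_S)$, whereas the claim bounds $\Theta_S := \theta_{\mu_S}(\pi_S)$. Because $|S| = m$ in the final level of the recursion, the convention of Table \ref{table:params_S} gives $\mu_S = \mu^{2^{m-|S|}} = \mu$, so the two quantities coincide. Putting everything together, the witnessed $S$ satisfies $\varepsilon_S = O(n^{-(1-\tau)})$, $I_S = O(\mathcal{I}/n + 1)$, and $\theta_\mu(\pi_S) = O(n^{-(1-\tau)} \log n)$, as required.

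I do not anticipate a substantive obstacle; the lemma is purely a packaging step (``in-expectation'' to ``exists''), and the dependence among $\varepsilon_S$, $I_S$, $\Theta_S$ over the shared randomness $S$ is irrelevant to the union bound on the complements of the favorable events. The genuinely non-trivial work is already behind us in Claims \ref{claim:eps_bounds}, \ref{claim:IC_bounds}, and \ref{claim:Theta_bounds}, which were the real heart of the decomposition analysis.
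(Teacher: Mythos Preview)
Your proposal is correct and matches the paper's own proof essentially line-for-line: the paper also samples $S \sim \mathcal{D}$, applies Markov's inequality to each of the three expectation bounds from Claims \ref{claim:eps_bounds}, \ref{claim:IC_bounds}, \ref{claim:Theta_bounds}, and then union-bounds to witness a single $S$. Your extra remark that $\mu_S = \mu$ when $|S|=m$ (so $\Theta_S = \theta_\mu(\pi_S)$) is a helpful clarification the paper leaves implicit.
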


\begin{proof} Consider sampling $S \sim \D$. Applying Markov's Inequality to Claim \ref{claim:eps_bounds}, Claim \ref{claim:IC_bounds}, and Claim \ref{claim:Theta_bounds}, each the following events occurs with probability at least $0.99$: $\varepsilon_S = O(n^{-(1-\tau)})$, $I_S = O(\mathcal{I}/n + 1)$, and $\theta_{\mu}(\pi_S) = O(n^{-(1-\tau)} \log n)$. Via the union bound, the three event occurs simultaneously with positive probability. Therefore, there must exists $S \in \{0,1\}^m$ for which all three events holds. 
\end{proof}

\section{Obtaining a Standard Protocol for $f$}
\label{sec:standardize}

Let $\pi_S$ be a protocol that satisfies Lemma \ref{lem:main_decomp_formal}. We identify two remaining issues arising from the conditioning events that associate with the protocol $\pi_S$. First, $\pi_S$ is not a standard protocol. Second, the input distribution of $\pi_S$ is no longer $\mu$; therefore, the low distributional error of $\pi_S$ is evaluated against a different input distribution $\mu'$. We address both issues simultaneously. Notably, the low $\theta$-cost of $\pi_S$ implies that both $\pi_S$ is \emph{close} to being a standard protocol and $\mu'$ is \emph{close} to $\mu$. Thus, there are hopes that we can transform $\pi_S$ into a standard protocol with the correct input distribution $\mu$, while incurring only small losses in both information cost and distributional error. It turns out that this task can be achieved via \emph{standardization} as restated below.

\standardize*

The following lemma guarantees that standardizing a generalized protocol results in only small losses in information cost and distributional error, provided that the $\theta$-cost of the protocol is small.

\begin{lemma}
Suppose that a generalized protocol $\pi = (X,Y,\M)$ has information cost $\IC(\pi) = I^{\pi}(\M:X \mid YM_0) + I^{\pi}(\M:Y \mid XM_0)$, and errs with probability $\rho$ over the input distribution $\pi(x,y)$. Let $\mu$ be another input distribution, and let $\eta = \stdz(\pi, \mu)$ be the standardization of $\pi$ with respect to $\mu$. Denote $\ell = D(\pi \ \| \ \eta)$. Then, we have:
\begin{enumerate}
                \item[(1)] the information cost of $\eta$ is at most $\IC(\pi) + 2 \cdot \cH(O(\sqrt{\ell})) + O(\sqrt{\ell}) \cdot \log{(|\mathcal{X}|\cdot|\mathcal{Y}|)}$.
    \item[(2)] $\eta$ errs with probability at most $\rho + O(\sqrt{\ell})$ over the input distribution $\mu$.
\end{enumerate}
\label{lem:standardize}
\end{lemma}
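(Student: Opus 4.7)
The plan is to convert the KL bound into a total-variation bound via Pinsker's inequality and then exploit continuity of conditional entropy. Concretely, Pinsker gives $\delta := \|\pi - \eta\| = O(\sqrt{\ell})$ on the joint distribution over $(X, Y, \M)$, and since marginalization cannot increase total variation, every marginal of $\pi$ agrees with the corresponding marginal of $\eta$ to within $\delta$. Part~(2) then follows immediately: the event $[M^r \neq f(X,Y)]$ is measurable in $(X,Y,\M)$, and because $\eta(X,Y) = \mu(X,Y)$ by construction of $\stdz$, the distributional error of $\eta$ over $\mu$ equals $\Pr_{\eta}[M^r \neq f(X,Y)]$, which differs from $\Pr_{\pi}[M^r \neq f(X,Y)] = \rho$ by at most $\|\pi - \eta\| = O(\sqrt{\ell})$.

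For part~(1), I would expand each mutual-information term in the information cost as a difference of conditional entropies,
\[
I(\M : X \mid Y M^0) = H(X \mid Y M^0) - H(X \mid Y M^0 \M),
\]
and analogously for $I(\M : Y \mid X M^0)$. Subtracting the $\eta$-quantities from the $\pi$-quantities leaves four differences of the form $H^{\eta}(A \mid B) - H^{\pi}(A \mid B)$ with $A \in \{X, Y\}$ and $B$ a marginal of $(X,Y,\M)$. I would then apply the Alicki--Fannes--Winter continuity bound for (classical) conditional entropy: if two joint distributions over $(A,B)$ lie within total-variation distance $\tau \le 1/2$, then
\[
\bigl| H_P(A \mid B) - H_Q(A \mid B) \bigr| \le O(\tau) \cdot \log|A| + O(\cH(\tau)),
\]
where, crucially, the bound depends on the support of the unconditioned variable $A$ only. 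Summing the four bounds with $\tau \le \delta = O(\sqrt{\ell})$ and $|A| \in \{|\mathcal{X}|, |\mathcal{Y}|\}$ yields $|\IC(\eta) - \IC(\pi)| \le O(\sqrt{\ell}) \cdot \log(|\mathcal{X}| \cdot |\mathcal{Y}|) + 2 \cH(O(\sqrt{\ell}))$, which matches the claimed form. Equivalently, one can reach the same estimate via the coupling lemma: couple draws from $\pi$ and $\eta$ so they agree except on an event $\overline{E}$ of probability $O(\sqrt{\ell})$; revealing the indicator of $E$ contributes the $\cH$ term, while conditioning on $\overline{E}$ contributes the $\log(|\mathcal{X}|\cdot|\mathcal{Y}|)$ factor (since each of Alice's and Bob's input has entropy at most $\log|\mathcal{X}|$ and $\log|\mathcal{Y}|$ respectively).

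The main obstacle I expect is the need for the asymmetric, one-sided continuity bound: a naive Fano-style estimate applied to the full joint $(X,Y,\M)$ would be vacuous, since the penalty would scale with the transcript alphabet, which can be enormous. Alicki--Fannes--Winter is precisely the tool that pays only in the support size of the unconditioned variable, so each of the four entropy differences costs only $\log|\mathcal{X}|$ or $\log|\mathcal{Y}|$ rather than something depending on $|\M|$. The remaining work is bookkeeping: verifying that $\delta \le 1/2$ (the case $\delta > 1/2$ is handled trivially, since $\IC(\eta) \leq \log|\mathcal{X}| + \log|\mathcal{Y}|$ dominates the claimed bound), and tracking constants to land on the exact expression $2\cH(O(\sqrt{\ell})) + O(\sqrt{\ell}) \cdot \log(|\mathcal{X}|\cdot|\mathcal{Y}|)$.
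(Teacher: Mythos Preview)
Your proposal is correct. For part~(2), your direct total-variation argument is in fact cleaner than the paper's: the paper routes part~(2) through the coupling as well, whereas you just observe that the error event is a function of $(X,Y,\M)$ and that $\eta(X,Y)=\mu$ by construction.

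For part~(1), the paper takes precisely your \emph{alternative} route: it invokes the coupling lemma to produce samples $(X^\pi,Y^\pi,\M^\pi)\sim\pi$ and $(X^\eta,Y^\eta,\M^\eta)\sim\eta$ that agree outside an event $F$ of probability $O(\sqrt{\ell})$, then conditions $I^\eta(\M:X\mid YM_0)$ on $\mathbbm{1}_F$, uses Lemma~\ref{lem:ic_condition} to bound the $\overline{F}$-part by $I^\pi(\M:X\mid YM_0)+\cH(\Pr(\overline{F}))$, and bounds the $F$-part crudely by $\Pr(F)\cdot\log|\mathcal{X}|$. Your primary route via the Alicki--Fannes--Winter continuity bound for conditional entropy is a genuinely different packaging: it treats $I=H(X\mid YM^0)-H(X\mid Y\M)$ as a difference of conditional entropies and imports an off-the-shelf inequality whose whole point is that the penalty scales with $\log|\mathcal{X}|$ rather than with the conditioning alphabet. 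The paper's coupling argument is essentially a self-contained proof of the same continuity fact specialized to mutual information, using only lemmas already established in the text; your AFW route is more modular and makes the ``only the $X,Y$ alphabets matter'' phenomenon explicit at the level of the cited tool. Both land on the same bound up to constants.
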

\begin{proof} By Pinsker's Inequality, we have $\|\pi(X,Y,\M) - \eta(X,Y,\M)\| \leq O(\sqrt{\ell})$. By Lemma \ref{lem:coupling_lemma}, there exists a random process that generates $(X^{\pi},Y^{\pi}, \M^{\pi}) \sim \pi(X,Y,\M)$, $(X^{\eta},Y^{\eta}, \M^{\eta}) \sim \eta(X,Y,\M)$, and crucially the probability that $(X^{\pi},Y^{\pi}, \M^{\pi}) \ne (X^{\eta},Y^{\eta}, \M^{\eta})$ is at most $\|\pi(X,Y,\M) - \eta(X,Y,\M)\| \leq O(\sqrt{\ell})$. Let $F$ be such event. 

To prove (1), consider:
\begin{align*}
    & I^{\eta}(\M : X \mid YM_0) \\
    & = \Pr(\overline{F}) \cdot I^{\eta}(\M:X \mid YM_0 \overline{F}) + \Pr(F) \cdot I^{\eta}(\M:X \mid YM_0F) \\
    & =\Pr(\overline{F}) \cdot I^{\pi}(\M:X \mid Y M_0\overline{F}) + \Pr(F) \cdot I^{\eta}(\M:X \mid YM_0F) \tag{Conditioned on $\overline{F}$, we have $\eta = \pi$}\\
    & \leq I^{\pi}(\M:X \mid YM_0) + \cH(\Pr(\overline{F})) + \Pr(F) \cdot I^{\eta}(\M:X \mid Y M_0F) \tag{Lemma \ref{lem:ic_condition}} \\
    & \leq I^{\pi}(\M:X \mid YM_0)  + \cH(O(\sqrt{\ell})) + O(\sqrt{\ell}) \cdot \log |\mathcal{X}| \tag{$I^{\eta}(\M:X \mid Y \overline{E}) \leq \cH(X) \leq \log{|\mathcal{X}|}$}
\end{align*}
Combining with a symmetric term, we will have 

$$\IC(\eta) \leq \IC(\pi) + 2 \cdot \cH(O(\sqrt{\ell})) + O(\sqrt{\ell}) \cdot \log{(|\mathcal{X}|\cdot|\mathcal{Y}|)}.$$

To prove (2), observe that
\begin{align*}
    & \Pr_{(x,y) \sim \mu} \left(\eta \text{ errs on } (x,y)\right)  = \Pr(\overline{F}) \cdot \Pr_{(x,y) \sim \mu \mid  \overline{F}} \left(\eta \text{ errs on } (x,y)\right) \ + \ \Pr(F) \cdot \Pr_{(x,y) \sim \mu  \mid  F } \left(\eta \text{ errs on } (x,y)\right) \\
    & \Pr_{(x,y) \sim \pi} \left(\pi \text{ errs on } (x,y)\right)  = \Pr(\overline{F}) \cdot \Pr_{(x,y) \sim \pi \mid  \overline{F}} \left(\pi \text{ errs on } (x,y)\right) \ + \ \Pr(F) \cdot \Pr_{(x,y) \sim \pi  \mid  F } \left(\pi \text{ errs on } (x,y)\right)
\end{align*}
Recall again that that conditioned on $\overline{F}$, we have $\eta = \pi$. Therefore,
\begin{align*}
    & \Pr_{(x,y) \sim \mu} \left(\eta \text{ errs on } (x,y)\right) \\
    & = \Pr_{(x,y) \sim \pi} \left(\pi \text{ errs on } (x,y)\right) \ + \ \Pr(F) \cdot \left[\Pr_{(x,y) \sim \mu  \mid  F } \left(\eta \text{ errs on } (x,y)\right) -\Pr_{(x,y) \sim \pi  \mid  F } \left(\pi \text{ errs on } (x,y)\right)\right] \\
    & \leq \rho + O(\sqrt{\ell}).
\end{align*}
This concludes the proof.
\end{proof}

\paragraph{Completing the Proof of Lemma \ref{lem:main_lemma}.} Let $\eta = \stdz(\pi_S, \mu)$ be the standardization of $\pi_{S}$ with respect to $\mu$. Note that by the promise of Lemma \ref{lem:main_decomp_formal}, we have $\IC(\pi_S) = I_S = O(\mathcal{I}/n + 1)$, and the its distributional error is $\rho = \frac{\varepsilon_S}{2} =  O(n^{-(1-\tau)})$. Plus, we can upper-bound $\ell$ using Fact \ref{fact:theta_dkl}:
\begin{align*}
    \ell := D(\pi_{S} \ \| \ \eta)  & = \theta_{\mu}(\pi_{S}) = O(n^{-(1-\tau)}\log n).
\end{align*}

Recall $\tau \in (0,1)$ is an absolute constant. We apply Lemma \ref{lem:standardize} to the generalized protocol $\pi_{S}$. As a result, the standard protocol $\eta = \stdz(\pi, \mu)$ has the following properties:
\begin{enumerate}
    \item $\eta$ computes $f$ and errs over input distribution with probability $\rho + O\left(\sqrt{\ell}\right) = O\left(\sqrt{n^{-(1-\tau)} \log n}\right) \leq n^{-\lambda}$
    \item $\eta$ has information cost $O\left(\frac{\mathcal{I}}{n} + 1\right)  + 2 \cdot \cH(O(\sqrt{\ell})) + O(\sqrt{\ell}) \cdot \log{(|\mathcal{X}|\cdot|\mathcal{Y}|)}\leq C \cdot \left(\frac{\mathcal{I}}{n}+ \frac{\log \left(|\mathcal{X}| \cdot |\mathcal{Y}|\right)}{n^\lambda} + 1\right)$
\end{enumerate}
for some absolute constants $\lambda \in (0,1)$ and $C>0$. This completes the proof of Lemma \ref{lem:main_lemma}.

\section{Proof of XOR Lemmas}
\label{sec:pfs_xor}

Recall our main lemma which we have proved

\mainlemma*
Now we use it to derive our main results of Theorem \ref{thm:xor_ic} and Theorem \ref{thm:xor_dist_ic}. We note that throughout our proofs (including those in the Appendix), we assume that the infima in Definition \ref{def:dist_ic} and Definition \ref{def:ic} are attained by some protocol $\pi$. This assumption can be relaxed by instead considering a sequence of protocols $\{\pi_i\}_{i \geq 1}$ with corresponding costs that converge to the infimum.

\subsection{XOR Lemma for Distributional Information Cost}

\xordistic* 

\begin{proof}
    By definition of distributional information cost, let $\pi$ be a protocol for solving $f^{\oplus n}$ over $\mu^n$ that errs with probability $1/10$ and has information cost $\mathcal{I} = \IC_{\mu^n}(f^{\oplus n}, 1/10)$. Let $\eta$ be the protocol for solving $f$ obtained from Lemma \ref{lem:main_lemma}. We then derive:
\begin{align*}
    \IC_{\mu}(f, n^{-\lambda}) & = \inf_{\pi; \Pr_{(x,y) \sim \mu}(\pi(x,y) \ne f(x,y)) \leq n^{-\lambda}} \IC(\pi) \\
    & \leq \IC(\eta) \tag{$\eta$ is a protocol satisfying the infimum conditions} \\
    & \leq C \cdot \left(\frac{\mathcal{I}}{n} + \frac{\log \left(|\mathcal{X}| \cdot |\mathcal{Y}|\right)}{n^{\lambda} } + 1\right).
\end{align*}
Rearranging the inequality completes the proofs.
\end{proof}

\subsection{XOR Lemma for Information Complexity}

We will need the following lemma to boost the success probability. Its proof will be shown in the appendix.

\begin{restatable}{lem}{boosting} Let $g$ be a $\{0,1\}$-valued function, and $\varepsilon' < \varepsilon < (2e)^{-1}-0.01$.   Then, we have we have $\IC(g, \varepsilon') \leq  O\left(\frac{\log{(1/\varepsilon')}}{\log (1/\varepsilon)}\right) \cdot \IC(g, \varepsilon)$.
\label{lem:boosting}
\end{restatable}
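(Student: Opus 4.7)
The plan is to apply the standard amplification by independent repetition and majority vote, but to carefully track how this interacts with information cost. Let $\pi$ be a protocol that achieves the infimum in Definition \ref{def:ic}: its worst-case error on any input is at most $\varepsilon$, and $\max_\mu \IC_\mu(\pi) = \IC(g, \varepsilon)$. Define a new protocol $\pi^{(k)}$ in which the players run $\pi$ independently $k$ times with fresh independent private and public randomness, obtain output bits $b_1, \ldots, b_k$, and both output $\textnormal{maj}(b_1, \ldots, b_k)$. No additional communication is needed since the final output of every invocation of $\pi$ is already known to both parties. We choose $k$ at the end, after analyzing error and information cost.

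For the error bound, on any fixed input $(x,y)$ the number of incorrect $b_i$ is stochastically dominated by $\textnormal{Bin}(k, \varepsilon)$, and the majority errs only if at least $\lceil k/2 \rceil$ runs err. Using the standard estimate $\binom{k}{\lceil k/2 \rceil} \leq (2e)^{k/2}$ (via $\binom{k}{j} \leq (ek/j)^j$), we obtain
\[
\Pr[\pi^{(k)} \text{ errs on } (x,y)] \;\leq\; \binom{k}{\lceil k/2 \rceil}\, \varepsilon^{\lceil k/2 \rceil} \;\leq\; (2e\varepsilon)^{k/2}.
\]
Since $\varepsilon < (2e)^{-1} - 0.01$, we have $2e\varepsilon \leq 1 - 0.02 e$, so $\log(1/(2e\varepsilon))$ is bounded below by a positive absolute constant; moreover $\log(1/\varepsilon) = \log(1/(2e\varepsilon)) + \log(2e)$, so $\log(1/(2e\varepsilon)) = \Theta(\log(1/\varepsilon))$ in this range. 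Setting $k = \left\lceil 2 \log(1/\varepsilon') / \log(1/(2e\varepsilon)) \right\rceil$ guarantees the worst-case error is at most $\varepsilon'$, while $k = O(\log(1/\varepsilon')/\log(1/\varepsilon))$.

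For the information cost, fix any input distribution $\mu$ and let $\M_i$ denote the transcript of the $i$-th copy of $\pi$. Because each $\M_i$ depends only on $(X, Y)$ and fresh randomness independent of the other runs, we have the conditional independence $\M_i \perp \M_{<i} \mid X, Y$. By the chain rule,
\begin{align*}
I(\M_1, \ldots, \M_k : X \mid Y)
&= \sum_{i=1}^k I(\M_i : X \mid Y, \M_{<i}) \\
&= \sum_{i=1}^k \left[ H(\M_i \mid Y, \M_{<i}) - H(\M_i \mid X, Y) \right] \\
&\leq \sum_{i=1}^k \left[ H(\M_i \mid Y) - H(\M_i \mid X, Y) \right]
\;=\; k \cdot I(\M : X \mid Y),
\end{align*}
and symmetrically for $I(\cdot : Y \mid X)$. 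Summing the two pieces gives $\IC_\mu(\pi^{(k)}) \leq k \cdot \IC_\mu(\pi)$, and taking the maximum over $\mu$ yields $\max_\mu \IC_\mu(\pi^{(k)}) \leq k \cdot \IC(g, \varepsilon)$. Since $\pi^{(k)}$ is a valid protocol computing $g$ with worst-case error at most $\varepsilon'$, we conclude $\IC(g, \varepsilon') \leq \IC(\pi^{(k)}) \leq k \cdot \IC(g, \varepsilon)$, which is the desired bound.

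The only subtle point is the scaling in the error analysis: a weaker bound $\binom{k}{k/2} \leq 2^k$ would give $(4\varepsilon)^{k/2}$, still usable since $4\varepsilon < 1$, but would produce the denominator $\log(1/(4\varepsilon))$ instead of $\log(1/\varepsilon)$. The stronger estimate $\binom{k}{\lceil k/2 \rceil} \leq (2e)^{k/2}$ together with the explicit slack $0.01$ below $(2e)^{-1}$ is exactly what lets us identify the denominator with $\log(1/\varepsilon)$ up to an absolute multiplicative constant. Everything else is routine: the information-cost additivity across independent runs is the standard chain-rule computation, and no extra rounds or transcript bits are needed because both parties can compute the majority locally from the per-run outputs.
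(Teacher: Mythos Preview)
Your proposal is correct and follows essentially the same approach as the paper: take a near-optimal protocol for error $\varepsilon$, run $k = O(\log(1/\varepsilon')/\log(1/\varepsilon))$ independent copies and output the majority, bound the error by $(2e\varepsilon)^{k/2}$, and bound the information cost by $k \cdot \IC(g,\varepsilon)$ via the chain rule and conditional independence of the transcripts given $(X,Y)$. The only cosmetic differences are that the paper sums the full binomial tail $\sum_{i \ge T/2}\binom{T}{i}\varepsilon^i(1-\varepsilon)^{T-i}$ before applying $\binom{T}{i}\le (eT/i)^i$, whereas you use the union bound $\Pr[\mathrm{Bin}(k,\varepsilon)\ge \lceil k/2\rceil]\le \binom{k}{\lceil k/2\rceil}\varepsilon^{\lceil k/2\rceil}$ directly; both routes lead to the same $(2e\varepsilon)^{k/2}$ estimate.
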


Now we are ready to prove Theorem \ref{thm:xor_ic}.

\xoric*

\begin{proof} Let $\mu$ be the maximizer of $\IC_{\mu}(f, n^{-\lambda})$ so that $\IC_{\mathsf{D}}(f, n^{-\lambda}) = \IC_{\mu}(f, n^{-\lambda})$. Consider

\begin{align*}
    \IC(f^{\oplus n}, 1/10) \geq \IC_{\mathsf{D}}(f^{\oplus n}, 1/10) 
    & \geq \IC_{\mu^n} (f^{\oplus n}, 1/10) \\
    &\geq Cn \cdot \left(\IC_{\mu}(f, n^{-\lambda}) - \frac{\log \left(|\mathcal{X}| \cdot |\mathcal{Y}|\right)}{n^{\lambda} } - 1\right) \\
    & = Cn \cdot \left(\IC_{\mathsf{D}}(f, n^{-\lambda}) - \frac{\log \left(|\mathcal{X}| \cdot |\mathcal{Y}|\right)}{n^{\lambda} } - 1\right) \\
\end{align*}
Let us now focus on the quantity $\IC_{\mathsf{D}}(f, n^{-\lambda})$. By Theorem \ref{thm:minimax_ic}, we have 
$\IC_{\mathsf{D}}(f, n^{-\lambda}) \geq \frac{\IC(f, 2n^{-\lambda})}{2}.$ Moreover, we know that $\IC(f, n^{-1}) \leq O(\IC(f, 2n^{-\lambda}))$ via Lemma \ref{lem:boosting}. Combining everything, we have:
\begin{align*}
    \IC(f^{\oplus n}, 1/10) \geq c_1n \cdot \left(\IC(f, n^{-1}) - \frac{\log \left(|\mathcal{X}| \cdot |\mathcal{Y}|\right)}{n^{\lambda} } - 1\right)
\end{align*}
for some absolute constant $c_1 > 0$.    
\end{proof}

The following Theorem will be proved in the Appendix.

\xoricub*

Theorem \ref{thm:xor_ic} and Theorem \ref{thm:xor_ic_ub} together establish an asymptotically tight relationship (up to vanishing additive losses) between the information complexities of $f$ and $f^{\oplus n}$.

\printbibliography

@inproceedings{Yu22,
  author       = {Huacheng Yu},
  title        = {Strong {XOR} Lemma for Communication with Bounded Rounds : (extended
                  abstract)},
  booktitle    = {63rd {IEEE} Annual Symposium on Foundations of Computer Science, {FOCS}
                  2022, Denver, CO, USA, October 31 - November 3, 2022},
  pages        = {1186--1192},
  publisher    = {{IEEE}},
  year         = {2022},
  url          = {https://doi.org/10.1109/FOCS54457.2022.00114},
  doi          = {10.1109/FOCS54457.2022.00114},
  timestamp    = {Sat, 31 Dec 2022 17:18:45 +0100},
  biburl       = {https://dblp.org/rec/conf/focs/Yu22.bib},
  bibsource    = {dblp computer science bibliography, https://dblp.org}
}

@article{Braverman15,
  author       = {Mark Braverman},
  title        = {Interactive Information Complexity},
  journal      = {{SIAM} J. Comput.},
  volume       = {44},
  number       = {6},
  pages        = {1698--1739},
  year         = {2015},
  url          = {https://doi.org/10.1137/130938517},
  doi          = {10.1137/130938517},
  timestamp    = {Sat, 27 May 2017 14:22:59 +0200},
  biburl       = {https://dblp.org/rec/journals/siamcomp/Braverman15.bib},
  bibsource    = {dblp computer science bibliography, https://dblp.org}
}

@article{BYJKS04,
  author       = {Ziv Bar{-}Yossef and
                  T. S. Jayram and
                  Ravi Kumar and
                  D. Sivakumar},
  title        = {An information statistics approach to data stream and communication
                  complexity},
  journal      = {J. Comput. Syst. Sci.},
  volume       = {68},
  number       = {4},
  pages        = {702--732},
  year         = {2004},
  url          = {https://doi.org/10.1016/j.jcss.2003.11.006},
  doi          = {10.1016/J.JCSS.2003.11.006},
  timestamp    = {Mon, 08 Mar 2021 08:56:40 +0100},
  biburl       = {https://dblp.org/rec/journals/jcss/Bar-YossefJKS04.bib},
  bibsource    = {dblp computer science bibliography, https://dblp.org}
}

@inproceedings{BBCR10,
  author       = {Boaz Barak and
                  Mark Braverman and
                  Xi Chen and
                  Anup Rao},
  editor       = {Leonard J. Schulman},
  title        = {How to compress interactive communication},
  booktitle    = {Proceedings of the 42nd {ACM} Symposium on Theory of Computing, {STOC}
                  2010, Cambridge, Massachusetts, USA, 5-8 June 2010},
  pages        = {67--76},
  publisher    = {{ACM}},
  year         = {2010},
  url          = {https://doi.org/10.1145/1806689.1806701},
  doi          = {10.1145/1806689.1806701},
  timestamp    = {Sat, 05 Sep 2020 15:20:42 +0200},
  biburl       = {https://dblp.org/rec/conf/stoc/BarakBCR10.bib},
  bibsource    = {dblp computer science bibliography, https://dblp.org}
}

@article{IR24a,
  author       = {Siddharth Iyer and
                  Anup Rao},
  title        = {An {XOR} Lemma for Deterministic Communication Complexity},
  journal      = {CoRR},
  volume       = {abs/2407.01802},
  year         = {2024},
  url          = {https://doi.org/10.48550/arXiv.2407.01802},
  doi          = {10.48550/ARXIV.2407.01802},
  eprinttype    = {arXiv},
  eprint       = {2407.01802},
  timestamp    = {Fri, 09 Aug 2024 10:24:43 +0200},
  biburl       = {https://dblp.org/rec/journals/corr/abs-2407-01802.bib},
  bibsource    = {dblp computer science bibliography, https://dblp.org}
}

@inproceedings{IR24b,
  author       = {Siddharth Iyer and
                  Anup Rao},
  editor       = {Bojan Mohar and
                  Igor Shinkar and
                  Ryan O'Donnell},
  title        = {{XOR} Lemmas for Communication via Marginal Information},
  booktitle    = {Proceedings of the 56th Annual {ACM} Symposium on Theory of Computing,
                  {STOC} 2024, Vancouver, BC, Canada, June 24-28, 2024},
  pages        = {652--658},
  publisher    = {{ACM}},
  year         = {2024},
  url          = {https://doi.org/10.1145/3618260.3649726},
  doi          = {10.1145/3618260.3649726},
  timestamp    = {Tue, 18 Jun 2024 09:24:04 +0200},
  biburl       = {https://dblp.org/rec/conf/stoc/Iyer024.bib},
  bibsource    = {dblp computer science bibliography, https://dblp.org}
}

@book{Rao_Yehudayoff_2020, place={Cambridge}, title={Communication Complexity: and Applications}, publisher={Cambridge University Press}, author={Rao, Anup and Yehudayoff, Amir}, year={2020}}

@inproceedings{Yao82,
  author       = {Andrew Chi{-}Chih Yao},
  title        = {Theory and Applications of Trapdoor Functions (Extended Abstract)},
  booktitle    = {23rd Annual Symposium on Foundations of Computer Science, Chicago,
                  Illinois, USA, 3-5 November 1982},
  pages        = {80--91},
  publisher    = {{IEEE} Computer Society},
  year         = {1982},
  url          = {https://doi.org/10.1109/SFCS.1982.45},
  doi          = {10.1109/SFCS.1982.45},
  timestamp    = {Thu, 23 Mar 2023 23:57:53 +0100},
  biburl       = {https://dblp.org/rec/conf/focs/Yao82a.bib},
  bibsource    = {dblp computer science bibliography, https://dblp.org}
}

@article{Lev87,
  author       = {Leonid A. Levin},
  title        = {One-way functions and pseudorandom generators},
  journal      = {Comb.},
  volume       = {7},
  number       = {4},
  pages        = {357--363},
  year         = {1987},
  url          = {https://doi.org/10.1007/BF02579323},
  doi          = {10.1007/BF02579323},
  timestamp    = {Wed, 22 Jul 2020 22:02:45 +0200},
  biburl       = {https://dblp.org/rec/journals/combinatorica/Levin87.bib},
  bibsource    = {dblp computer science bibliography, https://dblp.org}
}

@inproceedings{IW97,
  author       = {Russell Impagliazzo and
                  Avi Wigderson},
  editor       = {Frank Thomson Leighton and
                  Peter W. Shor},
  title        = {\emph{P = BPP} if \emph{E} Requires Exponential Circuits: Derandomizing
                  the {XOR} Lemma},
  booktitle    = {Proceedings of the Twenty-Ninth Annual {ACM} Symposium on the Theory
                  of Computing, El Paso, Texas, USA, May 4-6, 1997},
  pages        = {220--229},
  publisher    = {{ACM}},
  year         = {1997},
  url          = {https://doi.org/10.1145/258533.258590},
  doi          = {10.1145/258533.258590},
  timestamp    = {Tue, 07 May 2024 20:08:57 +0200},
  biburl       = {https://dblp.org/rec/conf/stoc/ImpagliazzoW97.bib},
  bibsource    = {dblp computer science bibliography, https://dblp.org}
}

@inproceedings{Imp95,
  author       = {Russell Impagliazzo},
  title        = {Hard-Core Distributions for Somewhat Hard Problems},
  booktitle    = {36th Annual Symposium on Foundations of Computer Science, Milwaukee,
                  Wisconsin, USA, 23-25 October 1995},
  pages        = {538--545},
  publisher    = {{IEEE} Computer Society},
  year         = {1995},
  url          = {https://doi.org/10.1109/SFCS.1995.492584},
  doi          = {10.1109/SFCS.1995.492584},
  timestamp    = {Sat, 30 Sep 2023 09:41:24 +0200},
  biburl       = {https://dblp.org/rec/conf/focs/Impagliazzo95.bib},
  bibsource    = {dblp computer science bibliography, https://dblp.org}
}

@incollection{GNW11,
  author       = {Oded Goldreich and
                  Noam Nisan and
                  Avi Wigderson},
  editor       = {Oded Goldreich},
  title        = {On Yao's XOR-Lemma},
  booktitle    = {Studies in Complexity and Cryptography. Miscellanea on the Interplay
                  between Randomness and Computation - In Collaboration with Lidor Avigad,
                  Mihir Bellare, Zvika Brakerski, Shafi Goldwasser, Shai Halevi, Tali
                  Kaufman, Leonid Levin, Noam Nisan, Dana Ron, Madhu Sudan, Luca Trevisan,
                  Salil Vadhan, Avi Wigderson, David Zuckerman},
  series       = {Lecture Notes in Computer Science},
  volume       = {6650},
  pages        = {273--301},
  publisher    = {Springer},
  year         = {2011},
  url          = {https://doi.org/10.1007/978-3-642-22670-0\_23},
  doi          = {10.1007/978-3-642-22670-0\_23},
  timestamp    = {Tue, 14 May 2019 10:00:35 +0200},
  biburl       = {https://dblp.org/rec/books/sp/goldreich2011/GoldreichNW11.bib},
  bibsource    = {dblp computer science bibliography, https://dblp.org}
}

@article{Sha03,
  author       = {Ronen Shaltiel},
  title        = {Towards proving strong direct product theorems},
  journal      = {Comput. Complex.},
  volume       = {12},
  number       = {1-2},
  pages        = {1--22},
  year         = {2003},
  url          = {https://doi.org/10.1007/s00037-003-0175-x},
  doi          = {10.1007/S00037-003-0175-X},
  timestamp    = {Sun, 15 Mar 2020 19:46:50 +0100},
  biburl       = {https://dblp.org/rec/journals/cc/Shaltiel03.bib},
  bibsource    = {dblp computer science bibliography, https://dblp.org}
}

@inproceedings{She11,
  author       = {Alexander A. Sherstov},
  editor       = {Lance Fortnow and
                  Salil P. Vadhan},
  title        = {Strong direct product theorems for quantum communication and query
                  complexity},
  booktitle    = {Proceedings of the 43rd {ACM} Symposium on Theory of Computing, {STOC}
                  2011, San Jose, CA, USA, 6-8 June 2011},
  pages        = {41--50},
  publisher    = {{ACM}},
  year         = {2011},
  url          = {https://doi.org/10.1145/1993636.1993643},
  doi          = {10.1145/1993636.1993643},
  timestamp    = {Tue, 06 Nov 2018 11:07:06 +0100},
  biburl       = {https://dblp.org/rec/conf/stoc/Sherstov11.bib},
  bibsource    = {dblp computer science bibliography, https://dblp.org}
}

@article{Dru12,
  author       = {Andrew Drucker},
  title        = {Improved direct product theorems for randomized query complexity},
  journal      = {Comput. Complex.},
  volume       = {21},
  number       = {2},
  pages        = {197--244},
  year         = {2012},
  url          = {https://doi.org/10.1007/s00037-012-0043-7},
  doi          = {10.1007/S00037-012-0043-7},
  timestamp    = {Sun, 15 Mar 2020 19:46:56 +0100},
  biburl       = {https://dblp.org/rec/journals/cc/Drucker12.bib},
  bibsource    = {dblp computer science bibliography, https://dblp.org}
}

@article{BKLS20,
  author       = {Joshua Brody and
                  Jae Tak Kim and
                  Peem Lerdputtipongporn and
                  Hariharan Srinivasulu},
  title        = {A Strong {XOR} Lemma for Randomized Query Complexity},
  journal      = {CoRR},
  volume       = {abs/2007.05580},
  year         = {2020},
  url          = {https://arxiv.org/abs/2007.05580},
  eprinttype    = {arXiv},
  eprint       = {2007.05580},
  timestamp    = {Tue, 28 Jul 2020 09:52:56 +0200},
  biburl       = {https://dblp.org/rec/journals/corr/abs-2007-05580.bib},
  bibsource    = {dblp computer science bibliography, https://dblp.org}
}

@InProceedings{BKST24,
  author =	{Blanc, Guy and Koch, Caleb and Strassle, Carmen and Tan, Li-Yang},
  title =	{{A Strong Direct Sum Theorem for Distributional Query Complexity}},
  booktitle =	{39th Computational Complexity Conference (CCC 2024)},
  pages =	{16:1--16:30},
  series =	{Leibniz International Proceedings in Informatics (LIPIcs)},
  ISBN =	{978-3-95977-331-7},
  ISSN =	{1868-8969},
  year =	{2024},
  volume =	{300},
  editor =	{Santhanam, Rahul},
  publisher =	{Schloss Dagstuhl -- Leibniz-Zentrum f{\"u}r Informatik},
  address =	{Dagstuhl, Germany},
  URL =		{https://drops.dagstuhl.de/entities/document/10.4230/LIPIcs.CCC.2024.16},
  URN =		{urn:nbn:de:0030-drops-204123},
  doi =		{10.4230/LIPIcs.CCC.2024.16},
  annote =	{Keywords: Query complexity, direct product theorem, hardcore theorem}
}

@InProceedings{Hoz24,
  author =	{Hoza, William M.},
  title =	{{A Technique for Hardness Amplification Against $\mathsf{AC}^0$}},
  booktitle =	{39th Computational Complexity Conference (CCC 2024)},
  pages =	{1:1--1:20},
  series =	{Leibniz International Proceedings in Informatics (LIPIcs)},
  ISBN =	{978-3-95977-331-7},
  ISSN =	{1868-8969},
  year =	{2024},
  volume =	{300},
  editor =	{Santhanam, Rahul},
  publisher =	{Schloss Dagstuhl -- Leibniz-Zentrum f{\"u}r Informatik},
  address =	{Dagstuhl, Germany},
  URL =		{https://drops.dagstuhl.de/entities/document/10.4230/LIPIcs.CCC.2024.1},
  URN =		{urn:nbn:de:0030-drops-203977},
  doi =		{10.4230/LIPIcs.CCC.2024.1},
  annote =	{Keywords: Bounded-depth circuits, average-case lower bounds, hardness amplification, XOR lemmas}
}

@inproceedings{CSWY01,
  author       = {Amit Chakrabarti and
                  Yaoyun Shi and
                  Anthony Wirth and
                  Andrew Chi{-}Chih Yao},
  title        = {Informational Complexity and the Direct Sum Problem for Simultaneous
                  Message Complexity},
  booktitle    = {42nd Annual Symposium on Foundations of Computer Science, {FOCS} 2001,
                  14-17 October 2001, Las Vegas, Nevada, {USA}},
  pages        = {270--278},
  publisher    = {{IEEE} Computer Society},
  year         = {2001},
  url          = {https://doi.org/10.1109/SFCS.2001.959901},
  doi          = {10.1109/SFCS.2001.959901},
  timestamp    = {Thu, 23 Mar 2023 23:57:54 +0100},
  biburl       = {https://dblp.org/rec/conf/focs/ChakrabartiSWY01.bib},
  bibsource    = {dblp computer science bibliography, https://dblp.org}
}

@inproceedings{JRS03,
  author       = {Rahul Jain and
                  Jaikumar Radhakrishnan and
                  Pranab Sen},
  editor       = {Jos C. M. Baeten and
                  Jan Karel Lenstra and
                  Joachim Parrow and
                  Gerhard J. Woeginger},
  title        = {A Direct Sum Theorem in Communication Complexity via Message Compression},
  booktitle    = {Automata, Languages and Programming, 30th International Colloquium,
                  {ICALP} 2003, Eindhoven, The Netherlands, June 30 - July 4, 2003.
                  Proceedings},
  series       = {Lecture Notes in Computer Science},
  volume       = {2719},
  pages        = {300--315},
  publisher    = {Springer},
  year         = {2003},
  url          = {https://doi.org/10.1007/3-540-45061-0\_26},
  doi          = {10.1007/3-540-45061-0\_26},
  timestamp    = {Tue, 14 May 2019 10:00:44 +0200},
  biburl       = {https://dblp.org/rec/conf/icalp/JainRS03.bib},
  bibsource    = {dblp computer science bibliography, https://dblp.org}
}

@inproceedings{Kla10,
  author       = {Hartmut Klauck},
  editor       = {Leonard J. Schulman},
  title        = {A strong direct product theorem for disjointness},
  booktitle    = {Proceedings of the 42nd {ACM} Symposium on Theory of Computing, {STOC}
                  2010, Cambridge, Massachusetts, USA, 5-8 June 2010},
  pages        = {77--86},
  publisher    = {{ACM}},
  year         = {2010},
  url          = {https://doi.org/10.1145/1806689.1806702},
  doi          = {10.1145/1806689.1806702},
  timestamp    = {Wed, 14 Nov 2018 10:51:37 +0100},
  biburl       = {https://dblp.org/rec/conf/stoc/Klauck10.bib},
  bibsource    = {dblp computer science bibliography, https://dblp.org}
}

@inproceedings{HJMR07,
  author       = {Prahladh Harsha and
                  Rahul Jain and
                  David A. McAllester and
                  Jaikumar Radhakrishnan},
  title        = {The Communication Complexity of Correlation},
  booktitle    = {22nd Annual {IEEE} Conference on Computational Complexity {(CCC} 2007),
                  13-16 June 2007, San Diego, California, {USA}},
  pages        = {10--23},
  publisher    = {{IEEE} Computer Society},
  year         = {2007},
  url          = {https://doi.org/10.1109/CCC.2007.32},
  doi          = {10.1109/CCC.2007.32},
  timestamp    = {Fri, 24 Mar 2023 00:04:21 +0100},
  biburl       = {https://dblp.org/rec/conf/coco/HarshaJMR07.bib},
  bibsource    = {dblp computer science bibliography, https://dblp.org}
}

@inproceedings{BRWY13a,
  author       = {Mark Braverman and
                  Anup Rao and
                  Omri Weinstein and
                  Amir Yehudayoff},
  title        = {Direct Products in Communication Complexity},
  booktitle    = {54th Annual {IEEE} Symposium on Foundations of Computer Science, {FOCS}
                  2013, 26-29 October, 2013, Berkeley, CA, {USA}},
  pages        = {746--755},
  publisher    = {{IEEE} Computer Society},
  year         = {2013},
  url          = {https://doi.org/10.1109/FOCS.2013.85},
  doi          = {10.1109/FOCS.2013.85},
  timestamp    = {Thu, 23 Mar 2023 23:57:54 +0100},
  biburl       = {https://dblp.org/rec/conf/focs/BravermanRWY13.bib},
  bibsource    = {dblp computer science bibliography, https://dblp.org}
}

@inproceedings{BRWY13b,
  author       = {Mark Braverman and
                  Anup Rao and
                  Omri Weinstein and
                  Amir Yehudayoff},
  editor       = {Fedor V. Fomin and
                  Rusins Freivalds and
                  Marta Z. Kwiatkowska and
                  David Peleg},
  title        = {Direct Product via Round-Preserving Compression},
  booktitle    = {Automata, Languages, and Programming - 40th International Colloquium,
                  {ICALP} 2013, Riga, Latvia, July 8-12, 2013, Proceedings, Part {I}},
  series       = {Lecture Notes in Computer Science},
  volume       = {7965},
  pages        = {232--243},
  publisher    = {Springer},
  year         = {2013},
  url          = {https://doi.org/10.1007/978-3-642-39206-1\_20},
  doi          = {10.1007/978-3-642-39206-1\_20},
  timestamp    = {Sat, 05 Sep 2020 15:20:42 +0200},
  biburl       = {https://dblp.org/rec/conf/icalp/BravermanRWY13.bib},
  bibsource    = {dblp computer science bibliography, https://dblp.org}
}

@article{Jai15,
  author       = {Rahul Jain},
  title        = {New Strong Direct Product Results in Communication Complexity},
  journal      = {J. {ACM}},
  volume       = {62},
  number       = {3},
  pages        = {20:1--20:27},
  year         = {2015},
  url          = {https://doi.org/10.1145/2699432},
  doi          = {10.1145/2699432},
  timestamp    = {Tue, 06 Nov 2018 12:51:44 +0100},
  biburl       = {https://dblp.org/rec/journals/jacm/Jain15.bib},
  bibsource    = {dblp computer science bibliography, https://dblp.org}
}

@article{Raz98,
  author       = {Ran Raz},
  title        = {A Parallel Repetition Theorem},
  journal      = {{SIAM} J. Comput.},
  volume       = {27},
  number       = {3},
  pages        = {763--803},
  year         = {1998}
}

@inproceedings{Raz08,
  author       = {Ran Raz},
  title        = {A Counterexample to Strong Parallel Repetition},
  booktitle    = {49th Annual {IEEE} Symposium on Foundations of Computer Science, {FOCS}
                  2008, October 25-28, 2008, Philadelphia, PA, {USA}},
  pages        = {369--373},
  publisher    = {{IEEE} Computer Society},
  year         = {2008}
}

@inproceedings{Holenstein07,
  author       = {Thomas Holenstein},
  editor       = {David S. Johnson and
                  Uriel Feige},
  title        = {Parallel repetition: simplifications and the no-signaling case},
  booktitle    = {Proceedings of the 39th Annual {ACM} Symposium on Theory of Computing,
                  San Diego, California, USA, June 11-13, 2007},
  pages        = {411--419},
  publisher    = {{ACM}},
  year         = {2007}
}

@inproceedings{Rao08,
  author       = {Anup Rao},
  editor       = {Cynthia Dwork},
  title        = {Parallel repetition in projection games and a concentration bound},
  booktitle    = {Proceedings of the 40th Annual {ACM} Symposium on Theory of Computing,
                  Victoria, British Columbia, Canada, May 17-20, 2008},
  pages        = {1--10},
  publisher    = {{ACM}},
  year         = {2008}
}
\appendix

\section{Missing Proofs}
\label{appendix:missing_proofs}

We will restate and provide the missing proofs from the earlier sections.

\couplinglemma*

\begin{proof} The random process operates as follows: we interpret the randomness as a sequence of pairs $(a_1, \rho_1), (a_2, \rho_2), \ldots$, where each pair $(a_i, \rho_i)$ is drawn uniformly from $\mathcal{A} \times [0,1]$. We then set $a$ to be $a_i$ for the smallest $i$ such that $\rho_i < \mu(a_i)$, and we set $a'$ to be $a_{i'}$ for the smallest $i'$ such that $\rho_{i'} < \mu'(a_{i'})$. It is straightforward to see that $a$ follows the distribution $\mu$ and $a'$ follows the distribution $\mu'$. Thus, the next step is to bound the probability that $a \neq a'$, which is also upper-bounded by the probability that $i \neq i'$.

For any $a \in supp(\mathcal{A})$ denote an interval  $U_a := \left[0, \max\{\mu(a), \mu'(a)\}\right]$ and $D_a := \left[\min\{\mu(a), \mu'(a)\}, \max\{\mu(a), \mu'(a)\}\right]$. Also denote 
\begin{align*}
    \mu \cup \mu' := \{ (a, U_a) \ ; \ a \in supp(\mathcal{A}) \} \hspace{1cm} \text{ and } \hspace{1cm}
    \mu \bigtriangleup \mu' := \{ (a, D_a) \ ; \ a \in supp(\mathcal{A})\}
\end{align*}
and their volumes
\begin{align*}
    vol(\mu \cup \mu')  := \sum_{a \in supp(\mathcal{A})} |U_a| \hspace{1cm} \text{ and } \hspace{1cm}
    vol(\mu \bigtriangleup \mu')  := \sum_{a \in supp(\mathcal{A})} |D_a| 
\end{align*}

We say that $(a, \rho) \in \mu \cup \mu'$ if and only if $\rho \in U_a$. Similarly, we say that $(a, \rho) \in \mu \bigtriangleup \mu'$ if and only if $\rho \in D_a$. Let $j$ be the smallest index such that $(a_j, \rho_j) \in \mu \cup \mu'$. Then, $a_1 \ne a_2$ occurs if and only if $(a_j, \rho_j) \in \mu \bigtriangleup \mu'$. Thus, we have:

\begin{align*}
\Pr(a_1 \ne a_2) & = \Pr\left[(a_j, \rho_j) \in \mu \bigtriangleup \mu' \mid (a_j, \rho_j) \in \mu \cup  \mu' \right] \\
& = \frac{vol(\mu \bigtriangleup \mu')}{vol(\mu \cup \mu')} \\
& = \frac{\sum_{a \in supp(\mathcal{A})} |D_a|}{\sum_{a \in supp(\mathcal{A})} |U_a|} \\
& = \frac{\sum_{a \in supp(\mathcal{A})} |\mu(a) - \mu'(a)|}{\sum_{a \in supp(\mathcal{A})} \max(\mu(a), \mu'(a))} \\
& \leq 2 \cdot \frac{\|\mu - \mu'\|}{\sum_{a \in supp(\mathcal{A})} \mu(a)+\mu'(a)} \tag{$p+q \leq 2\cdot \max(p,q)$} \\
& = \|\mu - \mu'\|.
\end{align*}
This concludes the proof.
\end{proof}

\ptwisetheta*

\begin{proof} Throughout this proof, for convenience, we will write $\M_{(\pi_0)}$ instead of $\M^{(\pi_0)}$ and $\M_{(\pi_1)}$ instead of $\M^{(\pi_1)}$. Recall that $\pi_0 = (X_0,Y_0, \M_{(\pi_0)})$ where $\M_{(\pi_0)} = (M^0, Y_1 \circ M^1, M^2,..., M^r)$. We then can write:
\begin{align*}
    &  \theta_{\mu_{0}}( \pi_{0} \ @ X_0,Y_0,\M^{(\pi_0)})  \\
    & = \log \left(\frac{\pi_0(X_0, Y_0, \M_{(\pi_0)})}{ \pi_0(\M^0_{(\pi_0)} ) \cdot \mu_0(X_0,Y_0) \cdot \prod_{\text{odd }i \geq 1} \pi_0( \M^{i}_{(\pi_0)} \mid X_0, \M^{<i}_{(\pi_0)} ) \cdot \prod_{\text{even }i \geq 2} \pi_0( \M^{i}_{(\pi_0)} \mid Y_0, \M^{<i}_{(\pi_0)} )}\right) \\
     & = \log \left(\frac{\pi_0(X_0, Y_0, \M^{+}_{(\pi_0)} \mid \M^{0}_{(\pi_0)} )}{ \mu_0(X_0,Y_0) \cdot \pi_0( \M^{1}_{(\pi_0)} \mid X_0, \M^{0}_{(\pi_0)} ) \cdot \prod_{\text{odd }i \geq 3} \pi( M^i \mid X_0, Y_1, M^{<i} ) \cdot \prod_{\text{even }i \geq 2} \pi( M^i \mid Y M^{<i} ) }\right) \\
    & = \log \left(\frac{\pi(X_0, Y, M^+ \mid M^0 )}{ \mu_0(X_0,Y_0) \cdot \pi(Y_1M^1 \mid X_0M^0 ) \cdot \prod_{\text{odd }i \geq 3} \pi( M^i \mid X_0 Y_1 M^{<i} ) \cdot \prod_{\text{even }i \geq 2} \pi( M^i \mid Y M^{<i} ) }\right) \\
    & = \log \left(\frac{\pi(X_0, Y, M^+ \mid M^0 )}{ \mu_0(X_0,Y_0) \cdot \pi(Y_1\mid X_0M^0 ) \cdot \prod_{\text{odd }i \geq 1} \pi( M^i \mid X_0 Y_1, M^{<i} ) \cdot \prod_{\text{even }i \geq 2} \pi( M^i \mid Y M^{<i} ) }\right)
\end{align*}
Recall that $\pi_1 = (X_1,Y_1,\M_{(\pi_1)})$ where $\M_{(\pi_1)} = (M^0 \circ X_0, M^1 , M^2,..., M^r)$. We then can write:
\begin{align*}
    & \theta_{\mu_{1}}( \pi_{1}\ @ X_1,Y_1,\M^{(\pi_1)}) \\
    & = \log \left(\frac{\pi_1(X_1, Y_1, \M_{(\pi_1)} )}{ \pi_1(\M^{0}_{(\pi_1)} ) \cdot \mu_1(X_1,Y_1) \cdot \prod_{\text{odd }i \geq 1} \pi_1( \M^{i}_{(\pi_1)} \mid X_1, \M^{<i}_{(\pi_1)} ) \cdot \prod_{\text{even }i \geq 2} \pi_1( \M^{i}_{(\pi_1)} \mid Y_0, \M^{<i}_{(\pi_1)})}\right) \\
    & = \log \left(\frac{\pi_1(X_1, Y_1, \M^{+}_{(\pi_1)} \mid \M^{0}_{(\pi_1)} )}{\mu_1(X_1,Y_1) \cdot \prod_{\text{odd }i \geq 1} \pi( M_i \mid X M_{<i} ) \cdot \prod_{\text{even }i \geq 2} \pi( M_i \mid X_0 Y_1 M_{<i} ) }\right) \\
    & = \log \left(\frac{\pi(X_1, Y_1, M^+ \mid X_0M^0 )}{\mu_1(X_1,Y_1) \cdot \prod_{\text{odd }i \geq 1} \pi( M_i \mid X M_{<i} ) \cdot \prod_{\text{even }i \geq 2} \pi( M_i \mid X_0 Y_1 M_{<i} ) }\right) 
\end{align*}
Combining them, we have

\begin{align*}
    & \theta_{\mu_{0}}( \pi_{0} \ @ X_0,Y_0,\M^{(\pi_0)})  + \theta_{\mu_{1}}( \pi_{1}\ @ X_1,Y_1,\M^{(\pi_1)}) \\
 & = \log\left(\frac{\pi(X_0, Y, M^+ \mid M^0 ) \cdot \pi(X_1, Y_1, M^+ \mid X_0M^0 )}{\mu(X,Y) \cdot \pi(Y_1\mid X_0M^0 ) \cdot \pi(M^+ \mid X_0 Y_1 M^0 ) \cdot \prod_{\text{odd }i \geq 1} \pi( M^i \mid X M^{<i} ) \cdot \prod_{\text{even }i \geq 2} \pi( M^i \mid Y M^{<i} ) } \right) \\
 & = \log\left(\frac{\pi(X_0 Y M^+ \mid M^0 ) \cdot \pi(X_1 \mid X_0Y_1 \M )}{\mu(X,Y) \cdot \prod_{\text{odd }i \geq 1 \geq 1} \pi( M^i \mid X M^{<i} ) \cdot \prod_{\text{even }i \geq 2} \pi( M^i \mid Y M^{<i} ) } \right)
\end{align*}
while by definition, we have:
\begin{align*}
    & \theta_{\mu}(\pi \ @ X, Y, M) =  \log \left(\frac{\pi(X, Y, M^+ \mid M^0)}{ \mu(X,Y) \cdot \prod_{\text{odd }i \geq 1} \pi( M^i \mid X M^{<i} ) \cdot \prod_{\text{even }i \geq 2} \pi( M^i \mid Y  M^{<i} )} \right).
\end{align*}
Thus, it suffices to show that 
$$\pi(X_0 Y M^+ \mid M^0 ) \cdot \pi(X_1 \mid X_0Y_1 \M ) = \pi(X, Y, M^+ \mid M^0 )$$
which is equivalent to
\begin{align*} X_1 \perp Y_0 \mid X_0Y_1\M
\end{align*}
which is true due to the partial rectangle property of $\pi$ via Proposition \ref{prop:partial_rec_independent}.
\end{proof}

\ptwisedelta*

\begin{proof} Recall that $\pi_0 = (X_0,Y_0, \M^{(\pi_0)})$ where $\M^{(\pi_0)} = (M^0, Y_1 \circ M^1, M^2,..., M^r)$ and $\pi_1 = (X_1,Y_1,\M^{(\pi_1)})$ where $\M^{(\pi_1)} = (M^0 \circ X_0, M^1 , M^2,..., M^r)$. We then can write:
\begin{align*}
    &   \gamma_{\mu_0, A}(\pi_0 @ X_0,Y_0, \M^{(\pi_0)}) + \gamma_{\mu_1, A}(\pi_1 @ X_1,Y_1, \M^{(\pi_1)})  \\
    & = \log \frac{\pi_0(X_0 \mid Y_0 \M^{(\pi_0)})}{\mu_0(X_0 \mid Y_0)} + \log \frac{\pi_1(X_1 \mid Y_1 \M^{(\pi_0)})}{\mu_1(X_1 \mid Y_1)} \\
    & = \log \frac{\pi(X_0 \mid Y \M)}{\mu_0(X_0 \mid Y_0)} + \log \frac{\pi(X_1 \mid X_0Y_1 \M)}{\mu_1(X_1 \mid Y_1)} \\
    & = \log \frac{\pi(X_0 \mid Y \M)}{\mu_0(X_0 \mid Y_0)} + \log \frac{\pi(X_1 \mid X_0Y \M)}{\mu_1(X_1 \mid Y_1)} \tag{$X_1 \perp Y_0 \mid X_0Y_1 \M$ via Proposition \ref{prop:partial_rec_independent}} \\
    & = \log \frac{\pi(X \mid Y\M)}{\mu(X,Y)} \\
    & = \gamma_{\mu, A}(\pi @ X,Y, \M).
\end{align*}
The same proof applies for Bob's cost.
\end{proof}

\pattern*

\begin{proof} For any $k \in\{0,...,m\}$, denote $\lambda_k = \sum_{|S| = k}\chi_S q_S$. Trivially, $\lambda_0 = q_\emptyset$. From (\ref{eq:pattern}), multiplying both sides by $\chi_{S0} = \chi_{S1} = \chi_S \cdot p_S$, we have:
$$\chi_{S0} q_{S0} + \chi_{S1} q_{S1} \leq  \chi_S q_S + c \cdot \chi_S \cdot \cH(p_S) \leq \chi_S q_S + c \cdot \cH(p_S).$$ Summing it over $|S|=k$ yields:
$$\lambda_{k+1} \leq \lambda_k + c \cdot \sum_{|S| = k}\cH(p_S),$$
thus inductively, we will have $$\sum_{|S| = m}\chi_S q_S = \lambda_m \leq q_\emptyset + c \cdot \sum_{k = 0}^{m-1} \sum_{|S| = k} \cH(p_S).$$
Therefore, it remains to show that $\sum_{k = 0}^{m-1} \sum_{|S| = k} \cH(p_S) = O(n^\lambda \log n)$.

Fix any $k$. We shall upper bound $\sum_{|S| = k} \cH(p_S)$. Recall that $\alpha = 1-2\varepsilon$. By Lemma \ref{lem:1_level_decomp}, we have
\begin{align*}
    p_s \ \geq \ \frac{1-\varepsilon_S - \alpha}{\E(Z_S \mid W_S) - \alpha} \geq \frac{1-\varepsilon_S - \alpha}{1 - \alpha} = 1 - \frac{\varepsilon_S}{1-\alpha} = 1-\frac{\varepsilon_S}{2\varepsilon}.
\end{align*}
Using Corollary \ref{col:bound_sum_eps}, we have:
    
$$2^{-k} \cdot \sum_{|S| = k} {\left(1-p_S\right)} \ \leq \ 2^{-k} \sum_{|S| = k} \frac{\varepsilon_S}{2\varepsilon} \ \leq \ 2^{-(1-\tau) k-1}$$
which is also at most $\frac{1}{2}$.
Furthermore, we can derive:
\begin{align*}
    \sum_{|S| = k} \cH(p_S) & = \sum_{|S| = k} \cH(1-p_S) \\
    & \leq 2^k \cdot \cH\left(2^{-k} \cdot \sum_{|S| =k}\left(1-p_S\right)\right) \tag{$\cH$ is concave} \\
    & \leq 2^k \cdot \cH\left(2^{-(1-\tau)k-1}\right) \tag{$\cH$ is increasing in $(0,1/2]$} \\
    & \leq 2^k \cdot 2 \cdot \cH\left(2^{-(1-\tau) k-1}\right) \tag{$\cH(x) \leq 2 \cdot p \log \frac{1}{p}$ for $p \in (0,1/2]$} \\
    & = 2^{\tau k} \cdot \left((1-\tau)k + 1\right))\\
    & \leq 2^{\tau k} \cdot (k+1).
\end{align*}

Summing this over $0 \leq k \leq m-1  = \log_2{n} - 1$ yields:
$$\sum_{k = 0}^{m-1}\sum_{|S| =k} \cH(p_S) = O( n^{\tau} \log{n})$$
as wished.
\end{proof}

\patternconstant*

\begin{proof} The proof proceeds similar to that of Lemma \ref{lem:pattern}. For any $k \in\{0,...,m\}$, denote $\lambda_k = \sum_{|S| = k}\chi_S q_S$. Trivially, $\lambda_0 = q_\emptyset$. From (\ref{eq:pattern_constant}), multiplying both sides by $\chi_{S0} = \chi_{S1} = \chi_S \cdot p_S$, we have:
\begin{align*} \chi_{S0} q_{S0} + \chi_{S1} q_{S1} & \leq \chi_S q_S + c \cdot \chi_S \cdot \cH(p_S) + c' \cdot \chi_S p_S \\
& \leq \chi_S q_S + c \cdot \cH(p_S) + c'.
\end{align*} Summing it over $|S|=k$ yields:
$$\lambda_{k+1} \leq \lambda_k + c \cdot \sum_{|S| = k}\cH(p_S) + c' \cdot 2^k,$$
thus inductively, we will have $$\sum_{|S| = m}\chi_S q_S = \lambda_m \leq q_\emptyset + c \cdot \sum_{k = 0}^{m-1} \sum_{|S| = k} \cH(p_S) + c' \cdot \sum_{k = 0}^{m-1}2^k.$$
Recall from the proof of Lemma \ref{lem:pattern} that $\sum_{k = 0}^{m-1}\sum_{|S| =k} \cH(p_S) = O( n^{\tau} \log{n})$. Furthermore, we know that $\sum_{k = 0}^{m-1}2^k < 2^m = n$. Therefore, we have $$\sum_{|S| = m}\chi_S q_S \leq q_{\emptyset} + O(n)$$
as wished.
\end{proof}

\boosting*

\begin{proof} Recall via Definition \ref{def:ic} that 
$$\IC(g, \varepsilon) = \inf_{\text{$\pi$ that errs w.p. at most $\varepsilon$ on any inputs}} \max_{\mu} \IC_\mu(\pi).$$

Let $\pi$ be the minimizer of $\IC(g,\varepsilon)$ for which $\IC(g, \varepsilon) = \max_{\mu} \IC_{\mu}(\pi)$. Consider the following protocol $\pi'$ for solving $g$ over any input pair $(X,Y)$:
\begin{enumerate}
    \item[(1)] Let $T = O\left(\frac{\log{(1/\varepsilon')}}{\log (1/\varepsilon)}\right)$ be such that $(2e\varepsilon)^{T/2} < \varepsilon' / 100$. 
    \item[(2)] Alice and Bob runs $T$ independent copies of $\pi$, and output the majority answer among those copies. Denote the $T$ independent transcripts by $(M_1,...,M_T)$
\end{enumerate}
The probability of $\pi'$ being incorrect is:
\begin{align*}
    \sum_{i = T/2}^{T} \binom{T}{i} \cdot \varepsilon^{i}(1-\varepsilon)^{T-i} & \leq \sum_{i = T/2}^{T}  \left(\frac{eT}{i}\right)^{i} \cdot \varepsilon^{i} \leq \sum_{i = T/2}^{T} (2e\varepsilon)^i \leq \frac{(2e\varepsilon)^{T/2}}{1-2e\varepsilon} \leq 100 \cdot (2e\varepsilon)^{T/2} < \varepsilon'.
\end{align*}
Thus, by definition, we have $\IC(g, \varepsilon') \leq \max_{\mu'} \IC_{\mu'}(\pi').$ Let $\mu'$ be a maximizer so that 
\begin{equation}
    \IC(g, \varepsilon') \leq \IC_{\mu'}(\pi').
\label{eq:bound_ic_epsprime}
\end{equation}Consider
\begin{align*}
    \IC_{\mu'}(\pi') & = I(M_1...M_T: X \mid Y) \ + \ I(M_1...M_T: Y \mid X)
\end{align*}
and 
\begin{align*}
    I(M_1...M_T: X \mid Y)
    & = H(M_1...M_T \mid Y) - H(M_1...M_T \mid XY) \\
    & \leq \sum_{i \in [T]} {H(M_i \mid Y)} - \sum_{i \in [T]}  H(M_i \mid XY) \\
    & =\sum_{i \in [T]} I(M_i: X \mid Y)
\end{align*}
where to obtain the inequality, the first term follows subadditivity of entropy, and the second term follows the fact that the $T$ transcripts $\{M_1,...,M_T\}$ are mutually independent conditioned on $(X,Y)$. Combining with its symmetric term, we have
\begin{align*}
\IC_{\mu'}(\pi') \leq \sum_{i \in [T]} I(M_i: X \mid Y) + I(M_i: Y\mid X) = T \cdot \IC_{\mu'}(\pi) & \leq T \cdot \max_{\mu} \IC_\mu(\pi) \\
& = T \cdot \IC(g, \varepsilon).
\end{align*}
Together with (\ref{eq:bound_ic_epsprime}) and the fact that $T = O\left(\frac{\log{(1/\varepsilon')}}{\log (1/\varepsilon)}\right)$ completes the proof.
\end{proof}

\xoricub*

\begin{proof} It suffices to show that $\IC(f^{\oplus n}, 1/10) \leq n \cdot \IC(f, (10n)^{-1})$ since $\IC(f, (10n)^{-1})\leq c_2 \cdot \IC(f, n^{-1})$ for some constant $c_2 > 0$ follows from Lemma \ref{lem:boosting}. Recall via Definition \ref{def:ic} that for any function $g$ and $\varepsilon \in (0,1)$,
$$\IC(g, \varepsilon) = \inf_{\text{$\pi$ that errs w.p. at most $\varepsilon$ on any inputs}} \max_{\mu} \IC_\mu(\pi).$$
Let $\pi$ be a minimizer protocol so that $\IC(f, (10n)^{-1}) = \max_{\mu} \IC_{\mu}(\pi)$. Let $\pi'$ be the following protocol for solving $f^{\oplus n}$ over inputs $(X_1,...,X_n, Y_1,...,Y_n)$.
\begin{enumerate}
    \item [(1)] For each $i \in [n]$, the players run $\pi$ (using fresh randomness) over an input pair $(X_i, Y_i)$ to compute their belief of $f(X_i, Y_i)$. Denote this bit by $b_i$ and denote the transcript by $M_i$.
    \item [(2)] Players output $b_1 \oplus ... \oplus b_n$.
\end{enumerate}
We first argue that $\pi'$ errs with probability at most $1/10$ on any input $(X_1,...,X_n, Y_1,...,Y_n)$. By the error guarantees of $\pi$, each $b_i$ incorrectly computes $f(X_i,Y_i)$ with probability at most $(10n)^{-1}$. Via Union Bounds, all $b_i$ is correct simultaneuosly with probability at least $9/10$, resulting in their xor being correct. Therefore, the error of $\pi'$ is at most $1/10$.

Let $\mu$ be the distribution over $(X_1,...,X_n, Y_1,...,Y_n)$ that maximizes $\IC_{\mu}(\pi')$. By definition, we have
\begin{equation*}
    \IC(f^{\oplus n}, 1/10) \leq \IC_{\mu}(\pi').
\label{eq:icub_fxor}
\end{equation*}

Now we expand

\begin{align*}
    \IC_{\mu}(\pi') & = I(M_1...M_n : X_1...X_n \mid Y_1...Y_n) \ + \ I(M_1...M_n : Y_1...Y_n \mid X_1...X_n).
\end{align*}

Furthermore, consider
\begin{align*}
    & I(M_1...M_n : X_1...X_n \mid Y_1...Y_n) \\
    & = H(M_1...M_n  \mid Y_1...Y_n) - H(M_1...M_n \mid X_1...X_n Y_1...Y_n) 
\end{align*}
For the first term, we bound:
\begin{align*}
    H(M_1...M_n  \mid Y_1...Y_n) & = \sum_{i \in [n]} H(M_i \mid Y_1...Y_n M_{<i}) \tag{Chain Rule} \\
    & \leq \sum_{i \in [n]} H(M_i \mid Y_i).    
\end{align*}
For the second term, we recall that each $M_i$ only depends on $(X_i, Y_i)$; thus, $(X_1,...,X_n,Y_1,...,Y_n)$, the $n$ transcripts $(M_1,...,M_n)$ are mutually independence. Therefore, we can bound:
\begin{align*}
    H(M_1...M_n  \mid X_1...X_nY_1...Y_n) & = \sum_{i \in [n]} H(M_i \mid X_1...X_nY_1...Y_n) \\
    & = \sum_{i \in [n]} H(M_i \mid X_i Y_i).
\end{align*}
Hence, we shall have 
\begin{align*}
     I(M_1...M_n : X_1...X_n \mid Y_1...Y_n) & \leq \sum_{i \in [n]} H(M_i \mid Y_i) -  H(M_i \mid X_i Y_i) \\
     & = \sum_{i \in [n]} I(M_i:X_i \mid Y_i).
\end{align*}

Thus, we now have
$$\IC_{\mu}(\pi') \leq \sum_{i \in [n]} I(M_i:X_i \mid Y_i) + I(M_i:Y_i \mid X_i)$$

Here for each $i\in [n]$, $M_i$ is the transcript of running $\pi$ over $(X_i, Y_i)$. Therefore, we have 
\begin{align*}
    I(M_i:X_i \mid Y_i) + I(M_i:Y_i \mid X_i)  = \IC_{\mu(X_i, Y_i)}(\pi) & \leq \max_{\mu'} \IC_{\mu'}(\pi) \\
    & = \IC(f, (10n)^{-1}). 
\end{align*}

Combining everything, we shall have:
\begin{align*}
    \IC(f^{\oplus n}, 1/10) & \leq \IC_{\mu}(\pi') \\
    & \leq \sum_{i \in [n]} I(M_i:X_i \mid Y_i) + I(M_i:Y_i \mid X_i) \\
    & \leq \sum_{i \in [n]} \IC(f, (10n)^{-1}) \\
    & = n \cdot \IC(f, (10n)^{-1}) 
\end{align*}
as wished.

\end{proof}

\end{document}